\newcommand{\gmat}[1]{\ensuremath{\boldsymbol{#1}}}       
\newcommand{\est}[1]{{E}\!\left\{#1\right\}}
\newcommand{\abs}[1]{\lvert#1\rvert}
\newcommand{\norm}[1]{\lVert#1\rVert}
\newtheorem{Thm}{Theorem}
\newtheorem{Lem}{Lemma}
\newtheorem{Def}{Definition}
\newtheorem{Prob}{Problem}
\newtheorem{T-Prob}{Transformed Problem}
\newtheorem{proposition}{Proposition}
\newtheorem{Remark}{Remark}
\DeclareMathOperator{\Tr}{Tr}
\DeclareMathOperator{\Rank}{Rank}
\DeclareMathOperator{\maxo}{maximize}
\DeclareMathOperator{\mino}{minimize}
\DeclareMathOperator{\nullspace}{\mathrm{Null}}
\numberwithin{Def}{chapter}
\numberwithin{equation}{section}
\numberwithin{Prob}{chapter}
\numberwithin{proposition}{chapter}
\begin{document}
\begin{spacing}{1}
    \KOMAoptions{cleardoublepage=empty}
    \pagestyle{empty} \pagenumbering{roman} \setcounter{page}{1}
\newlength{\logoheight}\setlength{\logoheight}{20mm}
\newlength{\logomargin}\setlength{\logomargin}{35mm}

\addcontentsline{toc}{chapter}{\titlename}
\begin{titlepage}
\begin{tikzpicture}[remember picture, overlay]
  \begin{scope}[every node/.style={text badly centered,text width=1.1\textwidth}]
    \path (current page.north)
      +(0mm,-40mm)  node[font=\Large] {\Thesis}
      +(0mm,-60mm)  node[font=\huge\bfseries,minimum height=5cm] {\Title}
      +(0mm,-95mm)  node[font=\large] {\Author}
      +(0mm,-135mm) node[font=\large] {
        \textbf{Lehrstuhl f\"{u}r Digitale \"{U}bertragung}\\
        Prof. Dr.-Ing. Robert Schober\\
        Universit\"{a}t Erlangen-N\"{u}rnberg\\
      }
      +(0mm,-175mm) node {
        \begin{tabular}{ll}
          Supervisor: & Dr. Derrick Wing Kwan Ng\\
                    &  Prof. Dr.-Ing. Robert Schober\\
        \end{tabular}
      }
      +(0mm,-225mm) node {\Date}
    ; 
  \end{scope}
  \node[shift={(+\logomargin,1.5\logoheight)},anchor=west] at (current page.south west){%
    \includegraphics[height=\logoheight]{/titlepage/IDC_logo}%
  };
  \node[shift={(-\logomargin,1.5\logoheight)},anchor=east] at (current page.south east){%
    \includegraphics[height=\logoheight]{/titlepage/fau_tf_logo}%
  };
\end{tikzpicture}
\end{titlepage}

\vspace*{1cm}
\addchap*{Themenstellung}


\vspace*{1cm}

\addchap*{Declaration}
To the best of my knowledge and belief this work was prepared without aid from any other sources except where indicated. Any reference to material previously published by any other person has been duly acknowledged. This work contains no material which has been submitted or accepted for the award of any other degree in any institution.\\[30mm]
Erlangen, \today\\[20mm]
\line(1,0){8}\\[2mm]
\hspace*{25mm}
\parbox[h]{80mm}{%
  \small
  Shiyang Leng\\
  Erlangen}

    \tableofcontents
\end{spacing}
\addchap{\abstractname}
Simultaneous wireless information and power transfer (SWIPT) provides a promising solution for enabling perpetual wireless networks. As energy efficiency (EE) is an important evaluation of system performance, this thesis studies energy-efficient resource allocation algorithm designs in SWIPT systems. We first investigate the trade-off between the EE for information transmission, the EE for power transfer, and the total transmit power in a basic SWIPT system with separated receivers. A multi-objective optimization problem is formulated under the constraint of maximum transmit power. We propose an algorithm which achieves flexible resource allocation for energy efficiencies maximization and transmit power minimization. The trade-off region of the system design objectives is shown in simulation results. Further, we consider secure communication in a SWIPT system with power splitting receivers. Artificial noise is injected to the communication channel to combat the eavesdropping capability of potential eavesdroppers. A power-efficient resource allocation algorithm is developed when multiple legitimate information receivers and multi-antenna potential eavesdroppers co-exist in the system. Simulation results demonstrate a significant performance gain by the proposed optimal algorithm compared to suboptimal baseline schemes.

\begin{spacing}{1}
    \addchap{\glossaryname}
\newcommand{\glossaryfirstcolumnlength}{\hspace{6.1em}}
\addsec{\abbreviationsname}
\begin{acronym}[\glossaryfirstcolumnlength]
\acro{AWGN}{Additive White Gaussian Noise}
\acro{CSIT}{Channel State Information at the Transmitter}
\acro{EE}{Energy Efficiency}
\acro{EH}{Energy Harvesting}
\acro{EH-EE}{Energy Harvesting Efficiency}
\acro{IR-EE}{Information Receiving Energy Efficiency}
\acro{MISO}{Multiple-Input Single-Output}
\acro{MIMO}{Multiple-Input Multiple-Output}
\acro{MOO}{Multi-objective Optimization}
\acro{MOOP}{Multi-objective Optimization Problem}
\acro{QoS}{Quality of Service}
\acro{RF}{Radio Frequency}
\acro{SDP}{Semi-definite Programming}
\acro{SIC}{Successive Interference Cancellation}
\acro{SINR}{Signal-to-Interference-plus-Noise Ratio}
\acro{SWIPT}{Simultaneous Wireless Information and Power Transfer}
\acro{TDD}{Time Division Duplexing}
\acro{WPT}{Wireless Power Transfer}

\end{acronym}
\addsec{\operatorsname}
\begin{symbollist}{\glossaryfirstcolumnlength}
  \sym{$(\cdot)^H$}{Hermitian transpose}
  \sym{$(\cdot)^*$}{Complex conjugate}
  \sym{$\mathbf{0}$}{All-zero matrix}
  \sym{$E\{\cdot\}$}{Statistical expectation}
  \sym{$\Tr(\cdot)$}{Trace of a square matrix}
  \sym{$\abs{\cdot}$}{Absolute value}
  \sym{$\norm{\cdot}$}{Euclidean norm}
  \sym{$\det(\cdot)$}{Matrix determinant}
  \sym{Null$(\cdot)$}{Orthonormal null space of a matrix}
  \sym{$[x]^+$}{max\{0,x\}}
  \sym{$\mathbb{C}^{N\times M}$}{The space of all $N\times M$ matrices with complex entries}
  \sym{${\cal CN}(\mathbf{m,\gmat{\Sigma}})$}{A complex Gaussian random  variable vector with mean vector $\mathbf{m}$ and covariance matrix $\gmat{\Sigma}$}

\end{symbollist}
\addsec{\symbolsname}
\begin{symbollist}{\glossaryfirstcolumnlength}
  \sym{$\cal{B}$}{Bandwidth [\si{\Hz}]}
  \sym{$N_{\mathrm{T}}$}{Number of antenna equipped at the transmitter}
  \sym{$N_{\mathrm{R}}$}{Number of antenna equipped at the receiver}
  \sym{$P_{\mathrm{ant}}$}{Antenna power consumption [\si{\joule/\second}]}
  \sym{$P_{\mathrm{c}}$}{Static circuit power consumption [\si{\joule/\second}]}
  \sym{$P_{\mathrm{tot}}$}{Total power consumption [\si{\joule/\second}]}
  \sym{$\xi$}{Power amplifier efficiency}
  \sym{$\eta$}{Energy conversion efficiency}
\end{symbollist}
\clearpage

\end{spacing}
\chapter{Introduction}
\label{chap:1_Introduction}
\pagenumbering{arabic} \setcounter{page}{1}
Wireless communication networks has been rapidly developing over decades. High speed and ubiquitous service is consistently expected in the evolution. This leads to tremendous energy demand for supporting the system operation. However, mobile devices with limited battery supply creates the bottleneck in providing continuous communication services. In particular, the slow improvement of battery capacity has hindered the fulfillment of high quality of service (QoS) requirements. Consequently, energy harvesting (EH) provides a new paradigm that enables self-sustainability for energy constrained wireless devices. Among different EH technologies, a promising one is wireless power transfer \footnote{In this thesis, normalized unit energy is considered, i.e., Joule-per-second, which means the terms ``energy" and ``power" are interchangeable here.} (WPT) where communication terminals harvest energy from radio frequency (RF) signals. Recently, simultaneous wireless information and power transfer (SWIPT) has drawn much attention as an interesting and challenging scenario, since RF signal is a carrier of information and power concurrently.

On the other hand, communication security is a critical issue as wireless communications has become an indispensable media by which people may exchange secret information. As an alternative to traditional cryptographic techniques, physical layer security derives perfectly secure communication by exploiting the physical properties of wireless channel.

In this chapter, we first give a brief overview of SWIPT and physical layer security. Then, we state the motivation of this thesis.

\section{Simultaneous Wireless Information and Power Transfer}
The integration of EH capability into wireless devices is a promising solution for prolonging the lifetime of communication networking. In practice, natural energy sources such as solar, wind, and geothermal are exploited. However, the challenge is that these renewable sources are usually weather and location dependent which may not be suitable for portable mobile terminals. Fortunately, WPT is a promising solution since it allows energy harvester scavenge energy from relative stable and controllable electromagnetic waves in both indoor and outdoor environment. In particular, RF signal as a carrier of energy is an abundant source for WPT. Nowadays, EH circuits are able to harvest microwatt to milliwatt of power over the range of several meters for a transmit power of $1$ Watt and a carrier frequency less than $1$ GHz  \cite{Powercast}. Thus, RF energy can be a viable energy source for devices with low-power consumption, e.g. wireless sensors \cite{Krikidis2014,Ding2014}. Moreover, RF EH provides the possibility for simultaneous wireless information and power transfer (SWIPT) \cite{CN:WIPT_fundamental,CN:Shannon_meets_tesla}. As a carrier of both information and energy, RF signal has a remarkable superiority that unifies information transmission and power transfer, and embrace energy saving by enabling tremendous energy consumed by wireless signals recyclable. As a result, SWIPT has gained recent attention in academic research.

In a SWIPT system, when information signals conveyed from the transmitter to the intended receivers via wireless channels, energy harvesters (the same or other receivers) can harvest energy from the information signals due to the broadcasting nature of wireless channels. Different from conventional wireless communication systems where data rate is the most fundamental system performance metric, the amount of harvested energy has become an equally important QoS requirement in SWIPT systems. Thus, new resource allocation algorithms are needed to fulfill the emerging need \cite{CN:WIPT_fundamental}--\nocite{CN:Shannon_meets_tesla,JR:WIP_receiver,CN:MIMO_WIPT,JR:multiuser_MISO,JR:beamforming_MISO,
Xu2013,CN:multiuser_OFDM_WIPT,CN:Eurosip_SWIPT,JR:WIPT_fullpaper_OFDMA,CN:strategies_twouserMIMO,Morsi2014,CN:Kwan_globecom2014,
CN:Maryna_2015,CN:tao_2015,JR:WIPT_relaying_timeswitching}\cite{JR:WIPT_CR}.
In \cite{CN:WIPT_fundamental}--\nocite{CN:Shannon_meets_tesla,JR:WIP_receiver}\cite{CN:MIMO_WIPT}, the fundamental trade-off between channel capacity and harvested energy was studied for frequency flat fading channel and frequency selective fading channel. Specifically, an ideal receiver managing synchronous information decoding and energy harvesting from the same received signal was assumed in \cite{CN:WIPT_fundamental} and \cite{CN:Shannon_meets_tesla}. However, the signal used for information decoding, cannot be reused for EH due to the limitation of current practical circuits. Subsequently, the author in \cite{JR:WIP_receiver} and \cite{CN:MIMO_WIPT} proposed three different types of receivers, namely, power splitting, separated, and time switching receivers. In particular, a power splitting receiver, cf. Figure \ref{fig:power_splitter}, splits the received signal into two power streams with a power splitting ratio $\rho$, i.e., $\rho$ portion of the received signal remains for information decoding, and the other $1-\rho$ portion is reserved for EH. The power splitting scheme also generalizes the separated receivers scheme, i.e., independent information decoding is performed when $\rho=1$ and independent EH is done when $\rho=0$. A time switching receiver means switching time slots for information decoding and EH successively.
\begin{figure}
\centering
\includegraphics[scale=0.9]{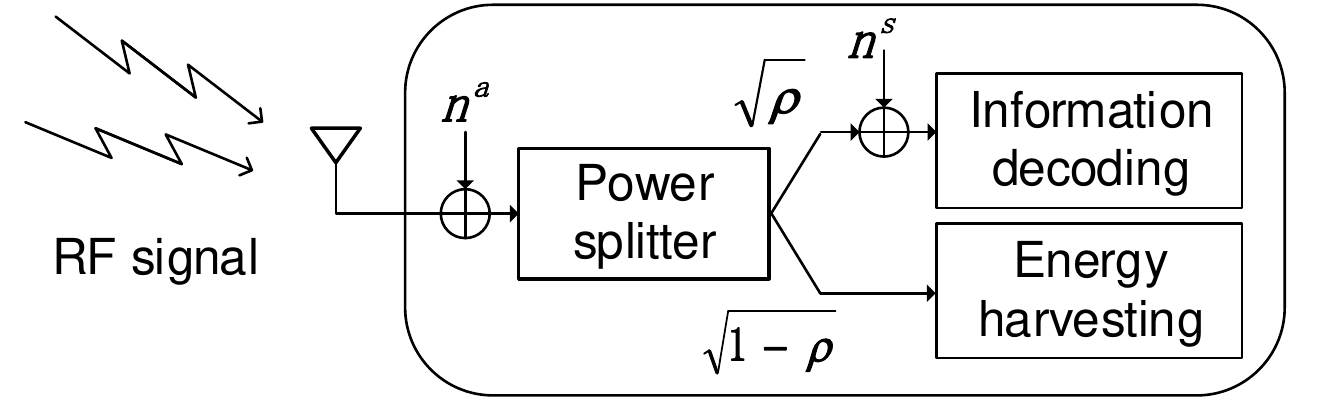}
\caption{Power splitting receiver.}
\label{fig:power_splitter}
\end{figure}

Furthermore, \cite{JR:multiuser_MISO} and \cite{JR:beamforming_MISO} focused on transmit beamforming design in multiple-input single-output (MISO) SWIPT systems for separated and power splitting receivers, respectively. In \cite{Xu2013}, beamformers were optimized for the maximization of sum harvested energy under the minimum required signal-to-interference-plus-noise ratio (SINR) constraints for multiple information receivers. In \cite{CN:multiuser_OFDM_WIPT}, \cite{CN:Eurosip_SWIPT}, and \cite{JR:WIPT_fullpaper_OFDMA}, energy-efficient SWIPT, was studied in multi-carrier systems, where power allocation, user and subcarrier scheduling were considered. It was shown that EE could be improved by implementing SWIPT.  In \cite{CN:strategies_twouserMIMO}, a transmission strategy was proposed for multiuser multiple-input multiple-output (MIMO) SWIPT systems with interference channel. Recently, multiuser scheduling, which exploits multiuser diversity for improving the system performance of multiuser systems was considered for SWIPT in \cite{Morsi2014} and \cite{CN:Maryna_2015}. On the other hand, SWIPT also envisions new opportunities for cooperative communications. In \cite{JR:WIPT_relaying_timeswitching}, the performance of SWIPT systems was analyzed for different relaying protocols. For cognitive radio networks, \cite{JR:WIPT_CR} focused on the cooperation between primary and secondary systems at both the information and energy levels. In these literatures, SWIPT demonstrates significant gains in many aspects, for instance, energy consumption, spectral efficiency, and time delay.

\section{Physical Layer Security}
Nowadays, wireless communication security has become an extremely important issue. Traditionally, communication security relies on cryptographic technologies that are applied in the application layer of wireless networks. These algorithms usually require reliable key distribution and may encounter high computational complexity. As an alternative, physical layer security is a viable solution for ensuring communication security. It guarantees secure communication by exploiting the physical properties of wireless channels. In \cite{Report:Wire_tap}, the basic idea of physical layer security was first proposed, cf. Figure \ref{fig:PHY}. It is verified that confidential messages can be reliably exchanged between a transmitter (Alice) and a receiver (Bob) if the receiver enjoys a better channel than an eavesdropper (Eve). In particular, by exploiting the extra degrees of freedom offered by multiple transmitting antennas, the channel between the transmitter and the eavesdropper can be weakened by injecting a properly designed artificial noise. Thus, perfect communication security is achievable.
\begin{figure}
\centering
\includegraphics[scale=0.85]{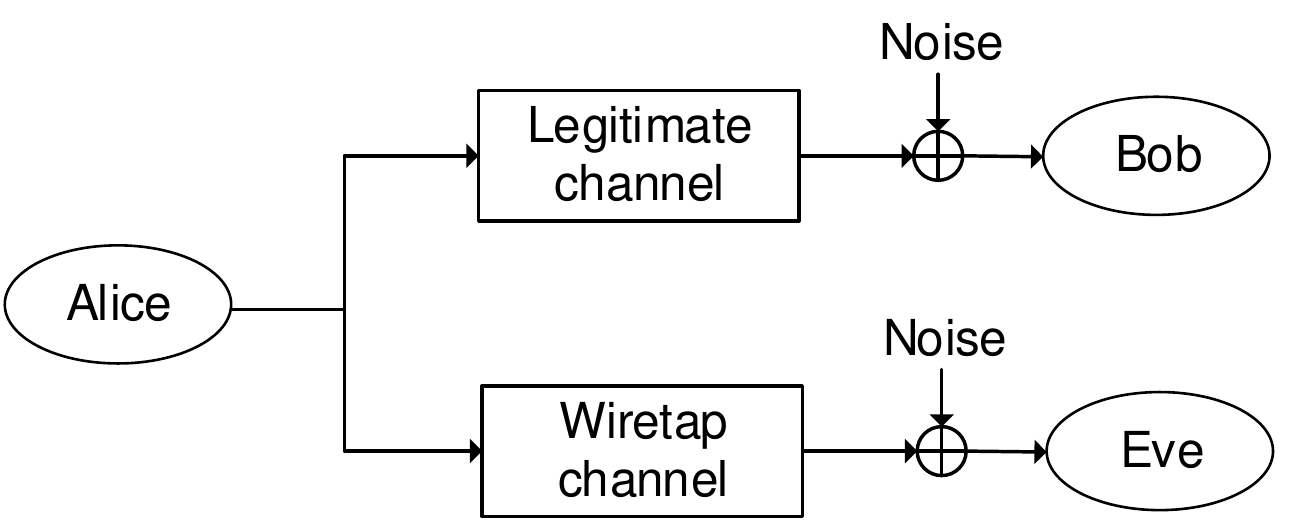}
\caption{A general wiretap model.}
\label{fig:PHY}
\end{figure}

Under this context, secure communication guaranteed by artificial noise generation led to a series of discussion. In \cite{JR:Artifical_Noise1}, a power allocation algorithm was proposed for the maximization of ergodic secrecy capacity. In \cite{JR:AN_MISO_secrecy}, the author maximized secrecy capacity in a system with multiple single-antenna eavesdroppers. \cite{JR:eff_secure_ofdma_kwan} studied secure OFDMA systems by energy-efficient resource allocation algorithm design. A trade-off between energy efficiency and secure communication was revealed. Further, \cite{JR:secure_ofdma_DFrelay_kwan} focused on secure resource allocation and scheduling in OFDMA relay networks, where a passive multi-antenna eavesdropper was considered. In \cite{JR:robust_secure_CR_kwan}, the author studied secure communication in cognitive radio networks. A robust resource allocation algorithm was proposed to ensure video communication secrecy in the secondary system.

In fact, a large amount of power is allocated to artificial noise for providing secure communication. This implies that artificial noise can act as a potential energy source for WPT while ensuring secure communication at the same time. Resource allocation algorithm design for secure communication with SWIPT was studied in \cite{JR:secure_WIPT_MISO_ruizh}--\nocite{JR:Kwan_secure_imperfect,CN:mulobj_secure_WIPT,CN:Kwan_globecom2013,CN:kwan_vicky,
CN:Multicast_SWIPT,JR:Kwan_SEC_DAS,CN:PHY_SEC_max_min}\cite{JR:kwan_MOOP_SWIPT}. In \cite{JR:secure_WIPT_MISO_ruizh}, an additive energy signal was adopted to facilitate power transfer and to ensure secure communication for separated information and energy receivers. \cite{JR:Kwan_secure_imperfect}, \cite{CN:Kwan_globecom2013}, and \cite{CN:Multicast_SWIPT} studied power-efficient resource allocation algorithm for secure SWIPT systems by minimizing the total transmit power. In \cite{JR:Kwan_secure_imperfect}, robust beamforming was studied in a system with imperfect CSIT, where both artificial noise and energy beams are used in order to provide secure communication and to improve WPT. Besides, in \cite{CN:mulobj_secure_WIPT} multi-objective optimization (MOO) approach was applied to jointly optimize multiple system design objectives in a secure communication system with SWIPT. Especially, EE of energy harvesting was maximized. In \cite{JR:kwan_MOOP_SWIPT}, SWIPT in secure communication system was combined with cognitive radio, where multiple objectives, including total transmit power minimization and EH efficiency maximization were considered in a multi-objective optimization problem (MOOP).

\section{Motivation}
As aforementioned, in SWIPT systems, the system performance is evaluated on both information delivery and power transfer. A trade-off naturally arises when considering resource allocation on both aspects. The trade-off between data rate and the amount of harvested energy has been investigated in recent literatures. However, the trade-off between EE of information delivery and EH has not been considered so far. On the other hand, the conflicting system design objectives, i.e., information receiving EE maximization, energy harvesting efficiency maximization, and transmit power minimization, leads to multiple resource allocation algorithm designs. To provide a resource allocation algorithm that can flexibly achieve multiple objectives, multi-objective system design for SWIPT requires to be studied.

Furthermore, communication security is a serious issue in SWIPT systems. In particular, when the transmit power of the information signal is increased to facilitate SWIPT, the signal becomes more vulnerable to eavesdropping due to a higher potential for information leakage. Thus, communication security arises as a new QoS concern in SWIPT systems. On the other hand, artificial noise can serve as a energy source for WPT and be harvested to extend the lifetime of power-constrained devices. In terms of power-efficient resource allocation in secure SWIPT systems, recent literatures limit the system configuration with a single information receiver and multiple single-antenna eavesdroppers. However, optimal resource allocation for secure communication in SWIPT systems with multiple desired power splitting receivers and multi-antenna potential eavesdroppers remains an unsolved problem.

The rest of the thesis is organized as follows. In Chapter \ref{chap:2_Sepuser}, we study multi-objective SWIPT design for separated receivers. The energy efficiencies for information transmission and EH are investigated by taking into account the maximum transmit power constraint. A resource allocation algorithm considering the trade-off between multiple objectives is proposed. In Chapter \ref{chap:4_secureWIPT}, we focus on SWIPT in secure communication systems with multiple power splitting receivers and multi-antenna eavesdroppers. A power-efficient resource allocation algorithm is proposed. Finally, we summarize the contributions of this thesis in Chapter \ref{chap:5_conclusion}. Appendix A contains basic theories of optimization problem. Appendix B includes the proofs of the theorems and propositions in the thesis.

\chapter{Multi-Objective SWIPT with Separated Receivers}
\label{chap:2_Sepuser}
In this chapter, we study multi-objective power allocation for energy-efficient SWIPT with separated receivers. In a MISO system, information signals and energy beams are transmitted simultaneously to jointly support information delivery to a information receiver and energy supply to a energy harvester. Under a maximum transmit power constraint, we focus on three desired system design objectives, namely, information receiving efficiency (IR-EE) maximization, energy harvesting efficiency (EH-EE) maximization, and total transmit power minimization. In particular, we jointly optimize the information beamforming vector and covariance matrix of the energy signal to achieve the considered system objective. The problem is formulated as a non-convex MOOP. To deal with the fractional objective functions, Charnes-Cooper transformation method is adopted. Subsequently, the transformed problem is solved by semi-definite program (SDP) relaxation approach. We prove that the SDP relaxation is tight. In particular, a tractable structure of the optimal solution is verified. Simulation results shows the trade-off between IR-EE, EH-EE, and the total transmit power.
\section{System Model}
\begin{figure}
\centering
\includegraphics[scale=0.8]{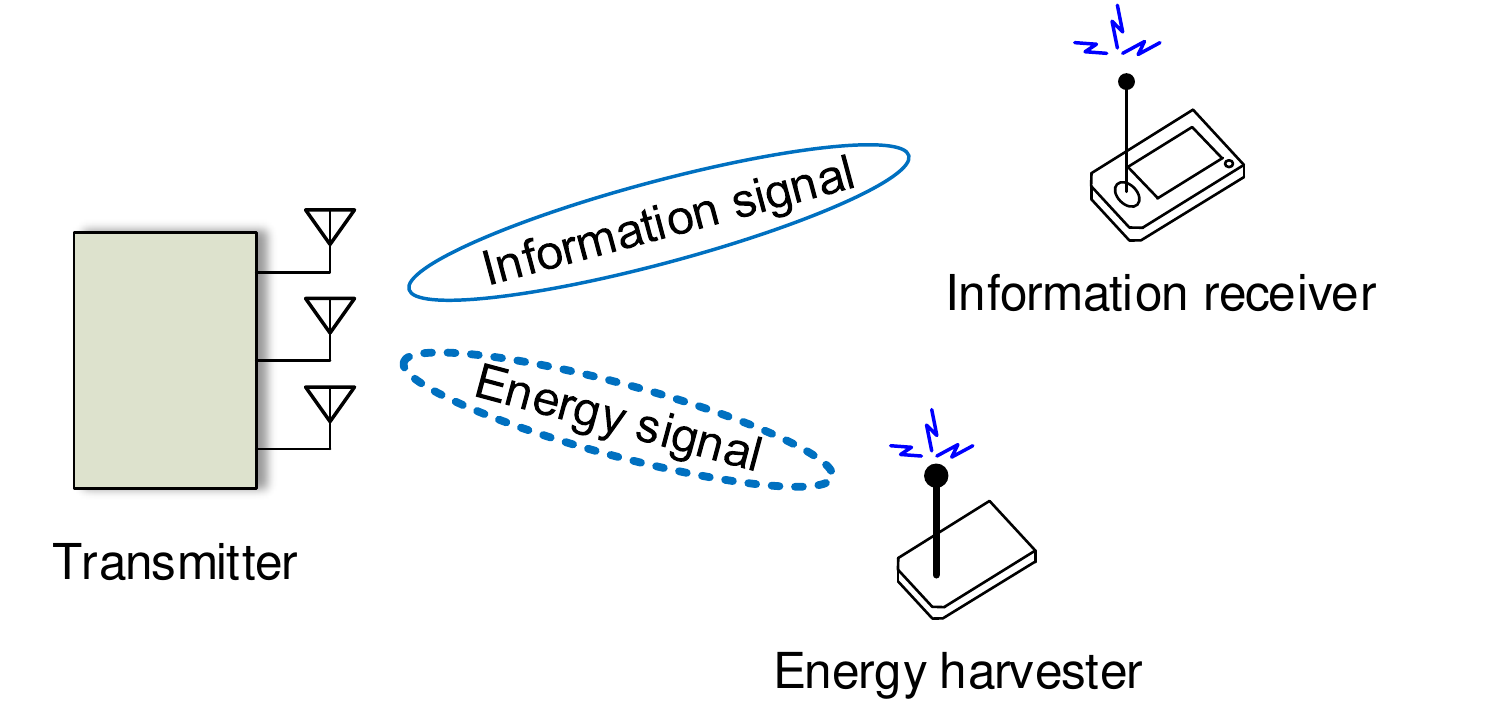}
\caption{A SWIPT system with separated receivers.}
\label{fig:system_model2}
\end{figure}

We focus on a downlink MISO system with SWIPT. The system consists of one multi-antenna transmitter, one single-antenna information receiver, and one single-antenna energy harvester. The transmitter is equipped with $N_\mathrm{T}$ antennas. It sends precoded information signal and energy beams simultaneously to facilitate information transmission and power transfer, cf. Figure \ref{fig:system_model2}. The transmission is divided into time slots. The transmitted signal in each time slot is given by
\begin{eqnarray}
\mathbf{x}=\mathbf{w}_\mathrm{I}s+\mathbf{w}_\mathrm{E},
\end{eqnarray}
where $s\in \mathbb{C}$ is the information-bearing symbol with $\est{{\abs{s}}^2}=1$. $\mathbf{w}_\mathrm{I}\in \mathbb{C}^{{N_\mathrm{T}}\times 1}$ is the corresponding precoded beamforming vector for the information receiver. $\mathbf{w}_\mathrm{E}\in \mathbb{C}^{{N_\mathrm{T}}\times 1}$ is the energy signal beamforming vector facilitating energy transfer to the energy harvester. The energy beamforming vector $\mathbf{w}_\mathrm{E}$ is modeled as a complex Gaussian pseudo-random sequence as $\mathbf{w}_\mathrm{E}\sim{\cal CN}(0,\mathbf{W}_\mathrm{E})$, where $\mathbf{W}_\mathrm{E}=\est{\mathbf{w}_\mathrm{E}\mathbf{w}_{\mathrm{E}}^H}$ is the covariance matrix of the energy signal. Assume $\mathbf{w}_\mathrm{E}$ is generated at the transmitter by a pseudo-random sequence generator with a predefined seed. The seed can be delivered to the information receiver before effective information transmission. Thus, the interference of energy signal can be totally cancelled at the information receiver.

We assume a narrow-band slow fading channel between the transmitter and receivers. The channel is assumed to be perfectly known at the transmitter. Then, the received signals at the information receiver and energy harvester are expressed as
\begin{eqnarray}
y_\mathrm{IR}=\mathbf{h}^H(\mathbf{w}_\mathrm{I}s+\mathbf{w}_\mathrm{E})+n_\mathrm{I},\\
y_\mathrm{EH}=\mathbf{g}^H(\mathbf{w}_\mathrm{I}s+\mathbf{w}_\mathrm{E})+n_\mathrm{E},
\end{eqnarray}
where $\mathbf{h}\in \mathbb{C}^{N_{\mathrm{T}}\times 1}$ is the channel vector between the transmitter and the information receiver, and $\mathbf{g}\in \mathbb{C}^{N_{\mathrm{T}}\times 1}$ is the channel vector between the transmitter and the energy harvester. They capture the joint effect of multipath fading and path loss. $n_\mathrm{I}\in \mathbb{C}$ and $n_\mathrm{E}\in \mathbb{C}$ are additive white Gaussian noise (AWGN) at the information receiver and energy harvester, respectively, which are distributed as ${\cal CN}(0,\sigma_\mathrm{I}^2)$ and ${\cal CN}(0,\sigma_\mathrm{E}^2)$.

Information receiver focuses on decoding the information signal. The achievable rate (bit/s/Hz) at the information receiver can be described as
\begin{eqnarray}
R=\log_2\Big(1+\frac{1}{\sigma_\mathrm{I}^2}\abs{\mathbf{h}^H\mathbf{w}_\mathrm{I}}^2\Big)=\log_2\Big(1+\frac{1}{\sigma_\mathrm{I}^2}\mathbf{w}_{\mathrm{I}}^H\mathbf{H}\mathbf{w}_\mathrm{I}\Big),
\end{eqnarray}
where $\mathbf{H}=\mathbf{h}\mathbf{h}^H$. At the same time, both the information signal and the energy signal can act as RF energy source for the energy harvester due to the broadcast nature of wireless channels. According to the law of energy conservation, the harvested energy is proportional to the received signal power. The total harvested energy at the energy harvester is given by
\begin{eqnarray}
P_\mathrm{harv}(\mathbf{w}_\mathrm{I},\mathbf{W}_\mathrm{E})=\eta\Big(
\abs{\mathbf{g}^H\mathbf{w}_\mathrm{I}}^2+\abs{\mathbf{g}^H\mathbf{w}_\mathrm{E}}^2\Big)
=\eta\Big(\mathbf{w}^H_\mathrm{I}\mathbf{G}\mathbf{w}_\mathrm{I}+\Tr(\mathbf{G}\mathbf{W}_\mathrm{E})\Big),
\end{eqnarray}
with $\mathbf{G}=\mathbf{g}\mathbf{g}^H$. $\eta$ is the energy conversion efficiency, which is a constant with $0\leq\eta\leq1$. It implies a energy loss in the process of converting the received RF energy to electrical energy for storage. We ignore the thermal noise at the receiving antenna as it is relative small compared to the received signal power.

Apart from system throughput, EE is also a fundamental system performance metric in modern communication networks. EE is generally defined as the ratio between system throughput and total power consumption. We first model the total power consumption (Joule-per-second) by taking into account the transmit power consumption and additional hardware power dissipation at the transmitter which can be described as
\begin{eqnarray}\label{eqn:Ptotal}
P_\mathrm{tot}(\mathbf{w}_\mathrm{I},\mathbf{W}_\mathrm{E})=\frac{\norm{\mathbf{w}_\mathrm{I}}^2+\Tr(\mathbf{W}_\mathrm{E})}{{\xi}}+N_\mathrm{T}P_\mathrm{ant}+P_\mathrm{c}.
\end{eqnarray}
\noindent $\xi$ is the power amplifier efficiency, which is a constant with $0\leq\xi\leq1$. The first term in (\ref{eqn:Ptotal}) is the total power consumption in the power amplifier. ${N_{\mathrm{T}}P_{\mathrm{ant}}}$ accounts for the dynamic circuit power consumption proportional to the number of transmitting antenna. $P_\mathrm{ant}$ denotes the power dissipation at the transmitting antenna, including the transmit filter, mixer, frequency synthesizer, digital-to-analog converter (DAC), etc. $P_{\mathrm{c}}$ denotes the fixed circuit power consumption due to baseband signal processing.

Based on the general concept of efficiency, we define IR-EE and EH-EE as
\begin{eqnarray}
\Phi_\mathrm{IR}(\mathbf{w}_\mathrm{I},\mathbf{W}_\mathrm{E})&=&\frac{R}{P_\mathrm{tot}}\,\,\,\,=\frac{\log_2(1+\frac{1}{\sigma_\mathrm{I}^2}\mathbf{w}^H_\mathrm{I}\mathbf{H}\mathbf{w}_\mathrm{I})}{(\norm{\mathbf{w}_\mathrm{I}}^2+\Tr(\mathbf{W}_\mathrm{E}))/{\xi}+N_\mathrm{T}P_\mathrm{ant}+P_\mathrm{c}}\\ \notag\\
\mathrm{and}\quad\Phi_\mathrm{EH}(\mathbf{w}_\mathrm{I},\mathbf{W}_\mathrm{E})&=&\frac{P_\mathrm{harv}}{P_\mathrm{tot}}=\frac{\eta(\mathbf{w}^H_\mathrm{I}\mathbf{G}\mathbf{w}_\mathrm{I}+\Tr(\mathbf{G}\mathbf{W}_\mathrm{E}))}{(\norm{\mathbf{w}_\mathrm{I}}^2+\Tr(\mathbf{W}_\mathrm{E}))/{\xi}+N_\mathrm{T}P_\mathrm{ant}+P_\mathrm{c}},
\end{eqnarray}
respectively.

\section{Problem Formulation}
In SWIPT system, IR-EE maximization, EH-EE maximization, and total transmit power minimization are all desirable for system design. In this section, we first propose three problem formulations for single-objective system design for SWIPT. Each single-objective problem describes one important aspect of the system design. Then, we consider the three system design objectives jointly by MOO.

The first system design objective is the maximization of IR-EE. The optimization problem is formulated as
\begin{Prob}IR-EE Maximization:\label{prob:WIPT_IR-EE}
\begin{eqnarray}
\underset{\mathbf{W}_{\mathrm{E}}\in\mathbb{H}^{N_\mathrm{T}},\mathbf{w}_{\mathrm{I}}}{\maxo}\,\, &&\Phi_{\mathrm{IR}}(\mathbf{w}_\mathrm{I},\mathbf{W}_\mathrm{E})\notag\\
\mathrm{subject\,\,to}\,\, &&\mathrm{C1}:\,\,\norm{\mathbf{w}_\mathrm{I}}^2+\Tr(\mathbf{W}_{\mathrm{E}})\leq P_{\mathrm{max}},\notag\\
&&\mathrm{C2}:\,\,\mathbf{W}_{\mathrm{E}}\succeq \mathbf{0}.\notag
\end{eqnarray}
\end{Prob}

The second system design objective is the maximization of EH-EE. The problem formulation is given as
\begin{Prob}EH-EE Maximization:\label{prob:WIPT_EH-EE}
\begin{eqnarray}
\underset{\mathbf{W}_{\mathrm{E}}\in\mathbb{H}^{N_\mathrm{T}},\mathbf{w}_{\mathrm{I}}}{\maxo}\,\, &&\Phi_{\mathrm{EH}}(\mathbf{w}_\mathrm{I},\mathbf{W}_\mathrm{E})\notag\\
\mathrm{subject\,\,to}\,\, &&\mathrm{C1,\,C2}.\notag
\end{eqnarray}
\end{Prob}

The third system design objective is the minimization of the total transmit power at the transmitter. The problem formulation is proposed as
\begin{Prob}Total Transmit Power Minimization:\label{prob:WIPT_Ptrans}
\begin{eqnarray}
\underset{\mathbf{W}_{\mathrm{E}}\in\mathbb{H}^{N_\mathrm{T}},\mathbf{w}_{\mathrm{I}}}{\mino}\,\, &&\norm{\mathbf{w}_\mathrm{I}}^2+\Tr(\mathbf{W}_{\mathrm{E}})\notag\\
\mathrm{subject\,\,to}\,\, &&\mathrm{C1,\,C2}.\notag
\end{eqnarray}
\end{Prob}

As the above problem formulations stated, IR-EE, EH-EE, and the total transmit power are independently optimized respectively. In each single-objective problem, information beamforming vector, $\mathbf{w}_\mathrm{I}$, and the covariance matrix of the energy signal, $\mathbf{W}_{\mathrm{E}}$, are jointly designed by considering the maximum transmit power constraint C1. In addition, covariance matrix $\mathbf{W}_{\mathrm{E}}$ should be a positive semi-definite Hermitian matrix as indicated in constraint C2.

For the sake of notational simplicity, we denote the objective functions in the above problems as $F_j(\mathbf{w}_{\mathrm{I}},\mathbf{W}_{\mathrm{E}})$, $j=1,2,3$. We note that Problem \ref{prob:WIPT_Ptrans} is a trivial problem with optimal value zero since the transmitter does not need to provide any QoS to the receiver. Yet, Problem \ref{prob:WIPT_Ptrans} plays an important role in the following when we study the multi-objective power allocation algorithm design. Without loss of generality, Problem \ref{prob:WIPT_Ptrans} can be rewritten as an equivalent maximization problem in order to represent the three problems consistently. The corresponding objective function is written as $F_3(\mathbf{w}_{\mathrm{I}},\mathbf{W}_{\mathrm{E}})=-(\norm{\mathbf{w}_\mathrm{I}}^2+\Tr(\mathbf{W}_{\mathrm{E}}))$.

In practice, these three independent optimization objectives are all desirable from the system operator perspective. However, there are non-trivial trade-off between them. In order to optimize these conflicting system design objectives systematically and simultaneously, we apply MOO, cf. Appendix \ref{app:intro_MOOP}.

A common approach to formulate a MOOP is via "\emph{a prior method}". This method allows the system designer to indicate the relative importance of the system design objectives before running the optimization algorithm. In particular, a sequence of scalars, which is known as "preference parameters" or "weights", are a prior specified to scalarize system designer's preference on different objectives. There are many scalarization methods. Here we adopt
weighted min-max method \cite{JR:MOOP}. As introduced in Appendix \ref{app:intro_MOOP}, the optimal point(s) of a MOOP is defined by Pareto optimality. All Pareto optimal points, which form the Pareto optimal set, are important to the system designer. In fact, weighted min-max method can provide the complete Pareto optimal set by varying the preference parameters, despite the non-convexity of the MOOP. Based on weighted min-max method, we incorporate three system design objectives into a MOOP, which is formulated as
\begin{Prob}Multi-Objective Optimization Problem:\label{prob:multiobj_WIPTsepuser}
\begin{eqnarray}
\underset{\mathbf{W}_{\mathrm{E}}\in\mathbb{H}^{N_\mathrm{T}},\mathbf{w}_{\mathrm{I}}}{\mino}\,\,&&\max_{j=1,2,3}\,\, \Big\{\omega_j(F_j^*-F_j(\mathbf{w}_{\mathrm{I}},\mathbf{W}_{\mathrm{E}}))\Big\}\notag\\
\mathrm{subject\,\,to} &&\mathrm{C1,\,C2},\notag
\end{eqnarray}
\end{Prob}
\noindent where $F_j^*$ is the optimal objective value with respect to Problem $j$. $\omega_j$ is the weight imposed on objective function $j$ subject to $0\leq\omega_i\leq1$ and $\sum_j\omega_j=1$, which indicates the system designer's preference on $j$th objective function over the others. In extreme case, when $\omega_j=1$ and $\omega_i=0, \forall i\neq j $, Problem \ref{prob:multiobj_WIPTsepuser} is equivalent to the single-objective optimization problem $j$.

In this MOOP, we investigate the complete trade-off region between the three objectives regarding to system power allocation. To this end, we only take maximum transmit power constraint into consideration. In case other QoS constraints are imposed into the MOOP, a smaller Pareto optimal set can be obtained, which is actually a subset of the complete trade-off region.

\section{Multi-Objective Power Allocation Algorithm Design}
It can be observed that Problem \ref{prob:WIPT_IR-EE} and Problem \ref{prob:WIPT_EH-EE} are non-convex due to the fractional form of the objectives which leads to the non-convexity in Problem \ref{prob:multiobj_WIPTsepuser}. In order to obtain a tractable solution, we first transform the non-convex objective functions by Charnes-Cooper transformation method. Then, the transformed problems are solved by SDP relaxation approach.

We first reformulate aforementioned three single-objective optimization problems by defining a set of new optimization variables as follows:
\begin{eqnarray}\label{eqn:newvariabledefine}
\mathbf{W}_\mathrm{I}=\mathbf{w}_\mathrm{I}\mathbf{w}_\mathrm{I}^H,\,\,\theta=\frac{1}{P_\mathrm{tot}(\mathbf{w}_\mathrm{I},\mathbf{W}_\mathrm{E})},\,\,\overline{\mathbf{W}}_\mathrm{I}=\theta\mathbf{W}_\mathrm{I},\,\,\mathrm{and}\,\,\overline{\mathbf{W}}_\mathrm{E}=\theta\mathbf{W}_\mathrm{E}.
\end{eqnarray}
Then the original problems can be rewritten with respect to the new optimization variables $\{\overline{\mathbf{W}}_\mathrm{I},\overline{\mathbf{W}}_\mathrm{E}, \theta\}$, which are given by
\begin{Prob}Transformed IR-EE Maximization Problem:\label{prob:WIPT_IR-EE_reform}
\begin{eqnarray}
\underset{\overline{\mathbf{W}}_\mathrm{I},\overline{\mathbf{W}}_\mathrm{E}\in\mathbb{H}^{N_\mathrm{T}},\theta}{\maxo}\,\, &&\theta\log_2(1+\frac{\Tr(\mathbf{H}\overline{\mathbf{W}}_\mathrm{I})}{\theta\sigma_\mathrm{I}^2})\notag\\
\mathrm{subject\,\,to}\,\, &&\overline{\mathrm{C1}}:\,\,\Tr(\overline{\mathbf{W}}_\mathrm{I}+\overline{\mathbf{W}}_\mathrm{E})\leq\theta P_{\mathrm{max}},\notag\\
&&\overline{\mathrm{C2}}:\,\,\overline{\mathbf{W}}_\mathrm{I}\succeq \mathbf{0},\,\,\overline{\mathbf{W}}_\mathrm{E}\succeq \mathbf{0},\notag\\
&&\overline{\mathrm{C3}}:\,\,\Rank(\overline{\mathbf{W}}_\mathrm{I})\leq1,\notag\\
&&\overline{\mathrm{C4}}:\,\,\frac{\Tr(\overline{\mathbf{W}}_\mathrm{I}+\overline{\mathbf{W}}_\mathrm{E})}{\xi}+\theta(N_\mathrm{T}P_\mathrm{ant}+P_\mathrm{c})\leq1,\notag\\
&&\overline{\mathrm{C5}}:\,\,\theta\ge0. \notag
\end{eqnarray}
\end{Prob}
\begin{Prob}Transformed EH-EE Maximization Problem:\label{prob:WIPT_EH-EE_reform}
\begin{eqnarray}
\underset{\overline{\mathbf{W}}_\mathrm{I},\overline{\mathbf{W}}_\mathrm{E}\in\mathbb{H}^{N_\mathrm{T}},\theta}{\maxo}\,\, &&\eta\Tr(\mathbf{G}(\overline{\mathbf{W}}_\mathrm{I}+\overline{\mathbf{W}}_\mathrm{E}))\notag\\
\mathrm{subject\,\,to}\,\, &&\overline{\mathrm{C1}} - \overline{\mathrm{C5}}.\notag
\end{eqnarray}
\end{Prob}
\begin{Prob}Transformed Total Transmit Power Minimization Problem:\label{prob:WIPT_Ptrans_reform}
\begin{eqnarray}
\underset{\overline{\mathbf{W}}_\mathrm{I},\overline{\mathbf{W}}_\mathrm{E}\in\mathbb{H}^{N_\mathrm{T}},\theta}{\maxo}\,\, &&-\xi(\frac{1}{\theta}-N_\mathrm{T}P_\mathrm{ant}-P_\mathrm{c})\notag\\
\mathrm{subject\,\,to}\,\, &&\overline{\mathrm{C1}} - \overline{\mathrm{C5}}.\notag
\end{eqnarray}
\end{Prob}
Denote the transformed objective function as $\overline{F_j}(\overline{\mathbf{W}}_\mathrm{I},\overline{\mathbf{W}}_\mathrm{E},\theta)$, $j=1,2,3$. Constraints $\overline{\mathbf{W}}_\mathrm{I}\succeq \mathbf{0}$, $\overline{\mathbf{W}}_\mathrm{I}\in\mathbb{H}^{N_\mathrm{T}}$, and $\Rank(\overline{\mathbf{W}}_\mathrm{I})\leq1$ are imposed to guarantee that $\overline{\mathbf{W}}_\mathrm{I}=\theta\mathbf{w}_\mathrm{I}\mathbf{w}_\mathrm{I}^H$. Constraints $\overline{\mathrm{C4}}$ and $\overline{\mathrm{C5}}$ are introduced due to the proposed transformation.

Furthermore, in order to simplify the following algorithm design, we first normalize the transformed objective functions due to their different ranges and dimensions. A robust transformation, regardless of the original range or dimension of the objective function, is given as follows \cite{JR:MOOP},
\begin{eqnarray}\label{eqn:normalization}
\overline{F_j}^{\mathrm{nml}}(\overline{\mathbf{W}}_\mathrm{I},\overline{\mathbf{W}}_\mathrm{E},\theta)=\frac{\overline{F_j}(\overline{\mathbf{W}}_\mathrm{I},\overline{\mathbf{W}}_\mathrm{E},\theta)-\overline{F_j}^0}{\overline{F_j}^*-\overline{F_j}^0},
\end{eqnarray}
where $\overline{F_j}^*$ and $\overline{F_j}^0$ are the maximum and minimum value of the $j$th transformed objective function, i.e., $\overline{F_j}^0\leq \overline{F_j}(\overline{\mathbf{W}}_\mathrm{I},\overline{\mathbf{W}}_\mathrm{E},\theta)\leq \overline{F_j}^*$. $\overline{F_j}^*$ can result from the transformed single-objective problems. Then, the transformed objective functions are normalized to range $[0,1]$.

Regarding to the MOOP \ref{prob:multiobj_WIPTsepuser}, the objective function can be rewritten in a normalization form as $\max_{j=1,2,3}\,\, \{\omega_j(1-\overline{F_j}^{\mathrm{nml}}(\overline{\mathbf{W}}_\mathrm{I},\overline{\mathbf{W}}_\mathrm{E},\theta))\}$. A common approach for handling such a min-max optimization problem is to introduce an auxiliary optimization variable. Then, the MOOP can be transformed into its equivalent epigraph representation \cite{book:convex}, which is given by
\begin{Prob}Transformed MOOP:\label{prob:multiobj_WIPTsepuser2}
\begin{eqnarray}
\underset{\overline{\mathbf{W}}_\mathrm{I},\overline{\mathbf{W}}_\mathrm{E}\in\mathbb{H}^{N_\mathrm{T}},\theta,\tau}{\mino}&&\tau \notag\\
\mathrm{subject\,\,to}\,\,\,&&\overline{\mathrm{C1}} - \overline{\mathrm{C5}},\notag\\
&&\overline{\mathrm{C6}}:\,\omega_j(1-\overline{F_j}^{\mathrm{nml}}(\overline{\mathbf{W}}_\mathrm{I},\overline{\mathbf{W}}_\mathrm{E},\theta))\leq \tau,\,\,\forall j,\notag
\end{eqnarray}
\end{Prob}
\noindent where $\tau$ is the auxiliary optimization variable. We note that the optimal value of Problem \ref{prob:multiobj_WIPTsepuser2} lies between zero and one.

Now, we introduce the following proposition.
\begin{proposition}\label{prop:equivalency}
The transformed problems \ref{prob:WIPT_IR-EE_reform}-\ref{prob:multiobj_WIPTsepuser2} are equivalent transformations of the original problems \ref{prob:WIPT_IR-EE}-\ref{prob:multiobj_WIPTsepuser}, respectively.
\end{proposition}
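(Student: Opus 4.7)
The plan is to prove the four pairwise equivalences (Problems \ref{prob:WIPT_IR-EE}--\ref{prob:multiobj_WIPTsepuser} versus Problems \ref{prob:WIPT_IR-EE_reform}--\ref{prob:multiobj_WIPTsepuser2}) by exhibiting the Charnes--Cooper variable change as a bijection between the feasible sets that preserves objective values at every optimum. For the forward direction, given a feasible $(\mathbf{w}_\mathrm{I},\mathbf{W}_\mathrm{E})$ of an original problem I will set $\theta=1/P_\mathrm{tot}(\mathbf{w}_\mathrm{I},\mathbf{W}_\mathrm{E})$, $\overline{\mathbf{W}}_\mathrm{I}=\theta\,\mathbf{w}_\mathrm{I}\mathbf{w}_\mathrm{I}^H$, $\overline{\mathbf{W}}_\mathrm{E}=\theta\mathbf{W}_\mathrm{E}$, exactly as in (\ref{eqn:newvariabledefine}). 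For the reverse direction, given a feasible $(\overline{\mathbf{W}}_\mathrm{I},\overline{\mathbf{W}}_\mathrm{E},\theta)$ with $\theta>0$, I will invoke $\overline{\mathrm{C2}}$ and $\overline{\mathrm{C3}}$ to write $\overline{\mathbf{W}}_\mathrm{I}=\mathbf{u}\mathbf{u}^H$ and set $\mathbf{W}_\mathrm{E}=\overline{\mathbf{W}}_\mathrm{E}/\theta$, $\mathbf{w}_\mathrm{I}=\mathbf{u}/\sqrt{\theta}$.

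The forward direction is routine: $\overline{\mathrm{C1}}$ follows from $\mathrm{C1}$ after multiplying by $\theta$; $\overline{\mathrm{C2}}$ and $\overline{\mathrm{C3}}$ hold since $\theta\,\mathbf{w}_\mathrm{I}\mathbf{w}_\mathrm{I}^H\succeq\mathbf{0}$ has rank at most one; $\overline{\mathrm{C5}}$ is immediate from $P_\mathrm{tot}>0$; and $\overline{\mathrm{C4}}$ holds with equality because its left-hand side is exactly $\theta P_\mathrm{tot}=1$ by the definition of $\theta$. Substituting the new variables into the transformed objectives reproduces the originals after elementary algebra, using $1/\theta=P_\mathrm{tot}$ in the power-minimization case and $\Tr(\mathbf{H}\overline{\mathbf{W}}_\mathrm{I})/\theta=\Tr(\mathbf{H}\mathbf{W}_\mathrm{I})=\mathbf{w}_\mathrm{I}^H\mathbf{H}\mathbf{w}_\mathrm{I}$ inside the IR-EE logarithm.

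The reverse direction is where the main work lies, because $\overline{\mathrm{C4}}$ is stated as an inequality while matching objective values require it to be tight. The key idea is a scaling argument: for any feasible $(\overline{\mathbf{W}}_\mathrm{I},\overline{\mathbf{W}}_\mathrm{E},\theta)$ with $\theta>0$, the ray $(\alpha\overline{\mathbf{W}}_\mathrm{I},\alpha\overline{\mathbf{W}}_\mathrm{E},\alpha\theta)$ for $\alpha\geq 1$ preserves $\overline{\mathrm{C1}}$, $\overline{\mathrm{C2}}$, $\overline{\mathrm{C3}}$, $\overline{\mathrm{C5}}$ (all positively homogeneous) and leaves the recovered $(\mathbf{w}_\mathrm{I},\mathbf{W}_\mathrm{E})$ unchanged, while the homogeneous constraint $\overline{\mathrm{C4}}$ is activated at some maximal $\alpha^{*}\geq 1$. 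A direct check shows that each transformed objective is monotone non-decreasing in $\alpha$ along this ray---linear in $\alpha$ for Problems \ref{prob:WIPT_IR-EE_reform} and \ref{prob:WIPT_EH-EE_reform}, and of the form $-\xi/(\alpha\theta)+\xi(N_\mathrm{T}P_\mathrm{ant}+P_\mathrm{c})$ for Problem \ref{prob:WIPT_Ptrans_reform}---so that at optimality we may assume $\overline{\mathrm{C4}}$ is tight, i.e., $\alpha=\alpha^{*}$. Once $\overline{\mathrm{C4}}$ is active we have $\theta=1/P_\mathrm{tot}(\mathbf{w}_\mathrm{I},\mathbf{W}_\mathrm{E})$; constraints $\mathrm{C1}$, $\mathrm{C2}$ then follow from $\overline{\mathrm{C1}}$, $\overline{\mathrm{C2}}$, and the transformed objective coincides with the original in each case. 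The boundary $\theta=0$ collapses $\overline{\mathbf{W}}_\mathrm{I},\overline{\mathbf{W}}_\mathrm{E}$ to zero through $\overline{\mathrm{C1}}$, $\overline{\mathrm{C2}}$, and is handled separately as the trivial solution $\mathbf{w}_\mathrm{I}=\mathbf{0}$, $\mathbf{W}_\mathrm{E}=\mathbf{0}$.

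For the MOOP (Problem \ref{prob:multiobj_WIPTsepuser} versus Problem \ref{prob:multiobj_WIPTsepuser2}), the three scalar equivalences already yield $F_j^*=\overline{F_j}^*$, so the affine normalization (\ref{eqn:normalization}) produces matching normalized objectives; introducing the auxiliary variable $\tau$ together with the family $\overline{\mathrm{C6}}$ is the standard epigraph rewriting of the weighted min--max, preserving both optimal values and optimizers. The main obstacle I anticipate is the monotonicity and feasibility bookkeeping along the scaling ray in the reverse direction, together with ensuring that the rank-one constraint $\overline{\mathrm{C3}}$ survives the scaling (which it does, since $\Rank(\alpha\overline{\mathbf{W}}_\mathrm{I})=\Rank(\overline{\mathbf{W}}_\mathrm{I})$ for $\alpha>0$); every other step reduces to direct substitution or a standard convex-optimization reformulation.
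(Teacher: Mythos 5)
Your proposal is correct and follows essentially the same route as the paper: a direct Charnes--Cooper substitution for the forward direction, and a scaling argument showing that the objective is non-decreasing along the scaling ray so that $\overline{\mathrm{C4}}$ must be active at optimality (the paper phrases this as a contradiction with the scaled point $(c\overline{\mathbf{W}}_\mathrm{I}^*,\overline{\mathbf{W}}_\mathrm{E}^*,c\theta^*)$, $c>1$, whereas you scale all variables uniformly---both work). Your explicit treatment of the $\theta=0$ boundary and of the epigraph reformulation for the MOOP matches the paper's brief remarks on these points.
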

\begin{proof}
Please refer to Appendix \ref{app:Prop_equivalency}.
\end{proof}
Based on Proposition \ref{prop:equivalency}, we can recover the solution of the original problems based on (\ref{eqn:newvariabledefine}). In particular, the optimal value $\overline{F_j}^*$ or the lower bound $\overline{F_j}^0$ of the $j$th transformed objective function equal to that of the $j$th original objective function, i.e., $\overline{F_j}^*=F_j^*$, $\overline{F_j}^0=F_j^0$, $j=1,2,3$. Thus, in (\ref{eqn:normalization}), $\overline{F_j}^0$ are trivial results from the original objective functions, that are $\overline{F_1}^0=\overline{F_2}^0=0$, $\overline{F_3}^0=P_\mathrm{max}$. We also denote $\overline{F_j}^*$ simply as $\overline{F_1}^*=\Phi_\mathrm{IR}^*$, $\overline{F_2}^*=\Phi_\mathrm{EH}^*$, and $\overline{F_3}^*=0$.

We note that if Problem \ref{prob:multiobj_WIPTsepuser2} can be solved optimally by an algorithm, then the algorithm can also be used to solve Problem \ref{prob:WIPT_IR-EE_reform}-\ref{prob:WIPT_Ptrans_reform}, since Problem \ref{prob:multiobj_WIPTsepuser2} is a generalization of Problem \ref{prob:WIPT_IR-EE_reform}-\ref{prob:WIPT_Ptrans_reform}. Thus, we focus on the method in solving Problem \ref{prob:multiobj_WIPTsepuser2}. It is evident that Problem \ref{prob:multiobj_WIPTsepuser2} is non-convex due to the rank-one beamforming matrix constraint $\overline{\mathrm{C3}}:\,\,\Rank(\overline{\mathbf{W}}_\mathrm{I})\leq1$. Now, we apply the SDP relaxation by removing constraint $\mathrm{C3}$ from Problem \ref{prob:multiobj_WIPTsepuser2}. As a result, the SDP relaxed problem is given by
\begin{Prob}SDP Relaxed Transformed MOOP:\label{prob:multiobj_WIPTsepuser_relaxed}
\begin{eqnarray}
\underset{\overline{\mathbf{W}}_\mathrm{I},\overline{\mathbf{W}}_\mathrm{E}\in\mathbb{H}^{N_\mathrm{T}},\theta,\tau}{\mino}&&\tau \notag\\
\mathrm{subject\,\,to}\,\,\, &&\overline{\mathrm{C1}},\,\overline{\mathrm{C2}},\,\overline{\mathrm{C4}},\,\overline{\mathrm{C5}},\,\overline{\mathrm{C6}},\notag
\end{eqnarray}
\end{Prob}
\noindent which is a convex SDP problem and can be solved by numerical convex program solvers such as CVX \cite{website:CVX}. In particular, if the obtained solution $\overline{\mathbf{W}}_\mathrm{I}^*$ of the SDP relaxed problem satisfies constraint $\overline{\mathrm{C3}}$, i.e., $\Rank(\overline{\mathbf{W}}_\mathrm{I}^*)\leq1$, then it turns out to be the optimal solution. Then, the optimal beamforming vector $\mathbf{w}_{\mathrm{I}}^*$ of the original problem can be achieved by solving the relaxed problem and recovering from the invertible mapping equations (\ref{eqn:newvariabledefine}). Now, we study the tightness of the SDP relaxation by the following theorem.
\begin{Thm}\label{thm:rankone}
The optimal solution of Problem \ref{prob:multiobj_WIPTsepuser_relaxed} satisfies $\Rank(\overline{\mathbf{W}}_\mathrm{I}^*)=1$ and $\Rank(\overline{\mathbf{W}}_\mathrm{E}^*)\leq1$. In particular, an optimal solution with $\Rank(\overline{\mathbf{W}}_\mathrm{I}^*)=1$ and $\overline{\mathbf{W}}_\mathrm{E}^*=\mathbf{0}$ can always be constructed.
\end{Thm}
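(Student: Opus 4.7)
The plan is to combine a direct swap argument that eliminates $\overline{\mathbf{W}}_\mathrm{E}^*$ with a KKT analysis that pins down the rank of $\overline{\mathbf{W}}_\mathrm{I}^*$. First I would verify that Problem \ref{prob:multiobj_WIPTsepuser_relaxed} is a convex SDP for which Slater's condition is satisfied (any strictly feasible point such as a small scaled identity serves), so strong duality holds and the KKT conditions are both necessary and sufficient at any optimum.

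For the claim that an optimum with $\overline{\mathbf{W}}_\mathrm{E}^* = \mathbf{0}$ exists, I would take an arbitrary optimum $(\overline{\mathbf{W}}_\mathrm{I}^*, \overline{\mathbf{W}}_\mathrm{E}^*, \theta^*, \tau^*)$ and show that the candidate $(\overline{\mathbf{W}}_\mathrm{I}^* + \overline{\mathbf{W}}_\mathrm{E}^*, \mathbf{0}, \theta^*, \tau^*)$ is both feasible and optimal. Constraints $\overline{\mathrm{C1}}$ and $\overline{\mathrm{C4}}$ depend on $\overline{\mathbf{W}}_\mathrm{I}$ and $\overline{\mathbf{W}}_\mathrm{E}$ only through their sum, so feasibility of those is automatic. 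The harvesting objective $\overline{F_2}$ likewise depends only on $\overline{\mathbf{W}}_\mathrm{I}+\overline{\mathbf{W}}_\mathrm{E}$ and $\overline{F_3}$ depends only on $\theta$, so the $\overline{\mathrm{C6}}$ entries for $j=2,3$ are unchanged. For $j=1$, the PSD property $\mathbf{H}\succeq\mathbf{0}$ yields $\Tr(\mathbf{H}(\overline{\mathbf{W}}_\mathrm{I}^*+\overline{\mathbf{W}}_\mathrm{E}^*)) \geq \Tr(\mathbf{H}\overline{\mathbf{W}}_\mathrm{I}^*)$, so $\overline{F_1}$ cannot decrease and the $j=1$ entry of $\overline{\mathrm{C6}}$ remains satisfied. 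Hence the candidate attains the same $\tau^*$ and is optimal.

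For the rank bounds I would write the Lagrangian of Problem \ref{prob:multiobj_WIPTsepuser_relaxed} with PSD dual matrices $\mathbf{Y}_\mathrm{I}, \mathbf{Y}_\mathrm{E}$ for the semidefiniteness constraints and scalar multipliers $\lambda_1,\lambda_4,\lambda_5,\mu_1,\mu_2,\mu_3$ for the remaining ones. Setting $\partial L/\partial \overline{\mathbf{W}}_\mathrm{E}=\mathbf{0}$ yields $\mathbf{Y}_\mathrm{E} = \kappa\,\mathbf{I} - \mu_2\omega_2\beta\,\mathbf{G}$, with $\kappa := \lambda_1+\lambda_4/\xi$ and $\beta := \eta/\Phi_\mathrm{EH}^*$. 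Since $\mathbf{G}=\mathbf{g}\mathbf{g}^H$ has rank one, $\mathbf{Y}_\mathrm{E}\succeq\mathbf{0}$ forces a null space of dimension at most one, and complementary slackness $\mathbf{Y}_\mathrm{E}\overline{\mathbf{W}}_\mathrm{E}^*=\mathbf{0}$ gives $\Rank(\overline{\mathbf{W}}_\mathrm{E}^*)\leq 1$. The analogous stationarity for $\overline{\mathbf{W}}_\mathrm{I}$ gives $\mathbf{Y}_\mathrm{I} = \kappa\,\mathbf{I} - \mu_1\omega_1\alpha\,\mathbf{H} - \mu_2\omega_2\beta\,\mathbf{G}$, where $\alpha>0$ is the positive scalar capturing the Charnes-Cooper-transformed rate gradient at the optimum. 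The subtracted matrix has rank at most two, so $\mathbf{Y}_\mathrm{I}$ has at least $N_\mathrm{T}-2$ eigenvalues equal to $\kappa$ in the subspace orthogonal to $\mathrm{span}(\mathbf{h},\mathbf{g})$; combining PSD-ness of $\mathbf{Y}_\mathrm{I}$ with the non-triviality of $\overline{\mathbf{W}}_\mathrm{I}^*$ (which holds because $\omega_1>0$ and $\Phi_\mathrm{IR}^*>0$ make $\overline{\mathbf{W}}_\mathrm{I}=\mathbf{0}$ strictly sub-optimal) then forces $\Rank(\mathbf{Y}_\mathrm{I})=N_\mathrm{T}-1$, so $\Rank(\overline{\mathbf{W}}_\mathrm{I}^*)=1$ by complementary slackness.

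The hard part will be handling the degenerate cases of this KKT analysis: showing that $\kappa > 0$ strictly (so the $N_\mathrm{T}-2$ eigenvalues of $\mathbf{Y}_\mathrm{I}$ outside $\mathrm{span}(\mathbf{h},\mathbf{g})$ are genuinely positive) and ruling out the coincidence in which both eigenvalues of $\mu_1\omega_1\alpha\mathbf{H}+\mu_2\omega_2\beta\mathbf{G}$ inside $\mathrm{span}(\mathbf{h},\mathbf{g})$ simultaneously coincide with $\kappa$, which would permit a two-dimensional null space for $\mathbf{Y}_\mathrm{I}$. The standard remedy is to identify which of $\overline{\mathrm{C1}}$ and $\overline{\mathrm{C4}}$ are active to pin down the multipliers and, if a higher-rank optimum cannot be ruled out directly, to construct a rank-one reshuffling of $\overline{\mathbf{W}}_\mathrm{I}^*$ that lies in the null space of $\mathbf{Y}_\mathrm{I}$ while preserving the values of $\Tr(\mathbf{H}\overline{\mathbf{W}}_\mathrm{I}^*)$, $\Tr(\mathbf{G}\overline{\mathbf{W}}_\mathrm{I}^*)$ and $\Tr(\overline{\mathbf{W}}_\mathrm{I}^*)$, hence the objective.
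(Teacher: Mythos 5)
Your architecture matches the paper's where it matters: the paper also proves the theorem by writing the Lagrangian of Problem \ref{prob:multiobj_WIPTsepuser_relaxed}, invoking Slater's condition and strong duality, and reading off exactly your two stationarity identities, $\mathbf{Y}_\mathrm{E}=\kappa\mathbf{I}-c_2\mathbf{G}$ and $\mathbf{Y}_\mathrm{I}=\mathbf{Y}_\mathrm{E}-c_1\mathbf{H}$ with $c_1,c_2\ge 0$, before converting dual rank into primal rank via complementary slackness. Your swap argument for producing an optimum with $\overline{\mathbf{W}}_\mathrm{E}=\mathbf{0}$ is a genuinely different and cleaner route than the paper's: the paper instead splits into cases on $(\omega_1,\omega_2)$, merging $\overline{\mathbf{W}}_\mathrm{I}$ and $\overline{\mathbf{W}}_\mathrm{E}$ only in the case $\omega_1=0$ where they are interchangeable, and deducing $\overline{\mathbf{W}}_\mathrm{E}=\mathbf{0}$ from $\mathbf{Y}_\mathrm{E}\succ\mathbf{0}$ when $\omega_1\neq 0$. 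Your observation that every constraint of the relaxed problem except the $j=1$ entry of $\overline{\mathrm{C6}}$ depends on the two matrices only through their sum, and that $\overline{F_1}$ is monotone in $\Tr(\mathbf{H}\overline{\mathbf{W}}_\mathrm{I})$, is correct and self-contained.

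The gap is exactly where you flag it, and it is the substantive content of the paper's proof, so it cannot be deferred as ``the hard part.'' When $\omega_1\neq 0$ and $\omega_2\neq 0$ you must exclude $\kappa=c_2\norm{\mathbf{g}}^2$, i.e.\ a singular $\mathbf{Y}_\mathrm{E}$, since otherwise $\mathbf{Y}_\mathrm{I}$ could lose rank in both directions of $\mathrm{span}(\mathbf{h},\mathbf{g})$ and the rank-one conclusion fails. The paper closes this with a short contradiction that your outline does not contain: because the IR-EE entry of $\overline{\mathrm{C6}}$ is active when $\omega_1>0$ and $\theta>0$ (established in the proof of Proposition \ref{prop:equivalency}), the coefficient $c_1$ of $\mathbf{H}$ is strictly positive; if $\mathbf{Y}_\mathrm{E}\mathbf{g}=\mathbf{0}$ then $\mathbf{g}^H\mathbf{Y}_\mathrm{I}\mathbf{g}=-c_1\abs{\mathbf{h}^H\mathbf{g}}^2<0$, contradicting $\mathbf{Y}_\mathrm{I}\succeq\mathbf{0}$. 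Hence $\mathbf{Y}_\mathrm{E}\succ\mathbf{0}$ (which also gives $\overline{\mathbf{W}}_\mathrm{E}=\mathbf{0}$ at once), and $\mathbf{Y}_\mathrm{I}$ is a positive definite matrix minus a single rank-one term, so its null space is at most one-dimensional. Neither of your proposed remedies supplies this step: identifying active constraints does not by itself rule out the eigenvalue coincidence, and the rank-reduction fallback would only yield \emph{some} rank-one optimum rather than the first sentence of the theorem (though it would salvage the second sentence, since after your swap only the three linear functionals $\Tr(\overline{\mathbf{W}}_\mathrm{I})$, $\Tr(\mathbf{H}\overline{\mathbf{W}}_\mathrm{I})$, $\Tr(\mathbf{G}\overline{\mathbf{W}}_\mathrm{I})$ enter the problem). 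By contrast, your worry about $\kappa>0$ resolves itself: $\omega_2>0$ forces the EH-EE multiplier to be positive, and $\mathbf{Y}_\mathrm{E}=\kappa\mathbf{I}-c_2\mathbf{G}\succeq\mathbf{0}$ with $c_2>0$ already implies $\kappa\ge c_2\norm{\mathbf{g}}^2>0$.
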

\begin{proof}
Please refer to Appendix \ref{app:rankone}.
\end{proof}
Therefore, the adopted SDP relaxation is tight. Besides, Problem \ref{prob:WIPT_IR-EE_reform}-\ref{prob:WIPT_Ptrans_reform} can be solved by SDP relaxation as solving Problem \ref{prob:multiobj_WIPTsepuser_relaxed}.

Next, we construct an optimal solution with $\Rank(\overline{\mathbf{W}}_\mathrm{I}^*)=1$ and $\overline{\mathbf{W}}_\mathrm{E}^*=\mathbf{0}$ based on Theorem \ref{thm:rankone}. We redefine the optimization variable $\overline{\mathbf{W}}_\mathrm{I}$ as
\begin{eqnarray}\label{eqn:newvariabledefine2}
\overline{\mathbf{W}}_\mathrm{I}=\lambda\mathbf{u}\mathbf{u}^H,\,\,\mathbf{u}=[u_1,u_2,\dots,u_{N_\mathrm{T}}]^T,\,\,\mathrm{and}\,\,\overline{\mathbf{W}}_\mathrm{E}=\mathbf{0},
\end{eqnarray}
where $\mathbf{u}\in \mathbb{C}^{N_{\mathrm{T}}\times 1}$. $\mathbf{u}$ is an orthonormal vector, i.e., $\sum_{i=1}^{N_\mathrm{T}}\left|u_i\right|^2=1$. According to (\ref{eqn:newvariabledefine}) and (\ref{eqn:newvariabledefine2}), we have $\mathbf{w}_\mathrm{I}=\sqrt{\frac{\lambda}{\theta}}\mathbf{u}$. Then, MOOP \ref{prob:multiobj_WIPTsepuser2} can be reformed with respect to the optimization variables $\{\mathbf{u}, \lambda, \theta, \tau\}$ as follows:
\begin{Prob}\label{prob:multiobj_WIPTsepuser3}
\begin{eqnarray}
\underset{\mathbf{u},\lambda,\theta,\tau}{\mino}&&\tau \notag\\
\mathrm{subject\,\,to} &&\widehat{\mathrm{C1}}:\,\lambda\sum_{i=1}^{N_\mathrm{T}}\left|u_i\right|^2\leq\theta P_{\mathrm{max}},\notag\\
&&\widehat{\mathrm{C2}}:\,\,\lambda\ge0,\quad\widehat{\mathrm{C3}}:\,\,\theta\ge0, \notag\\
&&\widehat{\mathrm{C4}}:\,\,\frac{\lambda\sum_{i=1}^{N_\mathrm{T}}\left|u_i\right|^2}{\xi}+\theta(N_\mathrm{T}P_\mathrm{ant}+P_\mathrm{c})\leq1,\notag\\
&&\widehat{\mathrm{C5}}:\,\omega_1\Big(1-\frac{\theta}{\Phi_\mathrm{IR}^*}\log_2\Big(1+\frac{\lambda\left|\sum_{i=1}^{N_\mathrm{T}}u_i^*h_i\right|^2}{\theta\sigma_\mathrm{I}^2}\Big)\Big)\leq \tau, \notag\\
&&\widehat{\mathrm{C6}}:\,\omega_2\Big(1-\frac{\eta}{\Phi_\mathrm{EH}^*}\lambda\left|\sum_{i=1}^{N_\mathrm{T}}u_i^*g_i\right|^2\Big)\leq \tau, \notag\\
&&\widehat{\mathrm{C7}}:\,\omega_3\frac{\xi}{P_\mathrm{max}}(\frac{1}{\theta}-N_\mathrm{T}P_\mathrm{ant}-P_\mathrm{c})\leq \tau, \notag
\end{eqnarray}
\end{Prob}
\noindent where $h_i$ and $g_i$, $i\in\{1,\dots,N_\mathrm{T}\}$, are the elements of channel vectors $\mathbf{h}$ and $\mathbf{g}$, respectively. In order to investigate the structure of vector $\mathbf{u}$, we analyze the Karush-Kuhn-Tucker (KKT) conditions of Problem \ref{prob:multiobj_WIPTsepuser3} by introducing the Lagrangian function. The Lagrangian function is given by
\begin{eqnarray} \label{eqn:lagrangian_moop3}
&&{\cal L}\big(\mathbf{u},\lambda,\theta,\tau,\mu,\nu,\kappa_1,\kappa_2,\kappa_3,\zeta\big)\\
&=&\tau+\mu\big(\lambda\sum_{i=1}^{N_\mathrm{T}}\left|u_i\right|^2-\theta P_{\mathrm{max}}\big)+\nu\big(\frac{\lambda\sum_{i=1}^{N_\mathrm{T}}\left|u_i\right|^2}{\xi}+\theta(N_\mathrm{T}P_\mathrm{ant}+P_\mathrm{c})-1\big)-\zeta\theta\notag\\
&+&\kappa_1\Big[\omega_1\big(1-\frac{\theta}{\Phi_\mathrm{IR}^*}\log_2(1+\frac{\lambda\left|\sum_{i=1}^{N_\mathrm{T}}u_i^*h_i\right|^2}{\theta\sigma_\mathrm{I}^2})\big)-\tau\Big]\notag\\
&+&\kappa_2\Big[\omega_2\big(1-\frac{\eta}{\Phi_\mathrm{EH}^*}\lambda\left|\sum_{i=1}^{N_\mathrm{T}}u_i^*g_i\right|^2\big)-\tau\Big]+\kappa_3\Big[\omega_3\frac{\xi}{P_\mathrm{max}}(\frac{1}{\theta}-N_\mathrm{T}P_\mathrm{ant}-P_\mathrm{c})-\tau\Big],\notag
\end{eqnarray}
where $\mu,\nu,\kappa_1,\kappa_2,\kappa_3,\zeta$ are dual variables associated with the corresponding constraints, respectively. Constraint C2 is captured in the solution when deriving KKT conditions in the following. Since Problem \ref{prob:multiobj_WIPTsepuser3} satisfies Slater's constraint qualification and is convex with respect to the optimization variables, strong duality holds. Then, based on KKT optimality conditions, the gradient of Lagrangian function with respect to $u_i$, the element of $\mathbf{u}$, vanishes, from which we can result in\\
\begin{eqnarray}\label{eqn:w_i}
u_i&=&\omega_1ah_i+\omega_2bg_i,\\
\mathrm{where}\quad a&=&\frac{\kappa_1\theta\lambda\big(\sum_{i=1}^{N_\mathrm{T}}u_i^*h_i\big)^*}{\Phi_\mathrm{IR}^*(\mu+\frac{\nu}{\xi})\Big(\theta\sigma_\mathrm{I}^2+\lambda\left|\sum_{i=1}^{N_\mathrm{T}}u_i^*h_i\right|^2\Big)}\notag\\
\mathrm{and}\quad b&=&\frac{\kappa_2\eta\lambda}{\Phi_\mathrm{EH}^*(\mu+\frac{\nu}{\xi})}\Big(\sum_{i=1}^{N_\mathrm{T}}u_i^*g_i\Big)^*.\notag
\end{eqnarray}
Similarly, consider KKT condition with respect to $\lambda$, which is given by
\begin{eqnarray}\label{eqn:lambda}
\lambda=\theta\Bigg[\frac{\kappa_1\omega_1/\Phi_\mathrm{IR}^*}{ln(2)\Big(\mu+\frac{\nu}{\xi}-\frac{\kappa_2\eta\omega_2}{\Phi_\mathrm{EH}^*}\left|\sum_{i=1}^{N_\mathrm{T}}u_i^*g_i\right|^2\Big)}-\frac{\sigma_\mathrm{I}^2}{\left|\sum_{i=1}^{N_\mathrm{T}}u_i^*h_i\right|^2}\Bigg]^+.
\end{eqnarray}

As we can see, (\ref{eqn:w_i}) and (\ref{eqn:lambda}) imply the structure of beamforming vector $\mathbf{w}_\mathrm{I}$ by considering $\mathbf{w}_\mathrm{I}=\sqrt{\frac{\lambda}{\theta}}\mathbf{u}$. In particular, (\ref{eqn:w_i}) indicates the direction of the information signal. (\ref{eqn:lambda}) shows that the power allocation for the information signal follows the policy of water-filling solution. For specific case, when IR-EE is considered and EH-EE is discarded, i.e., $\omega_1\neq0$ and $\omega_2=0$, information beamforming vector $\mathbf{w}_\mathrm{I}$ is aligning to the direction of channel vector $\mathbf{h}$ according to (\ref{eqn:w_i}). Since $\sum_{i=1}^{N_\mathrm{T}}\left|u_i\right|^2=1$, we obtain
\begin{eqnarray}
\mathbf{w}_\mathrm{I}=\sqrt{p}\frac{\mathbf{h}}{\norm{\mathbf{h}}},\,\,\,\text{where}\,\,\,p=\Bigg[\frac{\kappa_1\omega_1/\Phi_\mathrm{IR}^*}{\ln(2)(\mu+\nu/\xi)}-\frac{\sigma_\mathrm{I}^2}{\norm{\mathbf{h}}^2}\Bigg]^+.
\end{eqnarray}

On the other hand, when IR-EE is not taken into account and EH-EE is maximized, i.e., $\omega_1=0$ and $\omega_2\neq0$, the beamforming vector $\mathbf{w}_\mathrm{I}$ directs to the energy harvester by following the direction of channel vector $\mathbf{g}$ as (\ref{eqn:w_i}) indicates. Especially, Problem \ref{prob:multiobj_WIPTsepuser3} becomes a linear programming with respect to $\lambda$. In extreme case, if transmit power minimization is not considered either, i.e., $\omega_3=0$, we solve a single-objective problem for EH-EE maximization. Then, the optimal solution is given as
\begin{eqnarray}
\mathbf{w}_\mathrm{I}=\sqrt{P_\mathrm{max}}\frac{\mathbf{g}}{\norm{\mathbf{g}}},\,\,\,\Phi_\mathrm{EH}^*=\frac{\eta P_\mathrm{max}\norm{\mathbf{g}}^2}{\frac{P_\mathrm{max}}{\xi}+N_\mathrm{T}P_\mathrm{ant}+P_\mathrm{c}}.
\end{eqnarray}

Furthermore, when both IR-EE maximization and EH-EE maximization are active objectives, i.e., $\omega_1\neq0$ and $\omega_2\neq0$, $\mathbf{w}_\mathrm{I}$ is designed as a dual use beamforming vector for simultaneous information delivery and power transfer. (\ref{eqn:w_i}) shows that it incorporates the directions of both channel vectors $\mathbf{h}$ and $\mathbf{g}$.
\section{Results}
In this section, we present simulation results to demonstrate the system performance of multi-objective system design. The simulation parameters are summarized in Table \ref{table:parameters}. In particular, we adopt the TGn path loss model \cite{report:tgn}. The multipath fading is modeled as Rician fading with Rician factor $3$ dB. Assume the carrier center frequency as $470$ MHz with bandwidth $200$ kHz. At the transmitter, we set the dynamic power consumption $P_\mathrm{ant}=75$ mWatt per antenna, static circuit power consumption $P_\mathrm{c}=1$ Watt, and the power amplifier efficiency $\xi=0.4$. The maximum transmit power is given as $P_\mathrm{max}=1$ Watt. Two receivers, namely, information receiver and energy harvester, are uniformly located between the reference distance 1 meters and the maximum service distance $10$ meters. Each receiver is equipped with a single antenna with antenna gain 10 dBi. Assume the noise covariances at the information receiver and the energy harvester are the same, i.e., $\sigma_\mathrm{I}^2=\sigma_\mathrm{E}^2=\sigma^2$. We set $\sigma^2=-47$ dBm which includes thermal noise at a temperature of $290$ Kelvin and signal processing noise. The signal processing noise is caused by a $12$-bit uniform quantizer employed in the analog-to-digital converter at the analog front-end of each receiver. At the energy harvester, the energy conversion efficiency for converting RF energy to electrical energy is $\eta=0.8$. In this setting, multiple channel realizations are simulated, where both pass loss and multipath fading effects are taken into account.
\begin{table}[htb]
\caption{Simulation Parameters} \label{table:parameters}
\centering
\begin{tabular}{ | l | l | } \hline
      Carrier center frequency                           & 470 MHz\\ \hline
      Bandwidth                                          & ${\cal B}=200$ kHz \\ \hline 
      Single antenna power consumption                   & $P_\mathrm{ant}=75$ mW \\ \hline
      Static circuit power consumption                           & $P_\mathrm{c}=1$ W \\ \hline
      Power amplifier efficiency                         & $\xi=0.4$ \\ \hline
      Antennas gain                                     & 10 dBi \\ \hline
      Noise power                                        & $\sigma^2= -47$ dBm \\ \hline
      Rician factor                                      & 3 dB \\ \hline
      Reference distance                                 & 1 meters \\ \hline
      Maximum service distance                           & 10 meters \\ \hline
      Energy conversion efficiency                       & $\eta=0.8$ \\ \hline
\end{tabular}
\end{table}

\begin{figure}[htb]
        \centering
        \includegraphics[scale=0.9]{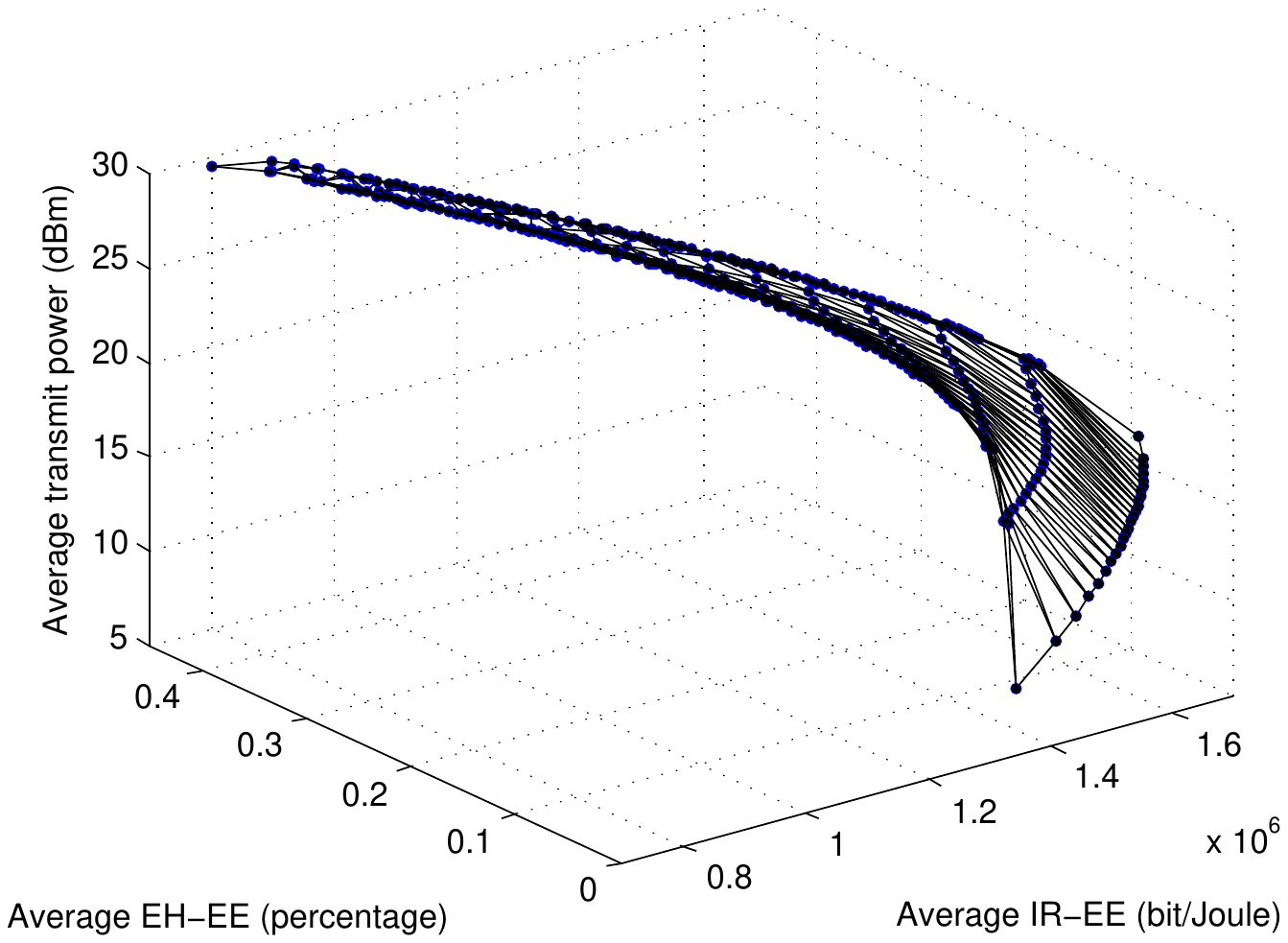}
        \caption{System performance trade-off region between IR-EE, EH-EE, and transmit power.}
        \label{fig:sepuser_3D_1}
        \vspace*{5mm}
        \includegraphics[scale=0.9]{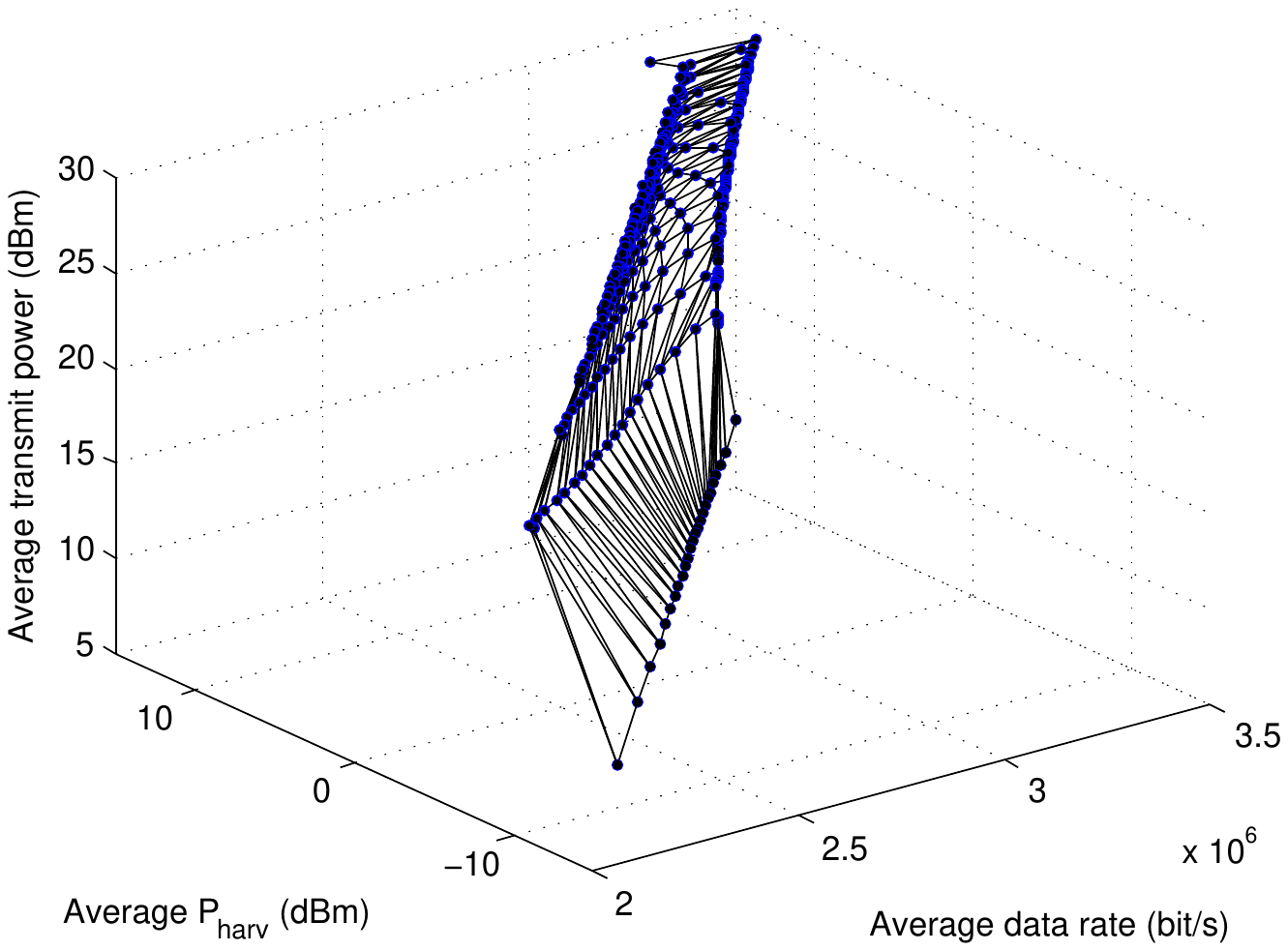}
        \caption{System performance trade-off region between achievable rate, harvested energy, and transmit power.}
        \label{fig:sepuser_3D_2}
\end{figure}

In the following, we show the trade-off region between multiple system objectives from two aspects. In one aspect, we examine the trade-off between the average IR-EE, EH-EE, and transmit power in terms of system EE, which is shown in Figures \ref{fig:sepuser_3D_1}, \ref{fig:2d_IRee_EHee}, \ref{fig:2d_IRee_Pt}, \ref{fig:2d_EHee_Pt}, \ref{fig:sepuser_IRee_EHee}, \ref{fig:sepuser_IRee_Ptrans}, and \ref{fig:sepuser_EHee_Ptrans}. On the other hand, from the aspect of system throughput, the trade-off between average achievable rate, harvested energy, and transmit power is illustrated in Figures \ref{fig:sepuser_3D_2}, \ref{fig:sepuser_Rate_Pharv}, \ref{fig:sepuser_Rate_Ptrans}, and \ref{fig:sepuser_Pharv_Ptrans}. For comparison, we also propose a baseline scheme, where MOOP of achievable rate maximization, harvested energy maximization, and transmit power minimization is solved. The system performance are compared between the proposed EE algorithm and the baseline scheme in Figure \ref{fig:sepuser_IRee_EHee}-- Figure \ref{fig:sepuser_Pharv_Ptrans}.

Figures \ref{fig:sepuser_3D_1} and \ref{fig:sepuser_3D_2} give the 3-dimension trade-off regions of the system energy efficiency and system throughput, respectively, for 8 transmitting antennas. The 3-dimension trade-off regions are obtained by solving the MOOP \ref{prob:multiobj_WIPTsepuser} with different sets of weights on the system design objectives. Specifically, the points consisting of the regions are calculated out by uniformly varying the weight $\omega_j$ with a step size of $0.04$ such that $\sum_j\omega_j=1$. It can be observed in Figure \ref{fig:sepuser_3D_1} that the trade-off region between IR-EE, EH-EE, and transmit power is formed by the points gradually spreading from the right bottom corner to the left top corner. In particular, both IR-EE and EH-EE grow rapidly for small transmit power. When the transmit power is high, EH-EE continues increasing, however, IR-EE declines. On the other hand, Figure \ref{fig:sepuser_3D_2} illustrates that high transmit power supports the increment of both the achievable rate and harvested energy.

\begin{figure}[htb]
        \centering
        \includegraphics[scale=0.9]{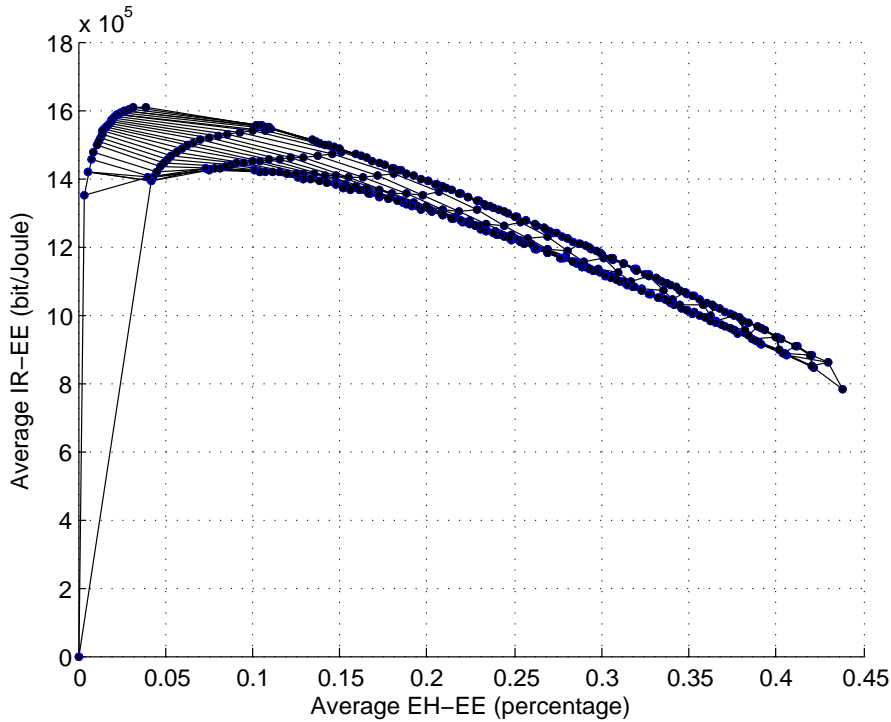}
        \caption{Trade-off region between IR-EE and EH-EE.}
        \label{fig:2d_IRee_EHee}
\end{figure}
\begin{figure}[htb]
        \centering
        \includegraphics[scale=0.9]{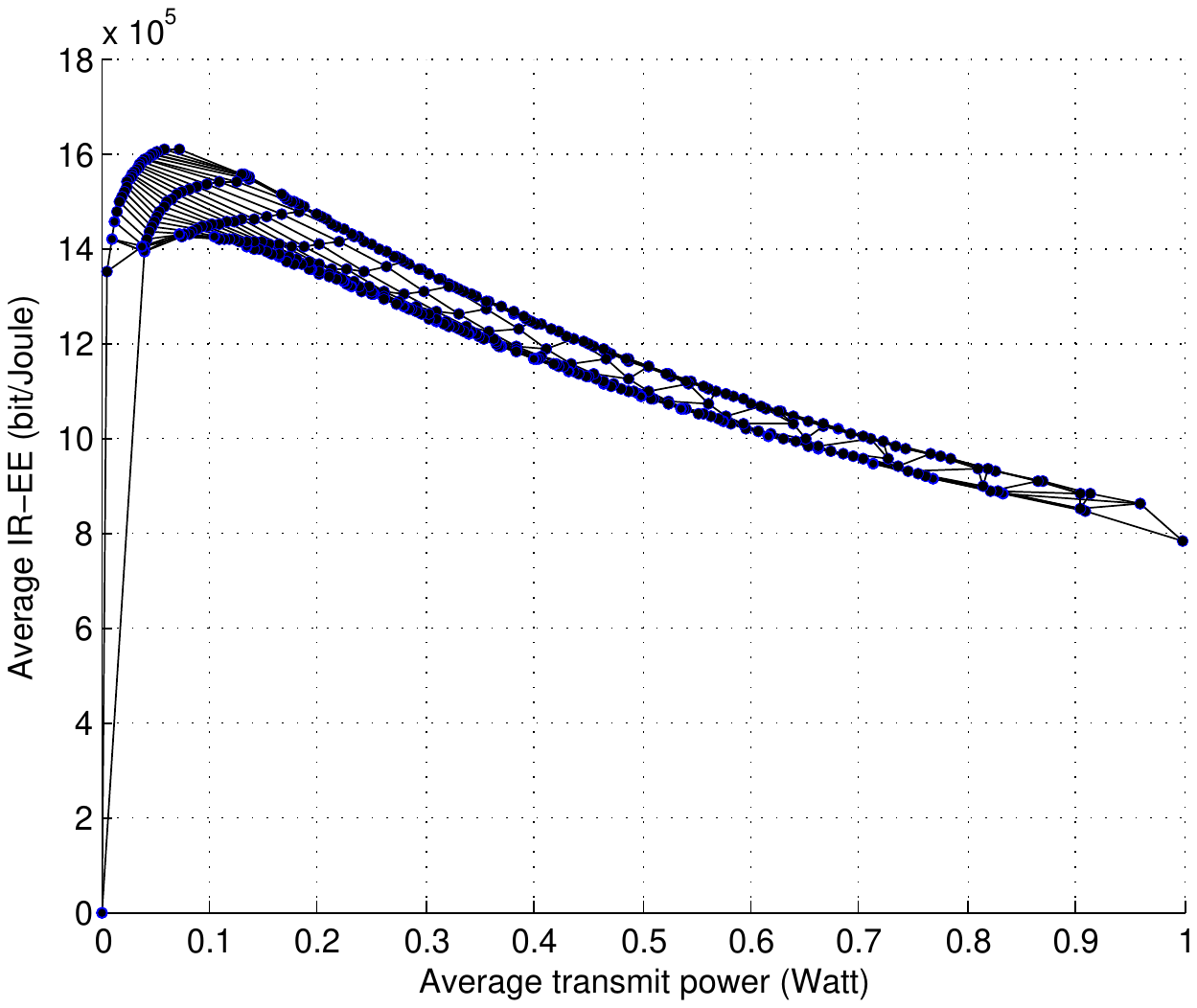}
        \caption{Trade-off region between IR-EE and transmit power.}
        \label{fig:2d_IRee_Pt}
        \vspace*{5mm}
        \includegraphics[scale=0.9]{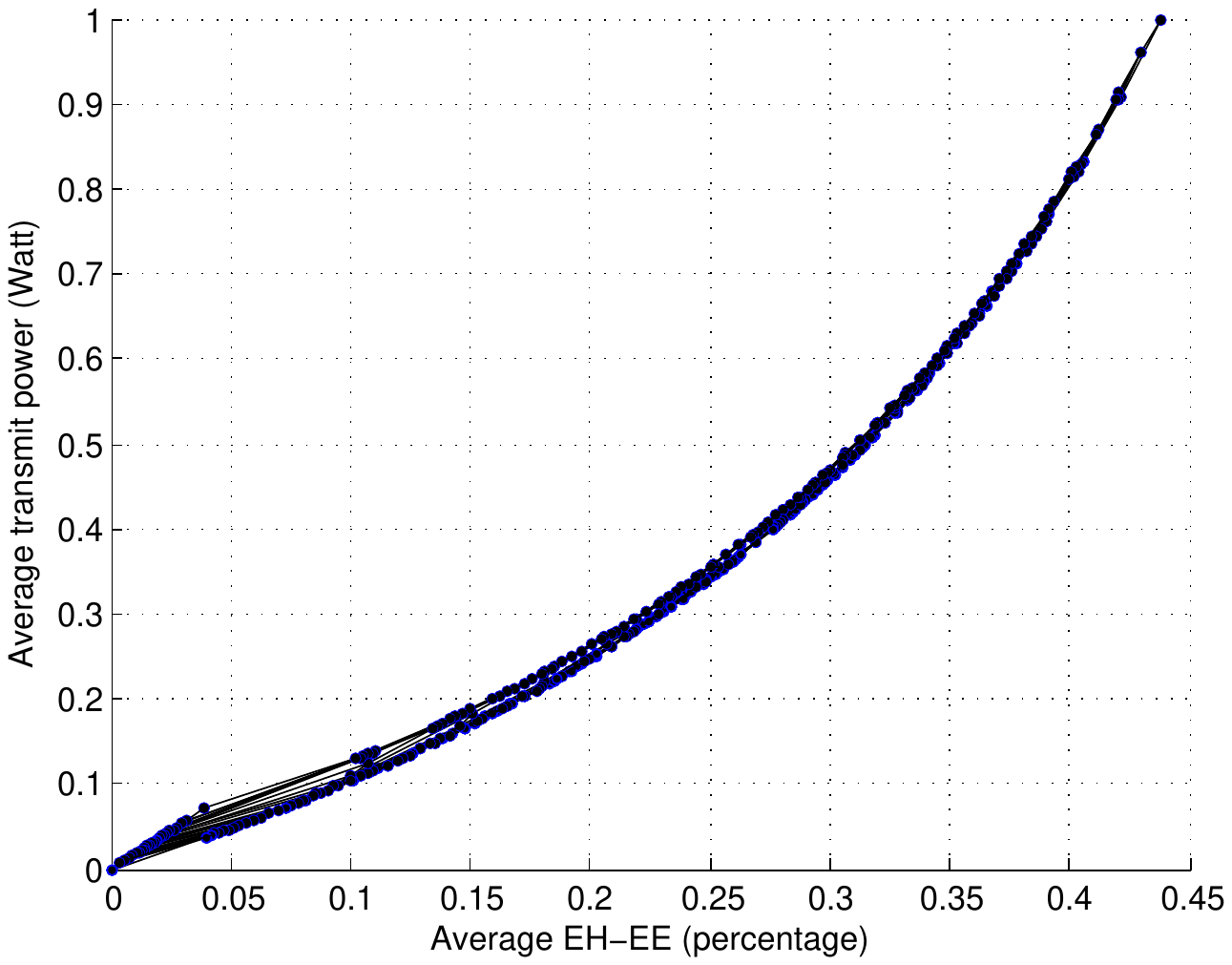}
        \caption{Trade-off region between EH-EE and transmit power.}
        \label{fig:2d_EHee_Pt}
\end{figure}

In addition, for a better illustration, we also provide different side-views of the 2-dimension trade-off region in Figures \ref{fig:2d_IRee_EHee}, \ref{fig:2d_IRee_Pt}, and \ref{fig:2d_EHee_Pt} for revealing the trade-offs between different pairs of objective functions. Figure \ref{fig:2d_IRee_EHee} shows the trade-off between IR-EE and EH-EE. Figure \ref{fig:2d_IRee_Pt} shows the trade-off between IR-EE and transmit power. Figure \ref{fig:2d_EHee_Pt} shows the trade-off between EH-EE and transmit power. It can be observed from these figures that IR-EE and EH-EE are partially aligned with each other for small transmit power. In particular, IR-EE and EH-EE both increase rapidly when transmit power grows from zero. However, IR-EE reduces dramatically in the high transmit power regime which results in bell-shaped curves as shown in Figures \ref{fig:2d_IRee_EHee} and \ref{fig:2d_IRee_Pt}. This diminishing return of IR-EE is due to the slow logarithmical growth of the achievable rate in the high transmit power regime while the transmit power is linearly increasing. In contrast, EH-EE is monotonically increasing with the increasing transmit power as shown in Figure \ref{fig:2d_EHee_Pt}. In other words, more energy is carried by the transmit signal, higher EH-EE can be achieved. This is thanks to the linear relationship between the harvested energy and the transmit power.

Notably, the trade-off region in Figure \ref{fig:2d_IRee_Pt} is non-convex. In other words, the proposed multi-objective system design algorithm is able to obtain the non-convex feasible region, despite of the non-convexity of the MOOP. Besides, the three extreme points in Figure \ref{fig:2d_IRee_EHee} correspond to the three single-objective functions, respectively. The zero point for IR-EE and EH-EE in Figure \ref{fig:2d_IRee_EHee} corresponds to the zero transmit power in Figures \ref{fig:2d_IRee_Pt} and \ref{fig:2d_EHee_Pt}, which represents the minimum transmit power. It is the optimal value of single-objective Problem \ref{prob:WIPT_Ptrans} which can also be obtained by solving the MOOP with $\omega_3=1$. The second extreme point in the middle of the curve in Figure \ref{fig:2d_IRee_EHee} is the maximum IR-EE, i.e., the optimal value of single-objective Problem \ref{prob:WIPT_IR-EE}, which can also be obtained by solving the MOOP with $\omega_1=1$. The third extreme point at the tail in Figure \ref{fig:2d_IRee_EHee} demonstrates the maximum EH-EE, which is the optimal value of single-objective Problem \ref{prob:WIPT_EH-EE}. It can also result from the MOOP with $\omega_2=1$.

In Figures \ref{fig:sepuser_IRee_EHee} and \ref{fig:sepuser_Rate_Pharv}, the average IR-EE versus the average EH-EE, and the average achievable rate versus the average harvested energy are showed, respectively. These curves are obtained by solving the MOOP for $\omega_3=0$ and $0\leq\omega_j\leq1, j=1,2$, where the value of $\omega_j$ is uniformly varied with a step size of $0.01$ such that $\sum_j\omega_j=1$. Figure \ref{fig:sepuser_IRee_EHee} shows the trade-off between IR-EE and EH-EE when the objective of transmit power minimization is not considered. We can see that IR-EE is monotonically decreasing as EH-EE increasing, since the objective preference shifts from IR-EE to EH-EE, i.e., $\omega_1$ decreases and $\omega_2$ increases. Interestingly, we have a distinct dropping point at the tail of the curve corresponding to $\omega_1=0$ and $\omega_2=1$. This point indicates the solution of the single-objective problem of EH-EE maximization. Based on Theorem \ref{thm:rankone} and Appendix \ref{app:rankone}, we have $\Rank(\mathbf{W}_\mathrm{E})=1$ at this point instead of $\mathbf{W}_\mathrm{E}=\mathbf{0}$ in other points. In other words, the energy signal occupies a part of the total available power. Thus, IR-EE drops due to a smaller power allocation on information signal. Moreover, compared to the baseline scheme, it is obvious that IR-EE of the proposed EE algorithm achieves a significant gain. Besides, when the number of transmitting antenna is increased from $N_\mathrm{T}=4$ to $N_\mathrm{T}=8$, the trade-off region is enlarged in both EE algorithm and the baseline scheme. Since extra degrees of freedom offered by more transmitting antennas can be exploited. Thus, the system performance on EE is improved. However, IR-EE for $N_\mathrm{T}=8$ is smaller than that for $N_\mathrm{T}=4$ in the low transmit power regime. This can be explained that with small transmit power, the achievable rates for both $N_\mathrm{T}=8$ and $N_\mathrm{T}=4$ are quite small. But the total power consumption is large for $N_\mathrm{T}=8$ due to a high antenna power dissipation, which result in a smaller IR-EE.

\begin{figure}[htb]
\begin{minipage}[t]{0.5\textwidth}
        \centering
        \includegraphics[width=\textwidth,height=\textwidth]{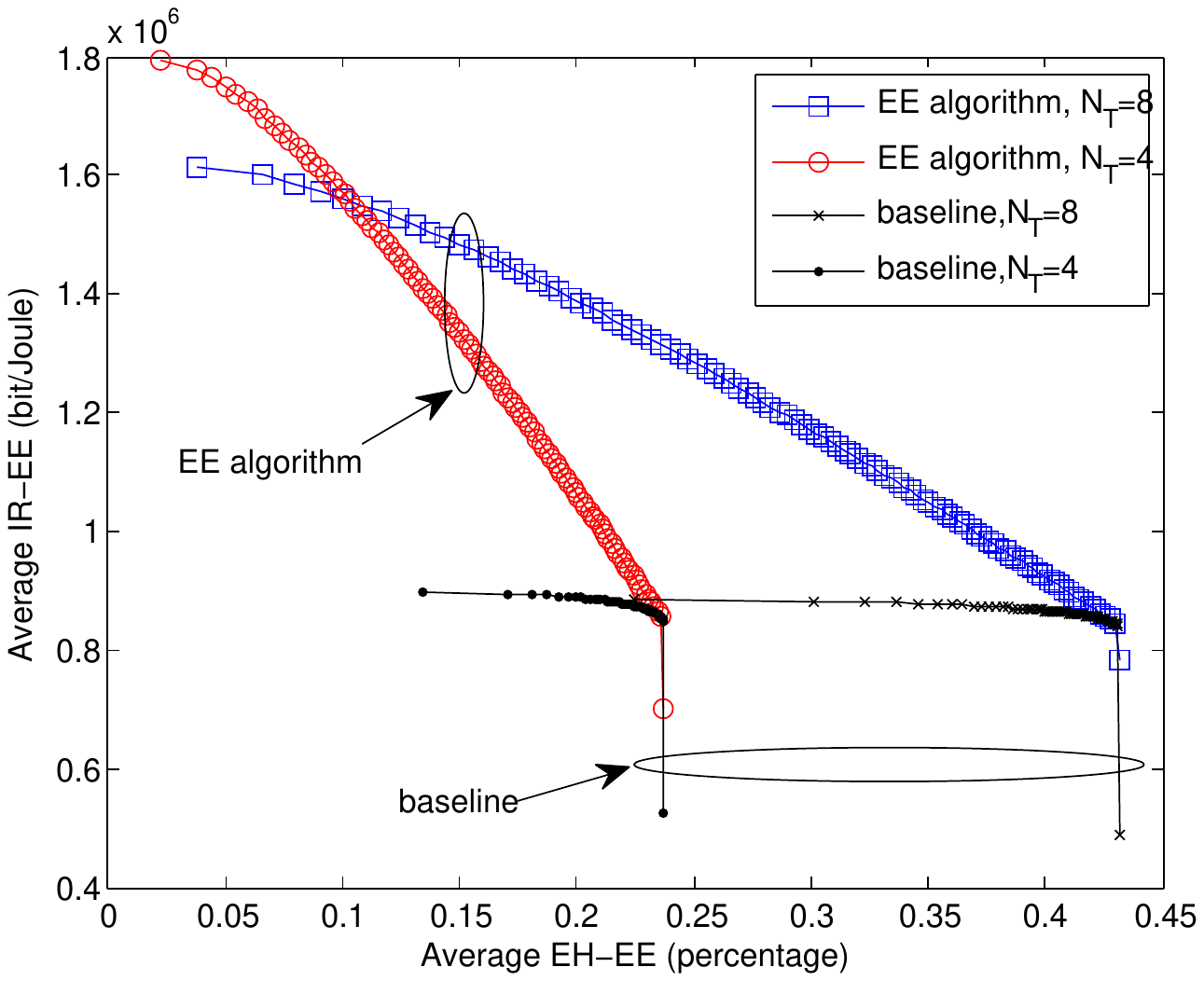}
        \caption{Average IR-EE versus \\average EH-EE.}
        \label{fig:sepuser_IRee_EHee}
\end{minipage}
\begin{minipage}[t]{0.5\textwidth}
\centering
        \includegraphics[width=\textwidth,height=\textwidth]{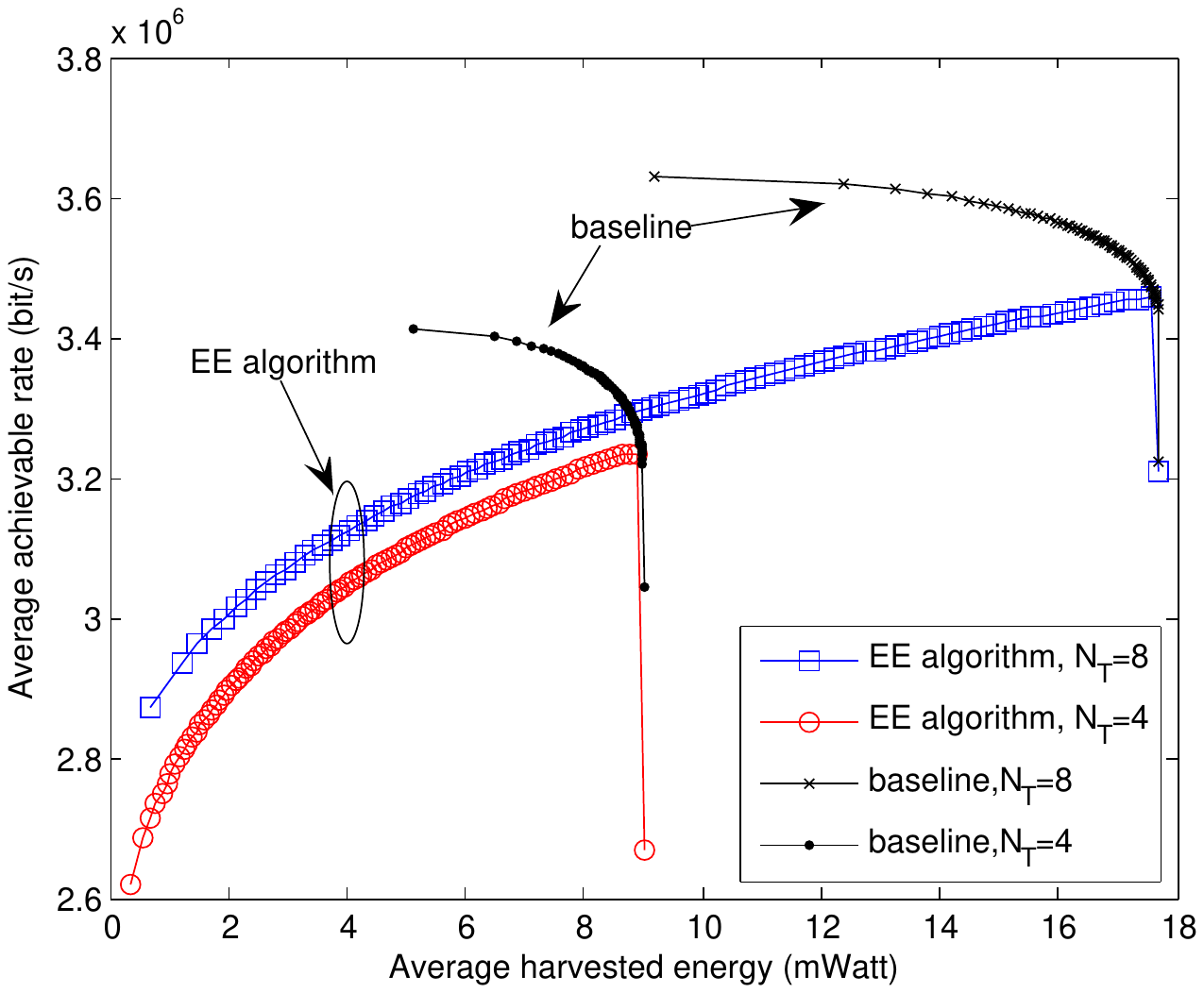}
        \caption{Average achievable rate versus \\average harvested energy.}
        \label{fig:sepuser_Rate_Pharv}
\end{minipage}
\end{figure}

In terms of of system throughput, we see from Figure \ref{fig:sepuser_Rate_Pharv} that there is a consistent trend between the achievable rate and the harvested energy for EE algorithm. This is because as EH-EE is gradually emphasized, the amount of harvested energy grows due an increment of transmit power. On the other hand, according to Theorem \ref{thm:rankone} and Appendix \ref{app:rankone}, $\Rank(\mathbf{W}_\mathrm{I})=1$ and $\mathbf{W}_\mathrm{E}=\mathbf{0}$ when IR-EE and EH-EE are both optimized. It means that the information signal occupies all the available transmit power. Thus, the information signal becomes stronger with the increasing transmit power which brings the improvement of achievable rate. As a result, the alignment between the achievable rate and the harvested energy occurs. In contrast to this consistency in the proposed algorithm, the achievable rate and the harvested energy in the baseline scheme conflict with each other. In particular, high data rate corresponds to low harvested energy, and vice versa. Moreover, it is noted that the curve stretches to the right end with a distinct drooping point as in Figure \ref{fig:sepuser_IRee_EHee}. This is caused by the same reason aforementioned. Besides, when more transmitting antennas are equipped at the transmitter, the system throughput with respect to the achievable rate and the harvested energy are both improved since extra degrees of freedom are utilized.

\begin{figure}[htb]
\begin{minipage}[t]{0.5\textwidth}
        \centering
        \includegraphics[width=\textwidth,height=\textwidth]{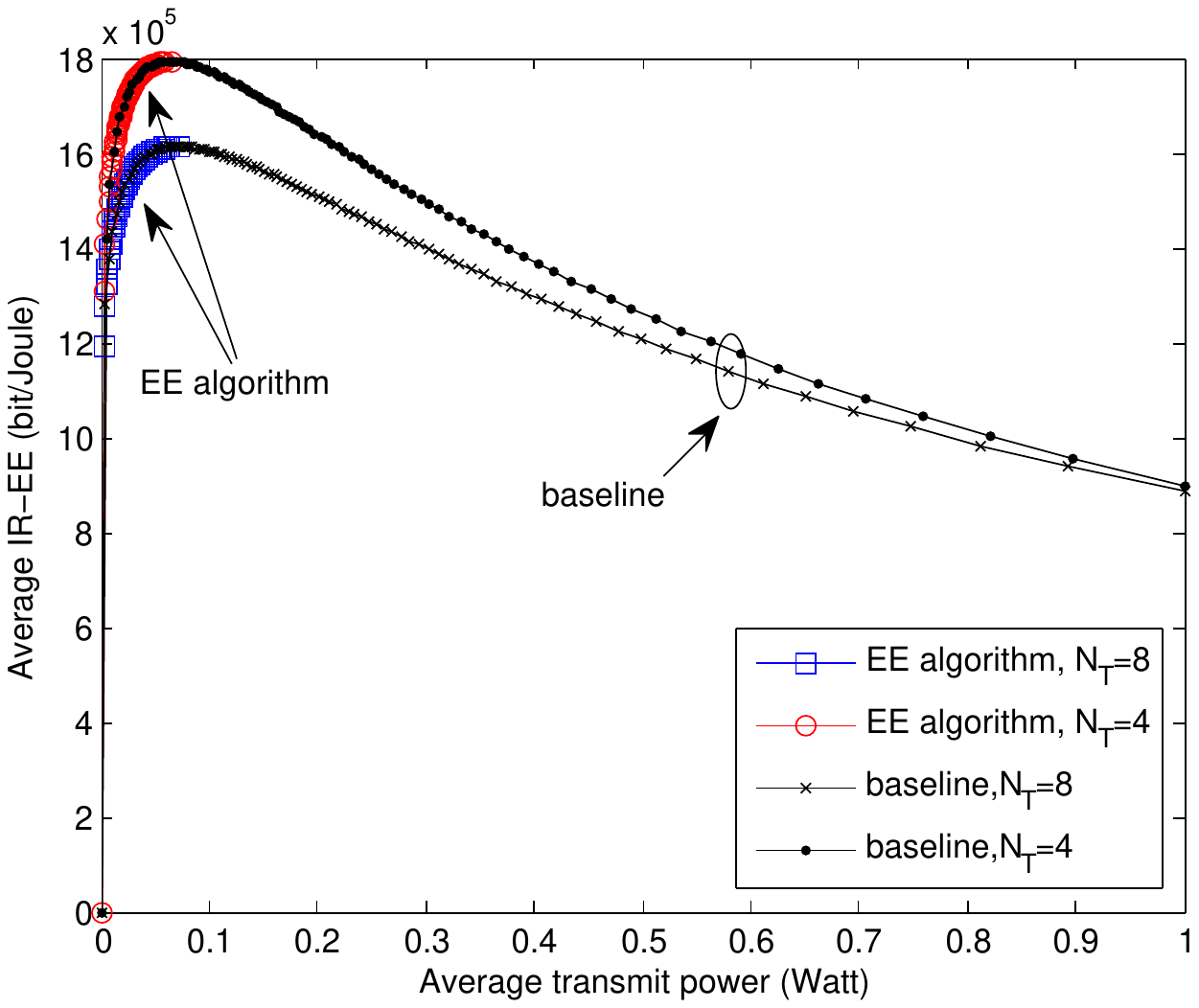}
        \caption{Average IR-EE versus \\average transmit power.}
        \label{fig:sepuser_IRee_Ptrans}
\end{minipage}
\begin{minipage}[t]{0.5\textwidth}
        \centering
        \includegraphics[width=\textwidth,height=\textwidth]{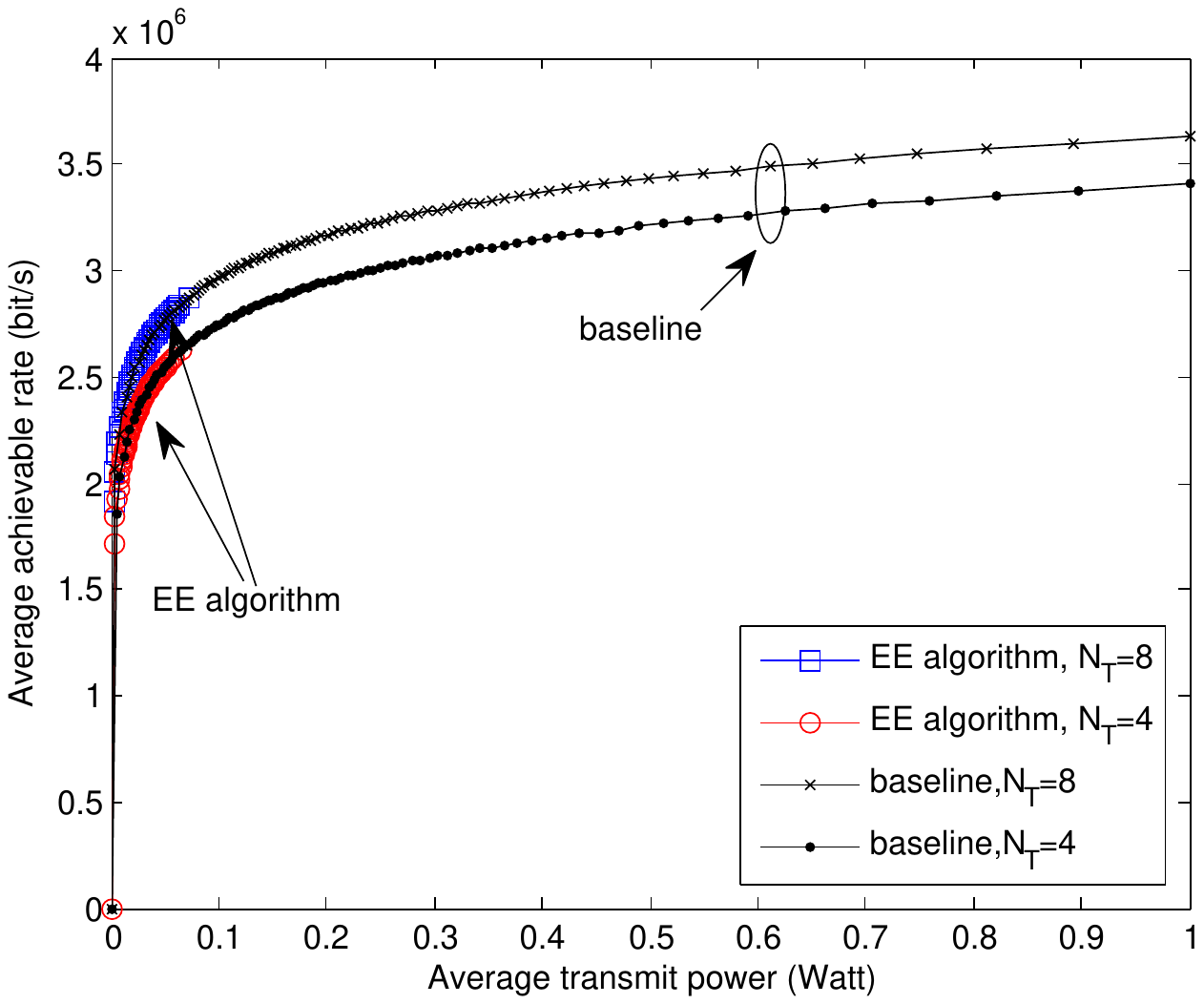}
        \caption{Average achievable rate versus \\average transmit power.}
        \label{fig:sepuser_Rate_Ptrans}
\end{minipage}
\end{figure}

Figures \ref{fig:sepuser_IRee_Ptrans} and \ref{fig:sepuser_Rate_Ptrans} illustrates the average IR-EE versus the average transmit power, and the average achievable rate versus the average transmit power, respectively. The curves are obtained by solving Problem \ref{prob:multiobj_WIPTsepuser} for $\omega_2=0$ and $0\leq\omega_j\leq1, j=1,3$, where the value of $\omega_j$ is uniformly varied with a step size of 0.01 such that $\sum_j\omega_j=1$. Without the objective of EH-EE maximization, the power allocation policy in the proposed EE algorithm is designed for IR-EE maximization and transmit power minimization. For the proposed EE algorithm, we can observe from Figures \ref{fig:sepuser_IRee_Ptrans} and \ref{fig:sepuser_Rate_Ptrans} that for small transmit power, both IR-EE and the achievable rate grow monotonically as the transmit power ascends from zero. Figure \ref{fig:sepuser_IRee_Ptrans} shows that IR-EE approaches to its maximum point at a very small transmit power. Figure \ref{fig:sepuser_Rate_Ptrans} shows the corresponding achievable rate which remains at a low level due to the small transmit power. In contrast, for the baseline scheme, we can see a bell-shaped trend of IR-EE in Figure \ref{fig:sepuser_IRee_Ptrans} and a monotonically ascending trend of the achievable rate in Figure \ref{fig:sepuser_Rate_Ptrans} with the increasing transmit power. In the small transmit power regime, IR-EE and data rate behave similarly as in the EE algorithm. However, in the high transmit power regime, the logarithmical growth of data rate is slower than the linear increment of the transmit power, which leads to energy-inefficient, i.e., IR-EE declines. Besides, when $N_\mathrm{T}=8$, the system performance shows a reduction on IR-EE and an growth on rate compared to the case of 4 transmitting antennas. This is because the achievable rate is improved by exploiting extra degrees of freedom offered by more transmitting antennas. However, this improvement cannot compensate the increment of antenna power consumption. Thus, a lower IR-EE is resulted.

\begin{figure}[htb]
\begin{minipage}[t]{0.5\textwidth}
        \centering
        \includegraphics[width=\textwidth,height=\textwidth]{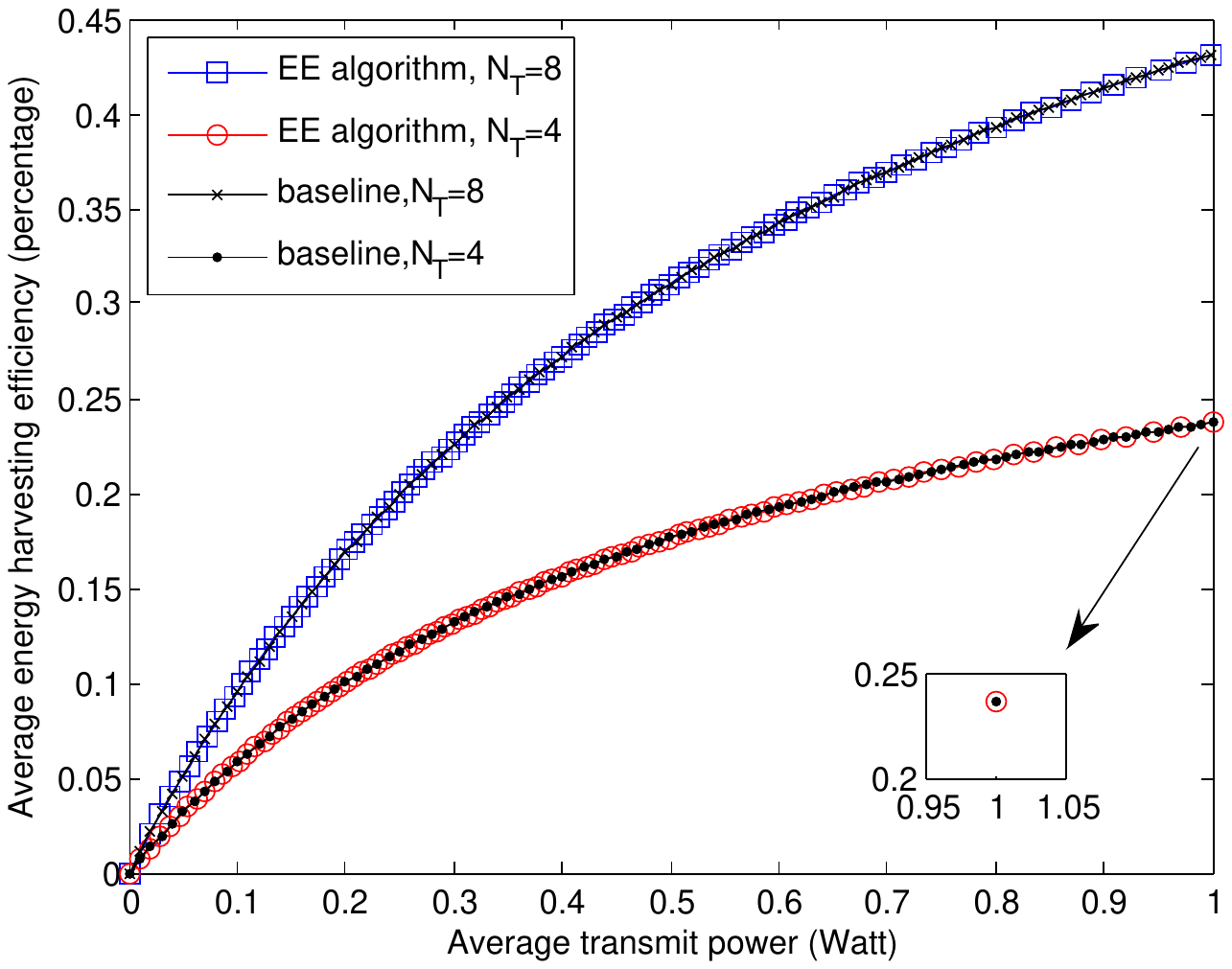}
        \caption{Average EH-EE versus \\average transmit power.}
        \label{fig:sepuser_EHee_Ptrans}
\end{minipage}
\begin{minipage}[t]{0.5\textwidth}
        \centering
        \includegraphics[width=\textwidth,height=\textwidth]{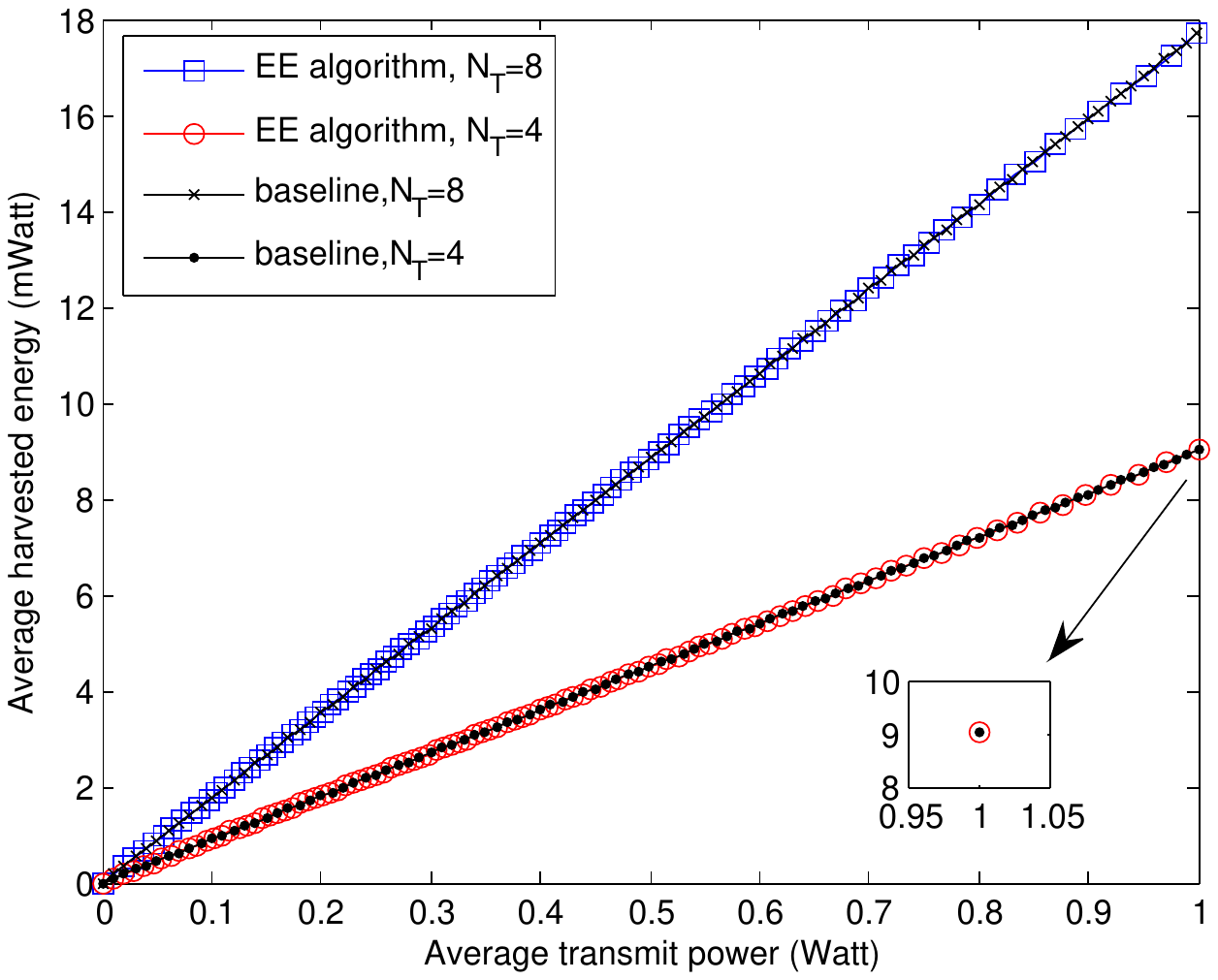}
        \caption{Average harvested energy \\versus average transmit \\power.}
        \label{fig:sepuser_Pharv_Ptrans}
\end{minipage}
\end{figure}

Figures \ref{fig:sepuser_EHee_Ptrans} and \ref{fig:sepuser_Pharv_Ptrans} depict the average EH-EE versus the average transmit power, and average harvested energy versus the average transmit power, respectively. Similarly, the curves are obtained by solving the MOOP for $\omega_1=0$ and $0\leq\omega_j\leq1, j=2,3$, where the value of $\omega_j$ is uniformly varied with a step size of $0.01$ such that $\sum_j\omega_j=1$. With no concerns on the objective of IR-EE maximization, the power allocation policy is designed for EH-EE maximization and transmit power minimization. It can be observed that both EH-EE and harvested energy are growing with the increasing transmit power. Especially, the curves of the proposed EE algorithm and the baseline scheme overlap, which means that the maximal EH-EE and the maximal harvest energy are simultaneously obtained by the same amount of transmit power. This is thanks to the linear relationship between the harvested power and the transmit power. In terms of the comparison for different number of antennas, a better performance on both EH-EE and harvested energy is showed for $8$ transmitting antennas. In particular, for $N_\mathrm{T}=8$, we see that EH-EE increases faster with the increasing transmit power than the case of $N_\mathrm{T}=4$. This implies a more efficient and effective power transfer is achieved by using more transmitting antennas.

\chapter{Power-Efficient SWIPT in Secure Communication Systems}
\label{chap:4_secureWIPT}
In this chapter, we study resource allocation algorithm design for power-efficient SWIPT in secure communication systems with power splitting receivers. In a multiuser system, the transmitter supports SWIPT to the desired receivers and power transfer to roaming receivers. In particular, the roaming receivers are potential eavesdroppers, thus, artificial noise is applied to facilitate secure communication. Under the consideration of system QoS, an optimization problem is formulated for minimizing the total transmit power by jointly optimizing the beamforming vectors, power splitting ratios at the desired receivers, and the power of artificial noise. We proposed a power-efficient resource allocation algorithm which enables the dual use of artificial noise for supplying power transfer and guaranteeing communication security. The non-convex problem is transformed into SDP and solved by SDP relaxation. The global optimum can be achieved. Simulation results illustrate a significant power saving via the proposed algorithm.

\section{System Model}\label{sect:system model}

We consider a multiuser communication system with SWIPT in downlink scenario. The system consists of one transmitter and two types of receivers, namely, desired receivers and roaming receivers, cf. Figure \ref{fig:system_model}. The transmitter equipped with $N_\mathrm{T}$ antennas serves  $K$ desired receivers and $M$ roaming receivers. The desired receivers are single antenna devices with low computational capability. They exploit the received RF signal for both information decoding and energy harvesting. On the other hand, the roaming receivers are wireless terminals with $N_{\mathrm{R}}$ antennas ($N_\mathrm{T}>N_\mathrm{R}$). In particular, they could belong to other communication systems and search for additional RF energy supply. Suppose they temporally connect to the transmitter for energy harvesting from signals radiated from the transmitter\footnote{A possible scenario of the considered system model is a cognitive radio setup. Specifically, the roaming receivers may be primary receivers which harvest energy from a secondary transmitter for extending the lifetime of the primary network.}. However, it is possible that the roaming receivers eavesdrop the information signals deliberately. As a result, the roaming receivers are potential eavesdroppers which should be taken into account for secure communication.
 \begin{figure}[htb]
 \centering
\includegraphics[scale=0.75]{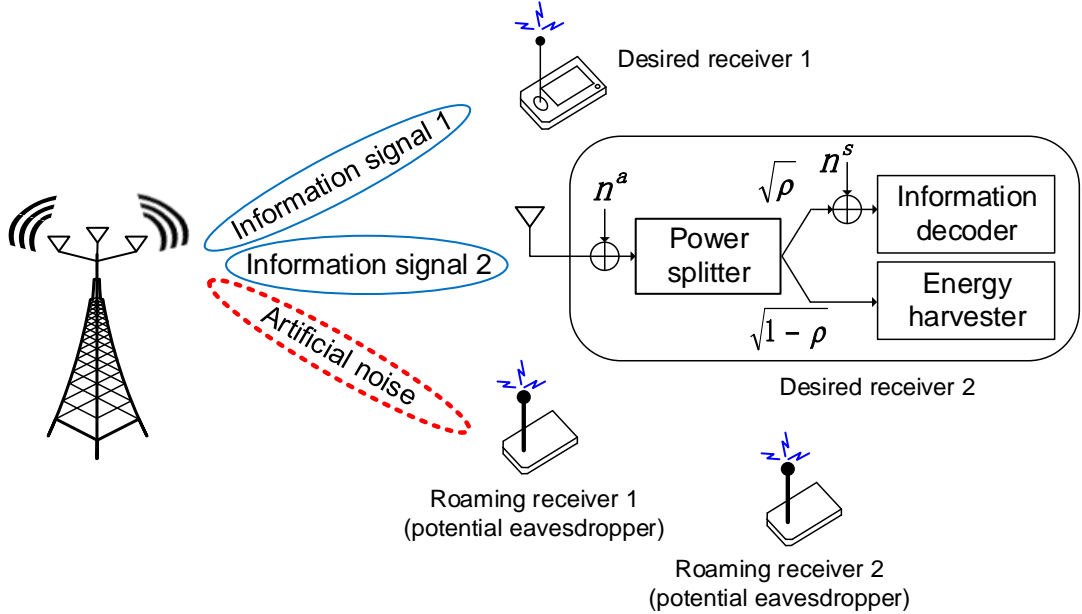}
 \caption{Multiuser SWIPT system with power splitting receivers.}
 \label{fig:system_model}
\end{figure}

We assume time-slot-based transmission. In each scheduling time slot, $K$ independent precoded signal streams are transmitted simultaneously to $K$ desired receivers. Specifically, a dedicated beamforming vector, $\mathbf{w}_k\in\mathbb{C}^{N_{\mathrm{T}}\times1}$, is assigned to each desired receiver to facilitate information transmission.  At the same time,  the  messages intended for the desired receivers may be overheard by roaming receivers since all receivers are in the range of service coverage. In order to provide secure communication, artificial noise is generated and transmitted concurrently to interfere the reception of the roaming receivers. As a result, the transmit signal $\mathbf{x}\in\mathbb{C}^{N_{\mathrm{T}}\times 1}$, is composed of the $K$ desired information signals and the artificial noise, which is given as
\begin{eqnarray}
\mathbf{x}&=&\sum_{k=1}^K\mathbf{w}_k s_k+\mathbf{v},
\end{eqnarray}
where $s_k\in\mathbb{C}$ is the signal to the $k$th desired receiver, $k=1,\dots,K$. Without loss of generality, we assume $\est{{\abs{s_k}}^2}=1,\forall k$. $\mathbf{v}\in\mathbb{C}^{N_{\mathrm{T}}\times1}$ is the artificial noise vector  generated by the transmitter to degrade the quality of the received signal of the potential eavesdroppers. In particular, we model the artificial noise vector as $\mathbf{v}\sim\cal{CN}(\mathbf{0},\mathbf{V})$ with zero mean and covariance matrix $\mathbf{V}=\est{\mathbf{v}\mathbf{v}^H}$.

We focus on a frequency flat fading channel and a TDD system. For the transmitter, perfect CSI of all receivers can be obtained by channel reciprocity and handshaking signals. The received signals at the desired receivers and the roaming receivers are given by
\begin{eqnarray}
y_{k}&=&\mathbf{h}_k^H\mathbf{x}+n_k^{\mathrm{a}},\,\,  \forall k\in\{1,\dots,K\},\\
\mathbf{y}_{\mathrm{I}_m}&=&\mathbf{G}_m^H\mathbf{x}+\mathbf{n}^\mathrm{a}_m,\,\,  \forall m\in\{1,\dots,M\},
\end{eqnarray}
where $\mathbf{h}_k\in\mathbb{C}^{N_{\mathrm{T}}\times1}$ denotes the channel vector between the transmitter and desired receiver $k$. The channel matrix between the transmitter and roaming receiver $m$ is denoted by $\mathbf{G}_m\in\mathbb{C}^{N_{\mathrm{T}}\times N_{\mathrm{R}}}$. $\mathbf{h}_k$ and $\mathbf{G}_m$ capture the joint effects of multipath fading and path loss. $n_k^{\mathrm{a}}\sim{\cal CN}(0,\sigma_{\mathrm{ant}}^2)$ and $\mathbf{n}^\mathrm{a}_m\sim{\cal CN}(\mathbf{0},\sigma_{\mathrm{ant}}^2\mathbf{I}_{N_{\mathrm{R}}})$ are additive white Gaussian noises (AWGN) caused by the thermal noises in the antennas of the desired receivers and the roaming receivers, respectively.

We assume a power splitting structure \cite{JR:WIP_receiver} is adopted in both desired receivers and roaming receivers. In particular, at the receiver RF front-end, the received signal at desired receiver $k$ is divided into two power streams where $100\times\rho_k \%$ are used for decoding information and the remaining $100\times(1-\rho_k)\%$ are used for harvesting energy, cf. Figure \ref{fig:system_model}. Here, $0\leq\rho_k\leq1$ is the power splitting ratio of desired receiver $k$. Similarly, power splitting is also performed at the roaming receivers for energy harvesting and information decoding. We assume that all receivers have enough energy for information decoding at the current time instant which is independent of the amount of harvested energy. The harvested energy is stored in battery and used to support normal operation of the receiver in the future.  Since a portion of received power is reserved for energy harvesting, the equivalent received signal model for information decoding at desired receiver $k$ can be written as
\begin{eqnarray}
y_{k}^{\mathrm{ID}}&=&\sqrt{\rho_k}(\mathbf{h}_k^H\mathbf{x}+n_k^{\mathrm{a}})+n^{\mathrm{s}}_k,\,\,
\end{eqnarray}
where  $n^{\mathrm{s}}_k$ is AWGN with zero mean and variance $\sigma_{\mathrm{s}}^2$ caused by signal processing, cf. Figure \ref{fig:system_model}. We assume that the signal processing noise variances are the same for all receivers in this chapter.

\section{Problem Formulation}
\label{sect:cross-Layer_formulation}
In this section, we first introduce the adopted QoS metrics for SWIPT system design. Then, the power-efficient resource allocation algorithm design is formulated as a non-convex optimization problem.

The achievable rate (bit/s/Hz) between the transmitter and desired receiver $k$ is given by
\begin{eqnarray}
R_{k}\hspace*{-1mm}&=&\hspace*{-1mm}\log_2(1+\Gamma_{k}),\quad\mathrm{where}\\
\Gamma_{k}\hspace*{-1mm}&=&\hspace*{-1mm}\frac{\rho_k\abs{\mathbf{h}_k^H\mathbf{w}_k}^2}{\rho_k\Big(\sum\limits_
{\substack{j\neq k}}^K\abs{\mathbf{h}_k^H\mathbf{w}_j}^2+\Tr(\mathbf{h}_k\mathbf{h}_k^H\mathbf{V})+\sigma_{\mathrm{ant}}^2\Big)+\sigma_{\mathrm{s}}^2}
\end{eqnarray}
is the SINR at desired receiver $k$.

On the other hand, to guarantee communication security, the roaming receivers are treated as potential eavesdroppers who attempt to decode the messages transmitted to desired receivers. Thereby, we consider the worst case scenario that roaming receivers are high computational capable eavesdroppers. We assume that roaming receiver $m$ is able to perform successive interference cancellation (SIC) to remove all multiuser interference before decoding the signal of receiver $k$. Therefore, the data rate between the transmitter and roaming receiver $m$ for decoding the signal of desired receiver $k$ can be represented as
\begin{eqnarray}\label{eqn:Capacity_eve}
\hspace*{-6mm}R^\mathrm{eav}_{m,k}\hspace*{-2mm}&=&\hspace*{-2mm}\log_2\det\Big(\mathbf{I}_{N_{\mathrm{R}}}
\hspace*{-0.5mm}+\hspace*{-0.5mm}
\gmat{\Delta}_{m}^{-1}
\rho^\mathrm{R}_m\mathbf{G}_m^H\mathbf{w}_k\mathbf{w}_k^H\mathbf{G}_m\Big),\quad\mathrm{where}\\
\gmat{\Delta}_{m}\hspace*{-2mm}&=&\hspace*{-2mm}\rho^\mathrm{E}_m\gmat{\Sigma}_{m}+
\sigma_\mathrm{s}^2\mathbf{I}_{N_{\mathrm{R}}},\quad \label{eqn:Sigma} \gmat{\Sigma}_{m}\hspace*{-2mm}\,\,\,=\,\mathbf{G}_m^H\mathbf{V}\mathbf{G}_m+\sigma_{\mathrm{ant}}^2\mathbf{I}_{N_{\mathrm{R}}}.
\end{eqnarray}
$0\le \rho^\mathrm{R}_m\le 1$ is the power splitting ratio of roaming receiver $m$. $\gmat{\Sigma}_{m}$ is the interference-plus-noise covariance matrix for roaming receiver $m$. In practice, the roaming receiver can be malicious and devote all the received energy for information decoding. Thus, the data rate in (\ref{eqn:Capacity_eve}) is bounded above by
\begin{eqnarray}\label{eqn:Capacity_eve_up}
\overline{R^\mathrm{eav}_{m,k}}=\log_2\det\Big(\mathbf{I}_{N_{\mathrm{R}}}\hspace*{-0.5mm}+\hspace*{-0.5mm}(\gmat{\Sigma}_{m}+\sigma_\mathrm{s}^2\mathbf{I}_{N_{\mathrm{R}}})^{-1}
\mathbf{G}_m^H\mathbf{w}_k\mathbf{w}_k^H\mathbf{G}_m\Big)
\end{eqnarray}
which is obtained by setting $\rho^{\mathrm{E}}_m=1$ in  (\ref{eqn:Capacity_eve}).

Consequently, considering this worst case scenario, the maximum achievable secrecy rate of desired receiver $k$ is given by
\begin{eqnarray}\label{eqn:secure_cap}
R^\mathrm{sec}_k&=&\Big[R_{k}-\underset{m=1,\dots,M}{\max}\,\Big\{\overline{R^\mathrm{eav}_{m,k}}\Big\}\Big]^+.
\end{eqnarray}
\begin{Remark}
We note that the results of this work are also applicable  to the case of roaming receivers (potential eavesdroppers) employing single user detectors by modifying the term $\gmat{\Sigma}_{m}$ in (\ref{eqn:Sigma}) accordingly.
\end{Remark}

In terms of power transfer, both the information signal and the artificial noise serve as energy source for the receivers due to wireless broadcasting property. The total harvested energy of desired receiver $k$ is given by
\begin{eqnarray}
E_{k}=\eta(1-\rho_k)\Big(\sum_{j=1}^K\abs{\mathbf{h}_k^H\mathbf{w}_j}^2+\Tr(\mathbf{h}_k\mathbf{h}_k^H\mathbf{V})+\sigma_{\mathrm{ant}}^2\Big),
\end{eqnarray}
where $0\leq\eta\leq1$ is the energy conversion efficiency, indicating the efficiency of converting the received RF energy to electrical energy for storage. We assume that it is a constant and is identical for all receivers.

Similarly, the total amount of energy harvested by roaming receiver $m$ is given by
\begin{eqnarray}
E^\mathrm{R}_m=\eta_m(1-\rho^\mathrm{R}_m)\Big(\sum_{k=1}^K\Tr(\mathbf{G}_m^H\mathbf{w}_k\mathbf{w}_k^H\mathbf{G}_m)+\Tr(\mathbf{G}_m\mathbf{G}_m^H\mathbf{V})+N_{\mathrm{R}}\sigma_{\mathrm{ant}}^2\Big).
\end{eqnarray}

The system design objective is to minimize the total transmit power while providing system QoS on secure communication and power transfer. The power efficient resource allocation algorithm design is formulated as an optimization problem which is given by
\begin{Prob}\label{prob:WIPT_secure}
\begin{eqnarray}
\underset{ \mathbf{V}\in \mathbb{H}^{N_{\mathrm{T}}},\mathbf{w}_k,\rho_k}{\mino}\,\, &&\sum_{k=1}^K\norm{\mathbf{w}_k}^2+\Tr(\mathbf{V})\notag\\
\mathrm{subject\,\,to}\,\,&&\mathrm{C1}:\,\,\Gamma_{k}\ge \Gamma^\mathrm{req}_k,\,\, \forall k, \notag\\
&&\mathrm{C2}:\,\,\overline{R^\mathrm{eav}_{m,k}}\le R^\mathrm{max}_{m,k},\,\, \forall k,\forall m,\notag\\
&&\mathrm{C3}:\,\,E_{k}\ge P^{\mathrm{req1}}_{k},\,\, \forall k,\notag\\
&&\mathrm{C4}:\,\,\eta\Big(\sum_{k=1}^K\Tr(\mathbf{G}_m^H\mathbf{w}_k\mathbf{w}_k^H\mathbf{G}_m)+\Tr(\mathbf{G}_m\mathbf{G}_m^H\mathbf{V})+N_{\mathrm{R}}\sigma_{\mathrm{ant}}^2\Big)\ge P^\mathrm{req2}_m,\,\, \forall m, \notag\\
&&\mathrm{C5}:\,\,0\leq\rho_k\leq1,\,\, \forall k, \notag\\
&&\mathrm{C6}:\,\,\mathbf{V}\succeq \mathbf{0}.\notag
\end{eqnarray}
\end{Prob}
\noindent Constraint C1 indicates that SINR at desired receiver $k$ is required to be larger than a given threshold, $\Gamma^\mathrm{req}_k>0$. Since any desired receiver could  be chosen as an eavesdropping target of roaming receiver $m$,
the upper limit $R^\mathrm{max}_{m,k}$ is imposed in C2 to restrict the eavesdropping rate of roaming receiver $m$ when it  attempts to decode the message of desired receiver $k$.  Notice that in practice we are interested in the case of $R_{k}> R^\mathrm{max}_{m,k}$ for secure communication, which means $R^\mathrm{sec}_k \ge R_{k}-\underset{m=1,\dots,M}{\max}\,\{R^\mathrm{max}_{m,k}\}= \log_2(1+\Gamma^\mathrm{req}_k)-\underset{m=1,\dots,M}{\max}\,\{R^\mathrm{max}_{m,k}\}>0$. In particular, the parameters
$\Gamma^\mathrm{req}_k$ and $R^\mathrm{max}_{m,k}$ can be selected to provide flexibility in designing power-efficient resource allocation algorithms for different applications. Constants $P_k^\mathrm{req1}$ and $P_m^\mathrm{req1}$ in constraints C3 and C4 specify the minimum required harvested energy at desired receiver $k$ and roaming receiver $m$, respectively. The physical meaning of constraint C4 is that the transmitter only guarantees the minimum required harvested power at roaming receiver $m$ when it does not intend to eavesdrop, i.e., $\rho^\mathrm{R}_m=0$. Constraint C5 implies the physical constraint for the power splitter. In addition, we assume that the power splitter is a passive device which does not consume or gain any power when splitting the received signal. Constraint C6 and $\mathbf{V}\in \mathbb{H}^{N_{\mathrm{T}}}$ ensure that the covariance matrix $\mathbf{V}$ is a positive semi-definite Hermitian matrix.

\section{Power-Efficient Resource Allocation Algorithm Design}
\label{sect:solution}
It can be observed that Problem \ref{prob:WIPT_secure} is non-convex due to constraints C1 and C2. To overcome the non-convexity of C2,  we recast Problem \ref{prob:WIPT_secure} as SDP. We first replace $\mathbf{w}_k\mathbf{w}_k^H$ by $\mathbf{W}_k=\mathbf{w}_k\mathbf{w}_k^H$ and rewrite C2 as
\begin{eqnarray}\label{eqn:original_c2}
\mathrm{C2:}&&\det\big(\mathbf{I}_{N_{\mathrm{R}}}+\mathbf{Q}_{m}^{-1}\mathbf{G}_m^H\mathbf{W}_k\mathbf{G}_m\big)\le\psi_{m,k},\, \forall m,k, \quad\text{where}\\
&&\mathbf{Q}_{m}=\mathbf{G}_m^H\mathbf{V}\mathbf{G}_m+
(\sigma_{\mathrm{ant}}^2+\sigma_{\mathrm{s}}^2)\mathbf{I}_{N_{\mathrm{R}}}\succ\mathbf{0}.\notag
\end{eqnarray}
$\psi_{m,k}$ is an auxiliary constant that $\psi_{m,k}=2^{R^\mathrm{max}_{m,k}}$,  $\psi_{m,k}> 1$
for $R^\mathrm{max}_{m,k}> 0$.  Then, we introduce the following proposition to simplify the considered problem.
\begin{proposition}\label{prop:relaxed_c2}
For $R^\mathrm{max}_{m,k}>0,\forall m,k$, the following implication on constraint C2 holds:
\begin{eqnarray}\label{eqn:det_to_matrix}\notag
\mathrm{C2}\Rightarrow\overline{\mathrm{C2}}\,\,\mathrm{: }\quad \mathbf{G}_m^H\mathbf{W}_k\mathbf{G}_m\preceq (\psi_{m,k}-1)\mathbf{Q}_{m},\,\, \forall m,k.
\end{eqnarray}
\end{proposition}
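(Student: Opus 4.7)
My plan is to exploit the rank-one structure of $\mathbf{W}_k=\mathbf{w}_k\mathbf{w}_k^H$, which makes $\mathbf{G}_m^H\mathbf{W}_k\mathbf{G}_m = (\mathbf{G}_m^H\mathbf{w}_k)(\mathbf{G}_m^H\mathbf{w}_k)^H$ positive semi-definite of rank at most one. Since $\mathbf{Q}_m\succeq(\sigma_\mathrm{ant}^2+\sigma_\mathrm{s}^2)\mathbf{I}_{N_\mathrm{R}}\succ\mathbf{0}$, both $\mathbf{Q}_m^{-1}$ and $\mathbf{Q}_m^{-1/2}$ are well defined, which is needed in the second step.

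The first step is to collapse the determinant in $\mathrm{C2}$ to a scalar constraint. Applying the matrix determinant lemma (equivalently, Sylvester's identity) to the rank-one perturbation gives
\begin{equation}
\det\bigl(\mathbf{I}_{N_\mathrm{R}}+\mathbf{Q}_m^{-1}\mathbf{G}_m^H\mathbf{W}_k\mathbf{G}_m\bigr)=1+\mathbf{w}_k^H\mathbf{G}_m\mathbf{Q}_m^{-1}\mathbf{G}_m^H\mathbf{w}_k, \notag
\end{equation}
so $\mathrm{C2}$ reduces to the single scalar inequality $\mathbf{w}_k^H\mathbf{G}_m\mathbf{Q}_m^{-1}\mathbf{G}_m^H\mathbf{w}_k\le\psi_{m,k}-1$.

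The second step is to promote this scalar bound to the matrix inequality $\overline{\mathrm{C2}}$. For any $\mathbf{x}\in\mathbb{C}^{N_\mathrm{R}}$, I rewrite the associated quadratic form as $\mathbf{x}^H\mathbf{G}_m^H\mathbf{W}_k\mathbf{G}_m\mathbf{x}=\abs{\mathbf{w}_k^H\mathbf{G}_m\mathbf{x}}^2$ and split the factors as $\mathbf{w}_k^H\mathbf{G}_m\mathbf{x}=(\mathbf{Q}_m^{-1/2}\mathbf{G}_m^H\mathbf{w}_k)^H(\mathbf{Q}_m^{1/2}\mathbf{x})$. The Cauchy--Schwarz inequality then yields
\begin{equation}
\abs{\mathbf{w}_k^H\mathbf{G}_m\mathbf{x}}^2\le\bigl(\mathbf{w}_k^H\mathbf{G}_m\mathbf{Q}_m^{-1}\mathbf{G}_m^H\mathbf{w}_k\bigr)\bigl(\mathbf{x}^H\mathbf{Q}_m\mathbf{x}\bigr)\le(\psi_{m,k}-1)\,\mathbf{x}^H\mathbf{Q}_m\mathbf{x}, \notag
\end{equation}
where the last step uses the scalar bound from the first step. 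Since this is valid for every $\mathbf{x}$, we conclude $\mathbf{G}_m^H\mathbf{W}_k\mathbf{G}_m\preceq(\psi_{m,k}-1)\mathbf{Q}_m$, which is $\overline{\mathrm{C2}}$.

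I do not expect any serious obstacle: the rank-one structure of $\mathbf{W}_k$ collapses the determinant to a scalar, and the Cauchy--Schwarz step converting the scalar bound to the PSD inequality is standard. The only subtlety worth highlighting is the strict positive definiteness of $\mathbf{Q}_m$, which is guaranteed by the additive thermal and signal-processing noise floor $\sigma_\mathrm{ant}^2+\sigma_\mathrm{s}^2>0$; without it neither the inverse nor the square root invoked above would exist, and the first reduction would break down.
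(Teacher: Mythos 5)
Your argument is correct, but only under the additional hypothesis $\Rank(\mathbf{W}_k)=1$, and that restriction is where it diverges from the paper. The paper proves the implication for an \emph{arbitrary} positive semi-definite $\mathbf{W}_k$: after the cyclic identity $\det(\mathbf{I}+\mathbf{A}\mathbf{B})=\det(\mathbf{I}+\mathbf{B}\mathbf{A})$ it invokes the inequality $\det(\mathbf{I}+\mathbf{A})\ge 1+\Tr(\mathbf{A})$ for the whitened matrix $\mathbf{A}=\mathbf{Q}_m^{-1/2}\mathbf{G}_m^H\mathbf{W}_k\mathbf{G}_m\mathbf{Q}_m^{-1/2}\succeq\mathbf{0}$, follows with $\Tr(\mathbf{A})\ge\lambda_{\max}(\mathbf{A})$ to get $\mathbf{A}\preceq(\psi_{m,k}-1)\mathbf{I}_{N_{\mathrm{R}}}$, and then undoes the whitening. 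Your first step replaces that inequality by its equality case (the matrix determinant lemma), which is valid precisely because $\mathbf{G}_m^H\mathbf{w}_k\mathbf{w}_k^H\mathbf{G}_m$ has rank one and is false for $\Rank(\mathbf{W}_k)>1$. The general-rank implication is what the proposition actually asserts: its companion sentence reserves the word ``equivalent'' for the rank-one case, which would be redundant if the forward implication were itself confined to rank one. In mitigation, the implication is only ever applied to pass from Problem \ref{prob:WIPT_secure} to Problem \ref{prob:rank_one}, both of which carry the rank-one structure, so your version suffices for the downstream feasibility-set inclusion even though it does not prove the proposition in the generality in which it is stated.

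Your second step is sound and is essentially the paper's eigenvalue bound in disguise: for the whitened rank-one matrix, the Cauchy--Schwarz constant $\mathbf{w}_k^H\mathbf{G}_m\mathbf{Q}_m^{-1}\mathbf{G}_m^H\mathbf{w}_k$ coincides with both its trace and its largest eigenvalue, and your quadratic-form argument is the equivalence $\lambda_{\max}(\mathbf{A})\le\psi_{m,k}-1\Leftrightarrow\mathbf{A}\preceq(\psi_{m,k}-1)\mathbf{I}_{N_{\mathrm{R}}}$ conjugated by $\mathbf{Q}_m^{1/2}$; your point that $\mathbf{Q}_m\succ\mathbf{0}$ thanks to the noise floor is correct and is also needed (implicitly) by the paper. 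To recover the full generality of the proposition you would only need to swap your exact identity for the one-sided bound $\det(\mathbf{I}+\mathbf{A})=\prod_i\big(1+\lambda_i(\mathbf{A})\big)\ge 1+\lambda_{\max}(\mathbf{A})$ for $\mathbf{A}\succeq\mathbf{0}$, after which the remainder of your argument goes through unchanged.
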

$\overline{\mathrm{C2}}$ is a linear matrix inequality (LMI) constraint. Specifically, constraints $\overline{\mathrm{C2}}$ and ${\mathrm{C2}}$ are equivalent if $\Rank(\mathbf{W}_k)=1,\forall k$.
\begin{proof}
Please refer to Appendix \ref{app:Prop_relaxed_c2}.
\end{proof}

Now, we apply Proposition \ref{prop:relaxed_c2} to Problem \ref{prob:WIPT_secure} by replacing constraint $\mathrm{C2}$ with $\overline{\mathrm{C2}}$.  Then, the reformulated problem can be written as
\begin{Prob}\label{prob:rank_one}
\begin{eqnarray}
\underset{\mathbf{W}_k, \mathbf{V}\in \mathbb{H}^{N_{\mathrm{T}}}, \rho_k}{\mino}\,\, &&\sum_{k=1}^K\Tr(\mathbf{W}_k)+\Tr(\mathbf{V})\notag\\
\mathrm{subject\,\,to} &&\mathrm{C1}:\,\,\frac{1}{\Gamma^\mathrm{req}_k}\Tr(\mathbf{h}_k\mathbf{h}_k^H\mathbf{W}_k)-\sum\limits_{\substack{j\neq k}}^K\Tr(\mathbf{h}_k\mathbf{h}_k^H\mathbf{W}_j)-\Tr(\mathbf{h}_k\mathbf{h}_k^H\mathbf{V})\ge\sigma_{\mathrm{ant}}^2+\frac{1}{\rho_k}\sigma_{\mathrm{s}}^2,\,\, \forall k, \notag\\
&&\overline{\mathrm{C2}}:\,\,\mathbf{G}_m^H\mathbf{W}_k\mathbf{G}_m\preceq (\psi_{m,k} -1)\mathbf{Q}_{m},\,\, \forall m,k,\notag\\
&&\mathrm{C3}:\,\,\Tr(\mathbf{h}_k\mathbf{h}_k^H(\mathbf{V}+\sum_{j=1}^K\mathbf{W}_j))\ge\frac{P^\mathrm{req1}_k}{\eta(1-\rho_k)}-\sigma_{\mathrm{ant}}^2,\forall k,\notag\\
&&\mathrm{C4}:\,\,\Tr(\mathbf{G}_m^H(\mathbf{V}+\sum_{k=1}^K\mathbf{W}_k)\mathbf{G}_m)\ge\frac{P^\mathrm{req2}_m}{\eta}-N_{\mathrm{R}}\sigma_{\mathrm{ant}}^2,\,\, \forall m,\notag\\
&&\mathrm{C5}:\,\,0\leq\rho_k\leq1,\,\, \forall k, \notag\\
&&\mathrm{C6}:\,\, \mathbf{V}\succeq \mathbf{0},\notag\\
&&\mathrm{C7}:\,\, \mathbf{W}_k\succeq \mathbf{0},\,\, \forall k, \notag\\
&&\mathrm{C8}:\,\, \Rank(\mathbf{W}_k)=1,\,\, \forall k. \notag
\end{eqnarray}
\end{Prob}
\noindent Constraints C7, C8, and $\mathbf{W}_k\in\mathbb{H}^{N_{\mathrm{T}}},\forall k$, are imposed to guarantee that $\mathbf{W}_k=\mathbf{w}_k\mathbf{w}_k^H$ holds for the optimal solution. In general, replacing constraint
C2  by $\overline{\mathrm{C2}}$  leads to a larger feasible solution set for the problem, cf. Proposition \ref{prop:relaxed_c2}.  However, Problem \ref{prob:WIPT_secure} and \ref{prob:rank_one} are equivalent for $\Rank(\mathbf{W}_k)= 1$, $\forall k$. Thus, we focus on Problem \ref{prob:rank_one} in the following.

Although the new constraint $\overline{\mathrm{C2}}$ is an affine function with respect to the optimization variables,  it can be verified that Problem \ref{prob:rank_one} is still non-convex due to the combinatorial rank constraint in C8. In order to achieve an efficient resource allocation algorithm design, we adopt SDP relaxation approach. In particular, we relax constraint $\mathrm{C8}$ by removing it from the problem formulation, such that Problem \ref{prob:rank_one} becomes a convex problem.  The SDP relaxed problem is given by
\begin{Prob}\label{prob:sdp_relaxation}
\begin{eqnarray}
\underset{\mathbf{W}_k, \mathbf{V}\in \mathbb{H}^{N_{\mathrm{T}}}, \rho_k}{\mino}\,\, &&\sum_{k=1}^K\Tr(\mathbf{W}_k)+\Tr(\mathbf{V})\notag\\
\mathrm{subject\,\,to} &&\mathrm{C1},\,\overline{\mathrm{C2}},\,\mathrm{C3},\,\mathrm{C4},\,\mathrm{C5},\,\mathrm{C6},\,\mathrm{C7}.
\end{eqnarray}
\end{Prob}
\noindent We note that Problem \ref{prob:sdp_relaxation} can be solved efficiently by numerical solvers such as CVX \cite{website:CVX}. Notably, if the optimal solution $\mathbf{W}_k^*$ of Problem \ref{prob:sdp_relaxation} admits a rank-one matrix, then Problem \ref{prob:WIPT_secure}, \ref{prob:rank_one}, and \ref{prob:sdp_relaxation} share the same optimal solution and the same optimal objective value, i.e., the global optimum is achieved.

Now, we introduce the following theorem to reveal the tightness of the SDP relaxation.
\begin{Thm}\label{thm:rankone_condition}
Suppose the optimal solution of Problem \ref{prob:sdp_relaxation} is denoted by $\{\mathbf{W}_k^*,\mathbf{V}^*,\rho_k^*\}$, ${\Gamma}^\mathrm{req}_k>0$, and $R^\mathrm{max}_{m,k}>0$. If $\exists k:\Rank(\mathbf{W}^*_k)>1$, then we can construct another solution for Problem \ref{prob:sdp_relaxation}, denoted as  $\{\mathbf{\widetilde  W}_k ,\mathbf{\widetilde  V},\widetilde \rho_k\}$, which not only achieves the same objective value as $\{\mathbf{W}_k^*,\mathbf{V}^*,\rho_k^*\}$, but admits a rank-one matrix, i.e.,  $\Rank(\mathbf{\widetilde W}_k)=1,\forall k$.
\end{Thm}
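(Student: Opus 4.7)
The plan is to prove the theorem by an explicit rank-reduction construction. Starting from any primal optimum $(\mathbf{W}^*_k,\mathbf{V}^*,\rho^*_k)$ of Problem \ref{prob:sdp_relaxation}, I would define
\[
\widetilde{\mathbf{w}}_k=\frac{\mathbf{W}^*_k\mathbf{h}_k}{\sqrt{\mathbf{h}_k^H\mathbf{W}^*_k\mathbf{h}_k}},\quad \widetilde{\mathbf{W}}_k=\widetilde{\mathbf{w}}_k\widetilde{\mathbf{w}}_k^H,\quad \widetilde{\mathbf{V}}=\mathbf{V}^*+\sum_{j=1}^K\bigl(\mathbf{W}^*_j-\widetilde{\mathbf{W}}_j\bigr),\quad \widetilde\rho_k=\rho^*_k,
\]
with the denominator $\mathbf{h}_k^H\mathbf{W}^*_k\mathbf{h}_k>0$ guaranteed by $\mathrm{C1}$ together with $\Gamma^\mathrm{req}_k>0$. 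By construction $\widetilde{\mathbf{W}}_k\succeq\mathbf{0}$ and $\Rank(\widetilde{\mathbf{W}}_k)=1$ for every $k$.

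The backbone of the verification is a Schur-complement identity applied to $\mathbf{W}^*_k$ and the vector $\mathbf{h}_k$, which yields $\mathbf{W}^*_k-\widetilde{\mathbf{W}}_k\succeq\mathbf{0}$; hence $\widetilde{\mathbf{W}}_k\preceq\mathbf{W}^*_k$ and $\widetilde{\mathbf{V}}\succeq\mathbf{V}^*\succeq\mathbf{0}$. The total covariance $\sum_k\widetilde{\mathbf{W}}_k+\widetilde{\mathbf{V}}=\sum_k\mathbf{W}^*_k+\mathbf{V}^*$ is preserved by the definition, which immediately gives invariance of the objective $\sum_k\Tr(\mathbf{W}_k)+\Tr(\mathbf{V})$ as well as feasibility of $\mathrm{C3}$ and $\mathrm{C4}$. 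For $\mathrm{C1}$, the numerator depends on $\mathbf{h}_k^H\mathbf{W}_k\mathbf{h}_k$, which is preserved by the direct computation $\mathbf{h}_k^H\widetilde{\mathbf{W}}_k\mathbf{h}_k=\mathbf{h}_k^H\mathbf{W}^*_k\mathbf{h}_k$, and the interference-plus-noise denominator depends on $\mathbf{h}_k^H\bigl(\sum_{j\ne k}\mathbf{W}_j+\mathbf{V}\bigr)\mathbf{h}_k$, which is also preserved since the total covariance and the user-$k$ quadratic form are both preserved.

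The main obstacle is $\overline{\mathrm{C2}}$, since this LMI couples $\mathbf{W}_k$ with $\mathbf{V}$ through $\mathbf{G}_m$ and the substitution simultaneously shrinks its left-hand side and grows its right-hand side. Fortunately both directions are favourable: for any pair $(m,k)$, $\widetilde{\mathbf{W}}_k\preceq\mathbf{W}^*_k$ gives $\mathbf{G}_m^H\widetilde{\mathbf{W}}_k\mathbf{G}_m\preceq\mathbf{G}_m^H\mathbf{W}^*_k\mathbf{G}_m\preceq(\psi_{m,k}-1)\mathbf{Q}^*_m$, while $\widetilde{\mathbf{V}}\succeq\mathbf{V}^*$ combined with $\psi_{m,k}>1$ (equivalent to $R^\mathrm{max}_{m,k}>0$) yields $(\psi_{m,k}-1)\widetilde{\mathbf{Q}}_m\succeq(\psi_{m,k}-1)\mathbf{Q}^*_m$; chaining these establishes $\mathbf{G}_m^H\widetilde{\mathbf{W}}_k\mathbf{G}_m\preceq(\psi_{m,k}-1)\widetilde{\mathbf{Q}}_m$. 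With $\overline{\mathrm{C2}}$ confirmed for every $(m,k)$, the triple $\{\widetilde{\mathbf{W}}_k,\widetilde{\mathbf{V}},\widetilde\rho_k\}$ is feasible, attains the same objective value as $(\mathbf{W}^*_k,\mathbf{V}^*,\rho^*_k)$, and satisfies $\Rank(\widetilde{\mathbf{W}}_k)=1$ for all $k$, completing the proof. A parallel route via KKT stationarity of Problem \ref{prob:sdp_relaxation}, leading to a dual matrix of the form $\mathbf{Y}^*_k=\mathbf{B}^*_k-(\alpha^*_k/\Gamma^\mathrm{req}_k+\beta^*_k)\mathbf{h}_k\mathbf{h}_k^H$ with $\mathbf{B}^*_k\succ\mathbf{0}$, would give $\Rank(\mathbf{W}^*_k)\le 1$ directly, but the construction above sidesteps the delicate positive-definiteness step entirely.
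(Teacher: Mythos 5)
Your construction is correct, and it takes a genuinely different route from the paper. The paper proves the theorem through the KKT conditions of Problem \ref{prob:sdp_relaxation}: it writes the dual matrix as $\mathbf{Z}_k^*=\mathbf{U}_k-(\beta_k^*+\alpha_k^*/\Gamma_k^{\mathrm{req}})\mathbf{h}_k\mathbf{h}_k^H$, performs a case analysis on $\Rank(\mathbf{U}_k)$, shows that $\mathbf{W}_k^*$ decomposes as $\sum_{t_k}\gamma_{t_k}\boldsymbol{\varrho}_{t_k}\boldsymbol{\varrho}_{t_k}^H+f_k\mathbf{u}_k\mathbf{u}_k^H$ with the $\boldsymbol{\varrho}_{t_k}$ lying in the common null space of $\mathbf{U}_k$ and $\mathbf{h}_k\mathbf{h}_k^H$, and then transfers the null-space part into $\mathbf{V}$. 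Your argument reaches the same kind of reconstruction (shrink $\mathbf{W}_k$, dump the PSD remainder into $\mathbf{V}$) but purely on the primal side, using the Schur-complement fact $\mathbf{A}\succeq\mathbf{A}\mathbf{h}(\mathbf{h}^H\mathbf{A}\mathbf{h})^{-1}\mathbf{h}^H\mathbf{A}$ to pick the rank-one piece $\widetilde{\mathbf{W}}_k=\mathbf{W}_k^*\mathbf{h}_k\mathbf{h}_k^H\mathbf{W}_k^*/(\mathbf{h}_k^H\mathbf{W}_k^*\mathbf{h}_k)$ directly. Your feasibility checks are all sound: the objective, C3, and C4 depend only on the preserved total covariance; C1 is preserved exactly because both the total covariance and $\mathbf{h}_k^H\mathbf{W}_k\mathbf{h}_k$ are preserved; and $\overline{\mathrm{C2}}$ holds because the substitution moves its two sides in opposite, favourable directions ($\widetilde{\mathbf{W}}_k\preceq\mathbf{W}_k^*$ and $\widetilde{\mathbf{Q}}_m\succeq\mathbf{Q}_m^*$ with $\psi_{m,k}>1$). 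What your route buys is generality and economy: it needs no strong duality, no dual variables, and in fact applies to any feasible point (not only an optimum), and it gives the new solution in closed form rather than leaving scaling constants $f_k,\gamma_{t_k}$ to be recomputed by a second convex program as the paper does. What the paper's route buys is structural insight as a by-product -- notably the closed-form optimal power-splitting ratio $\rho_k^*$ in terms of the dual variables and the characterization of $\nullspace(\mathbf{Z}_k^*)$ -- which your argument does not provide but which the theorem does not require.
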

\begin{proof}
Please refer to Appendix \ref{app:thm_rankone_condition} for a proof of Theorem \ref{thm:rankone_condition} and a method for constructing $\{\mathbf{\widetilde  W}_k ,\mathbf{\widetilde  V},\widetilde \rho_k\}$ with $\Rank(\mathbf{\widetilde W}_k)=1,\forall k$.
\end{proof}

In other words, by applying Theorem \ref{thm:rankone_condition} and Proposition \ref{prop:relaxed_c2}, the global optimal solution of the original problem is obtained.

\section{Results}
In this section, we demonstrate the system performance of the proposed power efficient resource allocation design by simulation results. In particular, we solve Problem \ref{prob:WIPT_secure} for  different channel realizations and show the corresponding average system performance.

\begin{table}[htb]
\caption{Simulation Parameters} \label{table:parameters3}
\centering
\begin{tabular}{ | l | l | } \hline
      Carrier center frequency  & 470 MHz\\ \hline
      Number of desired receiver & $K=3$ \\ \hline
      Number of roaming receiver & $M=2$   \\ \hline
      Number of receiving antenna & $N_R=2$ \\ \hline
      Antennas gain & 10 dBi \\ \hline
      Antenna noise power & $\sigma_\mathrm{ant}^2= -124$ dBm \\ \hline
      Signal processing noise power & $\sigma_\mathrm{s}^2= -23$ dBm \\ \hline
      Rician factor & 3 dB \\ \hline
      Reference distance        & 2 meters \\ \hline
      Maximum service distance & 50 meters \\ \hline
      Minimum required SINR & $\Gamma^\mathrm{req}$ \\ \hline
      Maximum data rate of roaming receivers & $R^\mathrm{max}_{m,k}=1$ bit/s/Hz\\ \hline
      Minimum required harvested power & $P^\mathrm{req1}_k=P^\mathrm{req2}_m=0$ dBm \\ \hline
      Energy conversion efficiency & $\eta=0.5$ \\ \hline
    \end{tabular}
\end{table}

The simulation parameters are summarized in Table \ref{table:parameters3}. We adopt the TGn path loss model \cite{report:tgn}. In particular, we assume a carrier center frequency of $470$ MHz. Assume $3$ desired receivers and $2$ roaming receivers (potential eavesdroppers), which are uniformly distributed in the range between a reference distance of $2$ meters and a maximum distance of 50 meters. Each roaming receiver is equipped with $N_{\mathrm{R}}=2$ antennas. The multipath fading coefficients are modeled as independent and identically distributed Rician fading with Rician factor 3 dB. We set the minimum required SINR of all desired receivers identical, i.e., $\Gamma^\mathrm{req}_k=\Gamma^\mathrm{req},\forall k$, the maximum tolerable rate of each roaming receiver is $R^\mathrm{max}_{m,k}=1$ bit/s/Hz, $\forall m,k$, and the minimum required harvested power for all receivers is $P^\mathrm{req1}_k=P^\mathrm{req2}_m=0$ dBm.  The energy conversion efficiency $\eta$ is $0.5$.  Antenna gain is $10$ dBi, and the antenna noise power is $\sigma_{\mathrm{ant}}^2= -124$ dBm at a temperature of $290$ Kelvin. We assume a $8$-bit uniform quantizer is employed in the analog-to-digital converter at the analog front-end of each receiver, which result in a signal processing noise of $\sigma_{\mathrm{s}}^2= -23$ dBm.

\begin{figure}[t]
 \centering
\includegraphics[width=\textwidth]{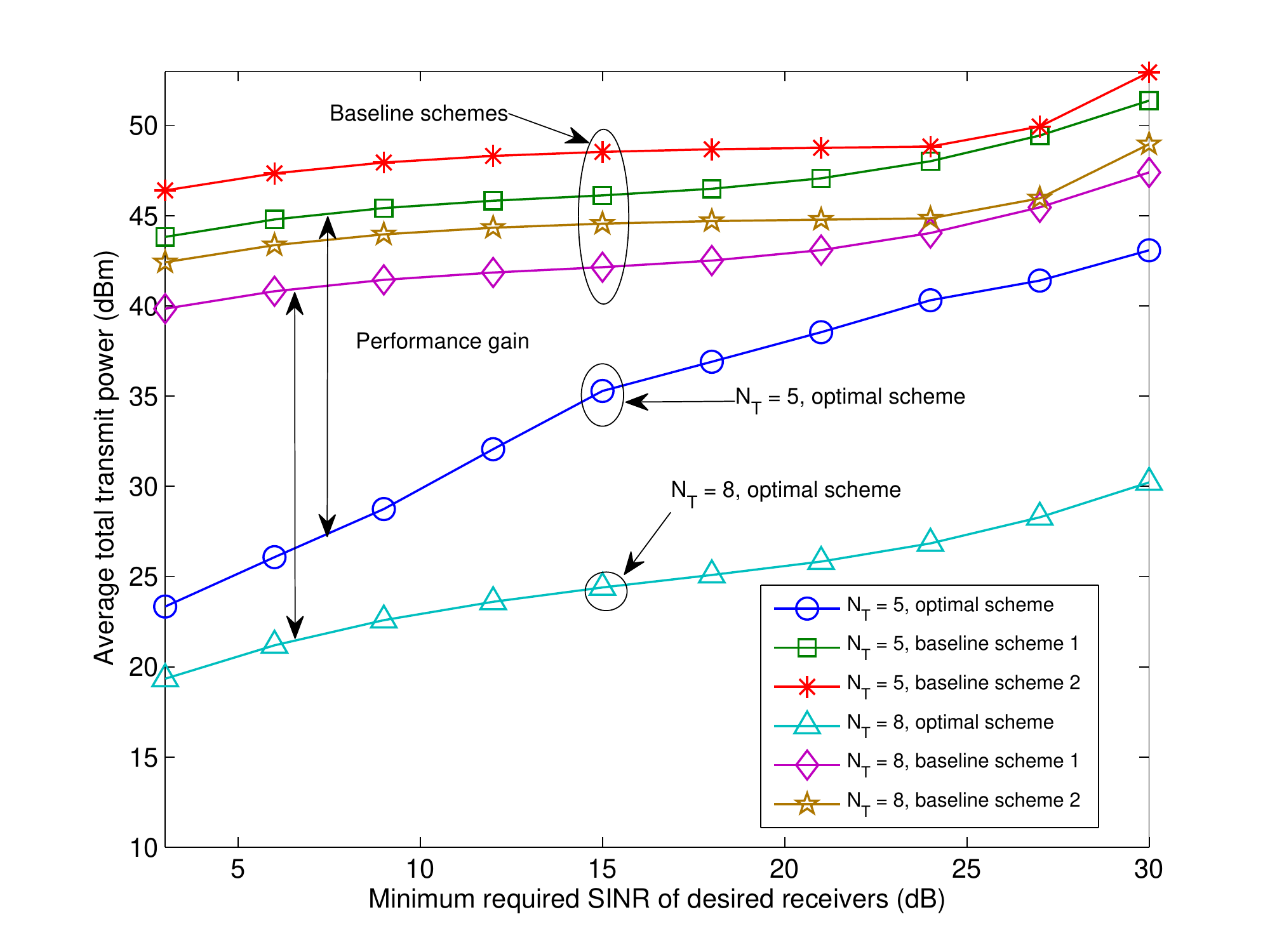}
 \caption{Average total transmit power versus the minimum required SINR of desired receivers for $N_{\mathrm{T}}=5$ and $N_{\mathrm{T}}=8$. The double-sided arrows indicate the power gains achieved by the optimal scheme compared to the baseline schemes.}
 \label{fig:pt_SINR}
\end{figure}
\begin{figure}[htb]
 \centering
\includegraphics[width=\textwidth]{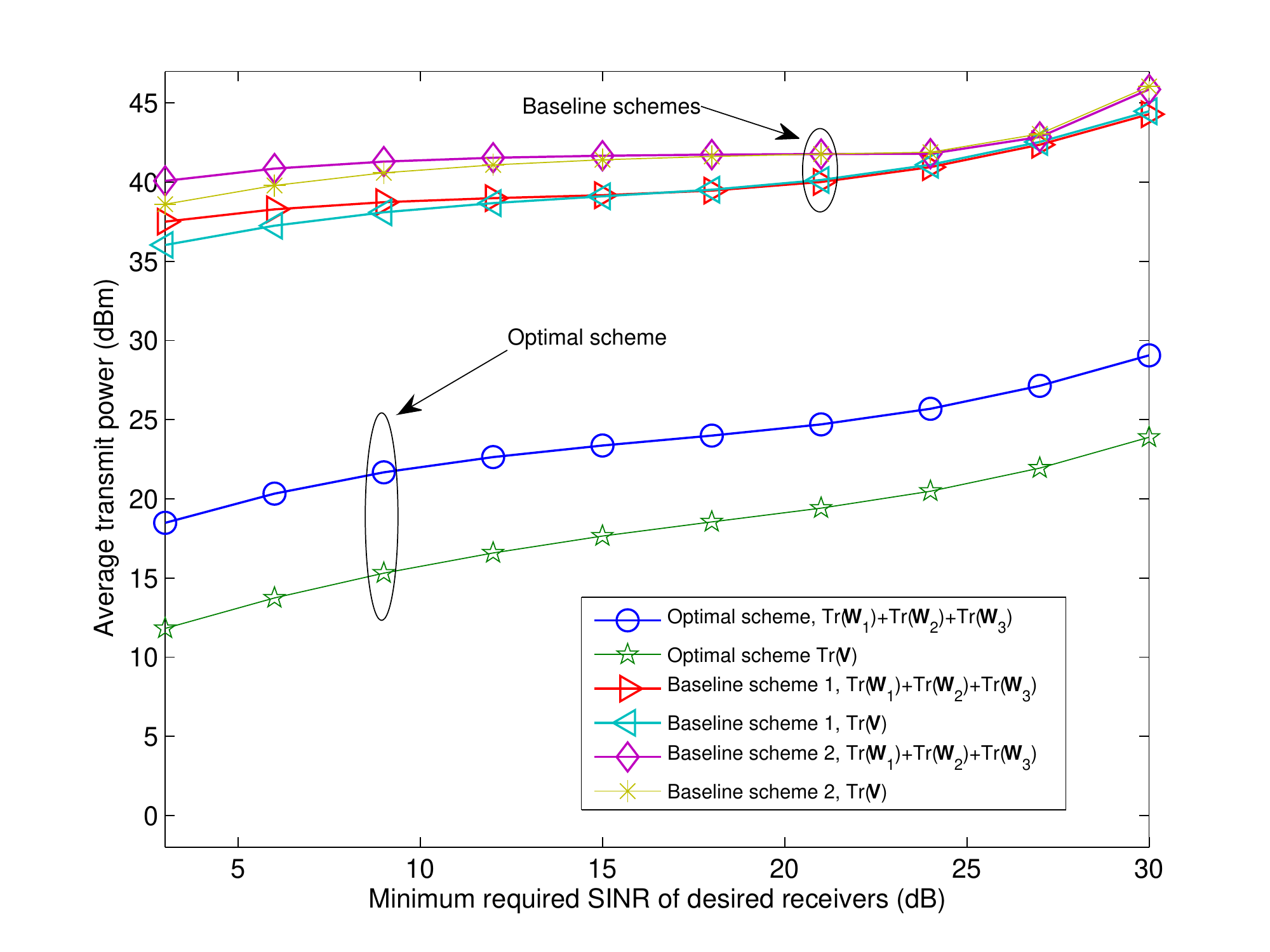}
 \caption{Average transmit power allocation for desired signal and artificial noise versus the minimum required SINR of desired receivers for $N_{\mathrm{T}}=8$.}
 \label{fig:sig_AN}
\end{figure}
\begin{figure}[htb]
 \centering
\includegraphics[width=\textwidth]{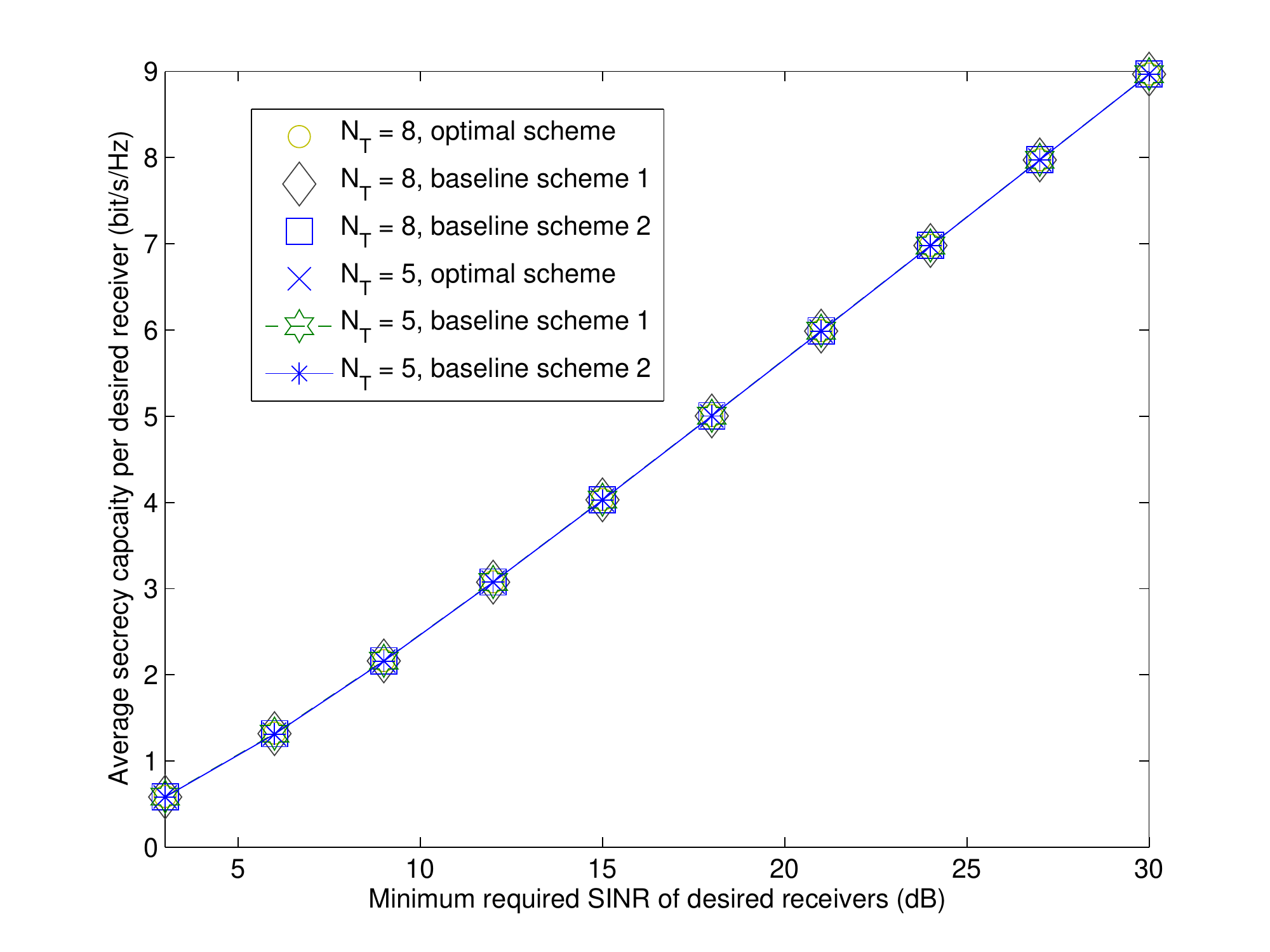}
 \caption{Average secrecy rate per desired receiver versus the minimum required SINR of desired receivers.}
 \label{fig:cap_SINR}
\end{figure}
\begin{figure}[htb]
 \centering
\includegraphics[width=\textwidth]{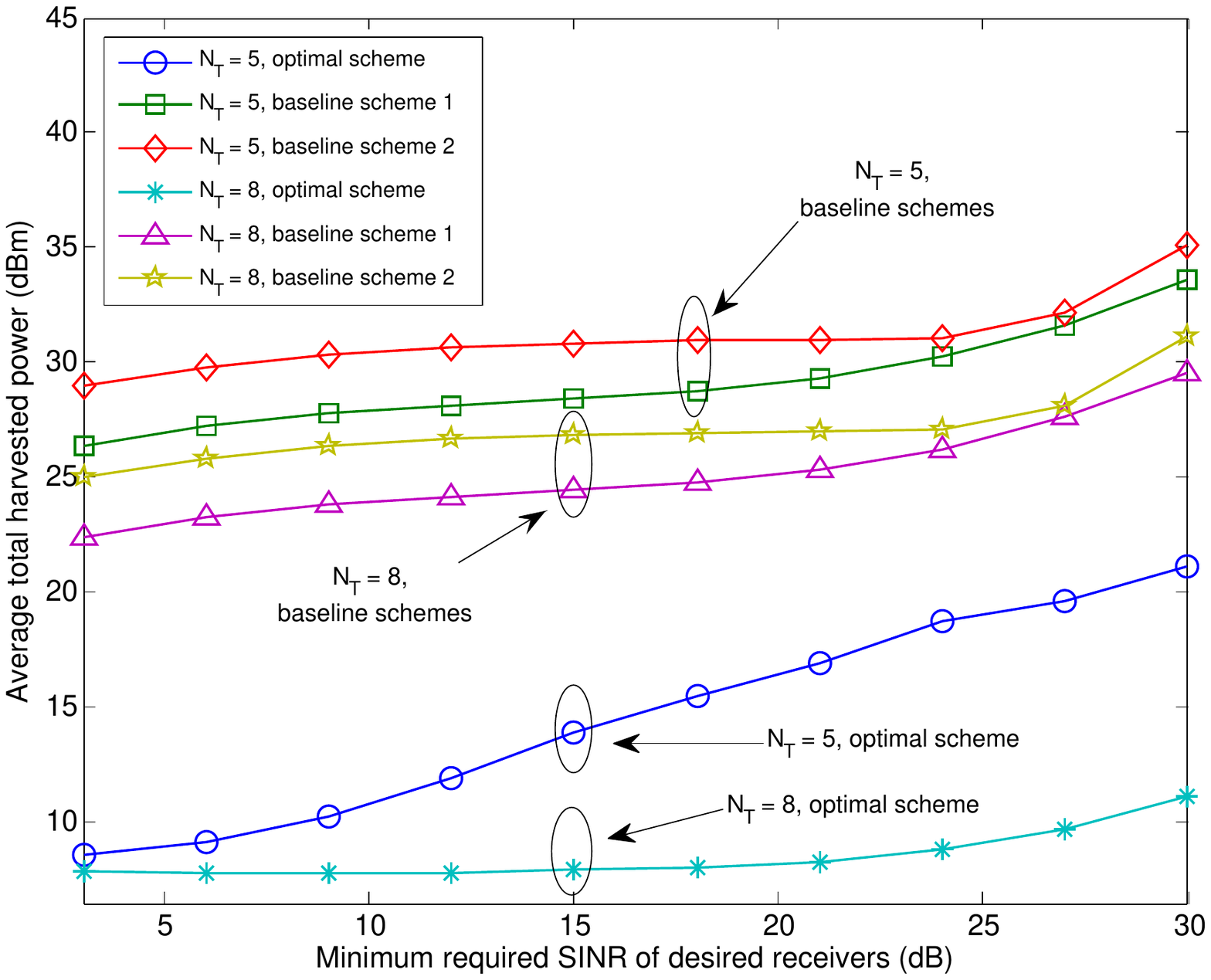}
 \caption{Average total harvested power versus the minimum required SINR of desired receivers.}
 \label{fig:hp_SINR}
\end{figure}

In Figure \ref{fig:pt_SINR}, we show the average total transmit power versus the minimum required SINR for different numbers of transmitting antenna and different resource allocation schemes. It can be observed that the total transmit power increases monotonically with an increasing SINR. The reason behind this is twofold. First, a higher transmit power of information signals is required to satisfy the more stringent requirement on SINR. Second, a higher amount of power also has to be allocated to the artificial noise to neutralize the increasing information leakage potential due to a stronger information signal, cf. Figure \ref{fig:sig_AN}. On the other hand, we see that a significant power saving can be achieved by the proposed optimal scheme when the number of antenna increase from $N_{\mathrm{T}}=5$ to $N_{\mathrm{T}}=8$. This is because more transmitting antennas provide extra degrees of freedom, which enables a more power-efficient resource allocation.

For comparison, we also present the performance of two simple suboptimal baseline schemes. For baseline scheme 1, zero-forcing beamforming is applied for information signal such that the desired receivers do not experience any multiuser interference. In particular, we calculate the eigenvalue decomposition of $\mathbf{H}_{-k}\mathbf{H}_{-k}^H=\mathbf{U}_k\gmat{\Sigma}_k\mathbf{U}^H_k$ for desired receiver $k$ where $\mathbf{H}_{-k}=[\mathbf{h}_1\,\ldots\,\mathbf{h}_{k-1}\,\mathbf{h}_{k+1}\,\ldots\,\mathbf{h}_K]$, $\mathbf{U}_k$ is an $N_\mathrm{T}\times N_\mathrm{T}$ unitary matrix, and $\gmat{\Sigma}_k$ is a diagonal matrix with ascending eigenvalues of  $\mathbf{H}_{-k}\mathbf{H}_{-k}^H$ as main diagonal elements. Then, we select $\mathbf{W}_k=q_{\mathrm{sub}_k}\mathbf{w}_{\mathrm{sub}_k}\mathbf{w}_{\mathrm{sub}_k}^H$, where $q_{\mathrm{sub}_k}\ge 0$ is a new scalar optimization variable and  $\mathbf{w}_{\mathrm{sub}_k}$ is the first column vector\footnote{In general, different column vectors with respect to the null space of $\mathbf{H}_{-k}\mathbf{H}_{-k}^H$ can be used as zero-forcing beamforming vector.  For algorithm computational simplicity, we select the first column vector corresponding to the minimum eigenvalue of matrix $\mathbf{H}_{-k}\mathbf{H}_{-k}^H$ as zero-forcing beamforming vector. } of $\mathbf{U}_k$ such that $\mathbf{H}_{-k}^H\mathbf{w}_{\mathrm{sub}_k}=\mathbf{0}$. In other words, the directions of the beamforming matrices are fixed for all desired receives. Then, we minimize the total transmit power by optimizing  $q_{\mathrm{sub}_k},\mathbf{V}$, and $\rho_k$ subject to the same constraints as in \ref{prob:sdp_relaxation}. We note that the zero-forcing beamforming matrix admits a rank-one structure. As for baseline scheme 2, it shares the same resource allocation policy  as baseline scheme 1 except that we set $\rho_k=0.5, \forall k$. It can be observed in Figure \ref{fig:pt_SINR} that the optimal scheme achieves  significant  power savings over the two baseline schemes. Notably, the performance gain of  the optimal scheme over the two baseline schemes  is further enlarged by more transmitting antennas. Since the optimal scheme can fully utilize the degrees of freedom for resource allocation. In contrast,  although multiuser interference is eliminated  in the two baseline schemes, the degrees of freedom for resource allocation in the baseline schemes are limited thus resulting in a higher transmit power. Furthermore, the performance gap between baseline scheme 1 and baseline scheme 2 reveals the performance gain on the optimization of the power splitting ratio.

Figure \ref{fig:sig_AN} depicts the average transmit power allocated to the information signal and the artificial noise for $N_{\mathrm{T}}=8$.  As shown, the power allocations for both information signal and artificial noise grow rapidly
with the increment of minimum SINR requirement. Besides, both the optimal scheme and the two baseline schemes indicate that a large portion of the total transmit power is allocated to the artificial noise. These results suggest that artificial noise generation is crucial for supporting communication security and WPT.

Figure \ref{fig:cap_SINR} plots the average secrecy rate per desired receiver with respect to the minimum required SINR of desired receivers for different numbers of transmitting antenna and different resource allocation schemes. The average achievable secrecy rate rises with the minimum required SINR since the eavesdropping rate of roaming receivers (potential eavesdroppers) is limited to $R^\mathrm{max}_{m,k}=1$ bit/s/Hz. Besides, all considered schemes are able to guarantee the QoS requirement on communication security (constraints C1 and C2) and achieve the same secrecy rate. However, the two baseline schemes achieve the same secrecy rate as the optimal scheme at the expense of a significantly higher transmit power, cf. Figure \ref{fig:pt_SINR}.

In Figure \ref{fig:hp_SINR}, we demonstrate the average total harvested power versus the minimum required SINR of desired receivers for different resource allocation schemes and different numbers of transmitting antenna. The average total
harvested power is computed by assuming the roaming receivers (potential eavesdroppers) do not eavesdrop. It is shown that the average total harvested power of all resource allocation schemes increase with $\Gamma^\mathrm{req}$. As more RF energy is available due to stronger information signals, cf. Figure \ref{fig:pt_SINR}. Besides, more energy is harvested in the two baseline schemes compared to the optimal scheme. The superior energy harvesting performance of the baseline schemes comes at the expense of a significant large transmit power. On the other hand, it can be observed that the average
total harvested power in the system decreases with an increasing number of transmitting antenna. This is because when extra degrees of freedom are offered by more transmitting antennas, system resource allocation becomes more efficient. In other words, the information leakage can be efficiently reduced since artificial noise jamming can be more accurately performed. As a results, a lower transmit power is required to fulfill the considered QoS requirements and less power is harvested.

\chapter{Conclusion}
\label{chap:5_conclusion}
In this thesis, we focused on energy-efficient resource allocation algorithm design for SWIPT and considered the secure communication in SWIPT systems.

In Chapter \ref{chap:2_Sepuser}, we investigated the energy efficiencies for information transmission and energy harvesting in a basic SWIPT system with separated receivers, i.e., one information receiver and one energy harvester. The trade-off between three desirable but conflicting system design objectives, namely, IR-EE maximization, EH-EE maximization, and total transmit power minimization, was studied by formulating a MOOP. The non-convex problem was transformed by Charnes-Cooper transformation method and solved by SDP relaxation approach. The complete Pareto optimal set of the MOOP was achieved by the proposed energy-efficient resource allocation algorithm. In particular, the algorithm provided flexibility in power allocation when balancing between multiple objectives. Simulation results showed the trade-off region of the considered objectives and revealed the system performance gain on energy efficiency compared to the baseline scheme.

In Chapter \ref{chap:4_secureWIPT}, we studied the power-efficient resource allocation algorithm design for secure communication in a SWIPT system with power splitting receivers. Under the eavesdropping potential of multi-antenna energy harvesters, we adopted artificial noise generation to guarantee secure information delivery to legitimate receivers. The algorithm design was formulated as a non-convex optimization problem taking into account the QoS requirements on communication security and efficient power transfer. We applied SDP relaxation approach to obtain the optimal solution. Especially, the proposed power-efficient algorithm achieved the dual use of artificial noise on ensuring communication security and facilitating EH. Simulation results confirmed the remarkable performance of the proposed optimal scheme on energy saving and communication security.

In the future work, we are interested in energy-efficient resource allocation in orthogonal frequency multiple access (OFDMA) systems with secure SWIPT.

\bibliographystyle{IEEEtran}
\bibliography{vicky_leng}

\begin{appendices}
\chapter{Mathematical Preliminaries}
\section{Optimization Problem}\label{app:intro_MOOP}
A typical optimization problem consists of an objective function, constraints, and optimization variables. In general, the objective function is a function of optimization variables that evaluates one aspect of system performance. The set of constraints map the considered QoS requirements and physical limitations of the system. A single-objective optimization problem is given standardly by
\begin{eqnarray}
\underset{\mathbf{x}}\mino\,\,&&f_0(\mathbf{x})\\
\mathrm{subject\,\,to}\,\,&&g_l(\mathbf{x})\leq0,\quad l=1,2,\ldots,L,\notag\\
&&h_n(\mathbf{x})=0,\quad n=1,2,\ldots,N,\notag
\end{eqnarray}
where $f_0(\cdot)$ is a system objective function and $\mathbf{x}$ is a vector of optimization variables. $L$ and $N$ are the numbers of inequality constraints and equality constraints, respectively.  $g_l(\mathbf{x})$ is the $l$-th inequality constraint and $h_n(\mathbf{x})$ is the $n$-th equality constraint.

However, in practice multiple desirable system design objectives arise naturally in resource allocation problems. As the objective functions could be conflicting with each other, non-trivial trade-off occurs in this case, where the solution of single-objective resource allocation may not result in satisfactory system performance. Therefore, MOO is applied to address this type of resource allocation problem. A standard form of a MOOP can be posed as follows \cite{JR:MOOP}:
\begin{eqnarray}
\underset{\mathbf{x}}\mino\,\,&&\mathbf{F}(\mathbf{x})=[F_1(\mathbf{x}),F_2(\mathbf{x}),\dots,F_K(\mathbf{x})]^T\\
\mathrm{subject\,\,to}\,\,&&g_l(\mathbf{x})\leq0,\quad l=1,2,\ldots,L,\notag\\
&&h_n(\mathbf{x})=0,\quad n=1,2,\ldots,N,\notag
\end{eqnarray}
where $K$ is the number of objective functions and  $F_k(\mathbf{x}),\forall k\in\{1,\ldots,K\}$, is the $k$-th objective function.

In contrast to single-objective optimization, a solution to a MOOP is more of an abstract concept than a fixed point. In general, there is no single global solution which optimizes all the objective functions simultaneously. Typically, it is often necessary to determine a set of points that fit a predetermined definition for a optimum, which is Pareto optimality. Pareto optimality of MOOP is defined as
\begin{Def}Pareto Optimal: A point, $\mathbf{x}^*\in {\cal F}$, is Pareto Optimal if and only if there does not exist another point, $\mathbf{x}\in {\cal F}$, such that $\mathbf{F}(\mathbf{x})\preceq\mathbf{F}(\mathbf{x}^*)$ and $F_k(\mathbf{x})<F_k(\mathbf{x}^*)$, $k=1,\dots,K$, for at least one function.
\end{Def}

Evidently, any point that is not in the Pareto optimal set is strictly suboptimal. The Pareto optimal set is an analogy to global optimality that can achieve in MOO. We note that single-objective optimization problems are special case of MOOPs with $K=1$. In other words, if a resource allocation algorithm can solve the MOOP, then it can be used to solve the corresponding single-objective optimization problem.

\chapter{Calculations}

\section{Proof of Proposition \ref{prop:equivalency}}\label{app:Prop_equivalency}
The proof is based on the Charnes-Cooper transformation \cite{JR:linear_fractional} and follows a similar approach in \cite{CN:mulobj_secure_WIPT}. By substitute the new optimization variables in (\ref{eqn:newvariabledefine}) into the original problem \ref{prob:WIPT_IR-EE}, we can obtain an equivalent problem representation as
\begin{eqnarray}\label{prob:app1}
\underset{\overline{\mathbf{W}}_\mathrm{I},\overline{\mathbf{W}}_\mathrm{E}\in\mathbb{H}^{N_\mathrm{T}},\theta}{\maxo} &&\overline{F_1}=\frac{\theta\log_2(1+\frac{\Tr(\mathbf{H}\overline{\mathbf{W}}_\mathrm{I})}{\theta\sigma_\mathrm{I}^2})}{\frac{\Tr(\overline{\mathbf{W}}_\mathrm{I}+\overline{\mathbf{W}}_\mathrm{E})}{\xi}+\theta(N_\mathrm{T}P_\mathrm{ant}+P_\mathrm{c})}\\
\mathrm{subject\,\,to}&&\overline{\mathrm{C1}},\,\,\overline{\mathrm{C2}},\,\,\overline{\mathrm{C3}},\,\,\overline{\mathrm{C5}}:\,\,\theta>0.\notag
\end{eqnarray}

Now we show that the above Problem (\ref{prob:app1}) is equivalent to Problem \ref{prob:WIPT_IR-EE_reform}. First, it can be observed that in Problem (\ref{prob:app1}) $\theta=0$ is impossible. Otherwise, $\overline{\mathbf{W}}_\mathrm{I}=\overline{\mathbf{W}}_\mathrm{E}=\mathbf{0}$ according to $\overline{\mathrm{C1}}$ and $\overline{\mathrm{C2}}$, the objective function is invalid. Thus, without loss of generality, the constraint $\theta>0$ can be replaced by $\theta\ge0$. Second, we prove by contradiction that $\overline{\mathrm{C4}}$ in Problem \ref{prob:WIPT_IR-EE_reform} is satisfied with equality for the optimal solution. Denote the optimal solution of Problem \ref{prob:WIPT_IR-EE_reform} as $(\overline{\mathbf{W}}_\mathrm{I}^*,\overline{\mathbf{W}}_\mathrm{E}^*,\theta^*)$. Then,
\begin{eqnarray}
\frac{\Tr(\overline{\mathbf{W}}_\mathrm{I}^*+\overline{\mathbf{W}}_\mathrm{E}^*)}{\xi}+\theta^*(N_\mathrm{T}P_\mathrm{ant}+P_\mathrm{c})=1.
\end{eqnarray}
Assume that $\overline{\mathrm{C4}}$ is fulfilled with strict inequality for the optimal solution, i.e., $\frac{\Tr(\overline{\mathbf{W}}_\mathrm{I}^*+\overline{\mathbf{W}}_\mathrm{E}^*)}{\xi}+\theta^*(N_\mathrm{T}P_\mathrm{ant}+P_\mathrm{c})<1$. Then, we construct a new feasible solution $(\overline{\mathbf{W}}_\mathrm{I}',\overline{\mathbf{W}}_\mathrm{E}',\theta')=(c\overline{\mathbf{W}}_\mathrm{I}^*,\overline{\mathbf{W}}_\mathrm{E}^*,c\theta^*)$, where $c>1$, such that $\frac{\Tr(\overline{\mathbf{W}}_\mathrm{I}'+\overline{\mathbf{W}}_\mathrm{E}')}{\xi}+\theta'(N_\mathrm{T}P_\mathrm{ant}+P_\mathrm{c})=1$. It can be verified that $(\overline{\mathbf{W}}_\mathrm{I}',\overline{\mathbf{W}}_\mathrm{E}',\theta')$ achieves a larger objective value in Problem \ref{prob:WIPT_IR-EE_reform} than $(\overline{\mathbf{W}}_\mathrm{I}^*,\overline{\mathbf{W}}_\mathrm{E}^*,\theta^*)$. Then, $(\overline{\mathbf{W}}_\mathrm{I}^*,\overline{\mathbf{W}}_\mathrm{E}^*,\theta^*)$ cannot be the optimal solution. Contradiction occurs. Thus,$\overline{\mathrm{C4}}$ must hold with equality. The equivalency between Problem (\ref{prob:app1}) and Problem \ref{prob:WIPT_IR-EE_reform} is proved, which means Problem \ref{prob:WIPT_IR-EE_reform} is equivalent to the original Problem \ref{prob:WIPT_IR-EE}. Similarly, The equivalency of Problem \ref{prob:WIPT_EH-EE_reform}, \ref{prob:WIPT_Ptrans_reform} and \ref{prob:multiobj_WIPTsepuser2} to their original problems can be proved by following the above approach.

\section{Proof of Theorem \ref{thm:rankone}}\label{app:rankone}
Theorem \ref{thm:rankone} can be proved by analyzing the KKT optimality conditions of the SDP relaxed Problem \ref{prob:multiobj_WIPTsepuser_relaxed}. First we need the Lagrangian function as the following
\begin{eqnarray} \label{eqn:appB7}
&&{\cal L}\big(\overline{\mathbf{W}}_\mathrm{I},\overline{\mathbf{W}}_\mathrm{E},\theta,\tau,\alpha,\beta,\mathbf{X},\mathbf{Y},\gamma_1,\gamma_2,\gamma_3,\delta\big)\\
&=&\tau+\alpha\big(\Tr(\overline{\mathbf{W}}_\mathrm{I}+\overline{\mathbf{W}}_\mathrm{E})-\theta P_{\mathrm{max}}\big)-\Tr(\mathbf{X}\overline{\mathbf{W}}_\mathrm{I})-\Tr(\mathbf{Y}\overline{\mathbf{W}}_\mathrm{E})-\delta\theta\notag\\
&+&\beta\big(\frac{\Tr(\overline{\mathbf{W}}_\mathrm{I}+\overline{\mathbf{W}}_\mathrm{E})}{\xi}+\theta(N_\mathrm{T}P_\mathrm{ant}+P_\mathrm{c})-1\big)+\gamma_1\Big[\omega_1\big(1-\frac{\theta}{\Phi_\mathrm{IR}^*}\log_2(1+\frac{\Tr(\mathbf{H}\overline{\mathbf{W}}_\mathrm{I})}{\theta\sigma_\mathrm{I}^2})\big)-\tau\Big]\notag\\
&+&\gamma_2\Big[\omega_2\big(1-\frac{\eta}{\Phi_\mathrm{EH}^*}\Tr(\mathbf{G}(\overline{\mathbf{W}}_\mathrm{I}+\overline{\mathbf{W}}_\mathrm{E}))\big)-\tau\Big]+\gamma_3\Big[\omega_3\frac{\xi}{P_\mathrm{max}}(\frac{1}{\theta}-N_\mathrm{T}P_\mathrm{ant}-P_\mathrm{c})-\tau\Big],\notag
\end{eqnarray}
where $\alpha,\beta,\mathbf{X},\mathbf{Y},\gamma_1,\gamma_2,\gamma_3,\delta$ are dual variables associated with the corresponding constraints, respectively. Since the SDP relaxed Problem \ref{prob:multiobj_WIPTsepuser_relaxed} satisfies Slater's constraint qualification and is convex with respect to the optimization variables, strong duality holds. Denote the optimal solution as $\{\overline{\mathbf{W}}_\mathrm{I}^*,\overline{\mathbf{W}}_\mathrm{E}^*,\theta^*,\tau^*\}$, and the optimal dual variables as $\{\alpha^*,\beta^*,\mathbf{X}^*,\mathbf{Y}^*,\gamma_1^*,\gamma_2^*,\gamma_3^*,\delta^*\}$. Then, based on KKT optimality conditions, the gradient of Lagrangian function with respect to $\overline{\mathbf{W}}_\mathrm{I}$ and $\overline{\mathbf{W}}_\mathrm{E}$ vanish, and the complementary slackness condition is satisfied as well. Thus, we have
\begin{eqnarray}
&&\frac{\partial{\cal L}}{\partial\overline{\mathbf{W}}_\mathrm{I}}=(\alpha+\frac{\beta}{\xi})\mathbf{I}-\mathbf{X}
-\frac{\gamma_1\omega_1\theta}{\Phi_\mathrm{IR}^*\big(\theta\sigma_\mathrm{I}^2+\Tr(\mathbf{H}\overline{\mathbf{W}}_\mathrm{I})\big)}\mathbf{H}
-\frac{\gamma_2\omega_2\eta}{\Phi_\mathrm{EH}^*}\mathbf{G}=\mathbf{0}\notag\\
&&\frac{\partial{\cal L}}{\partial\overline{\mathbf{W}}_\mathrm{E}}=(\alpha+\frac{\beta}{\xi})\mathbf{I}-\mathbf{Y}
-\frac{\gamma_2\omega_2\eta}{\Phi_\mathrm{EH}^*}\mathbf{G}=\mathbf{0}\notag\\
&&\Tr(\mathbf{X}\overline{\mathbf{W}}_\mathrm{I})=0,\notag\\
&&\Tr(\mathbf{Y}\overline{\mathbf{W}}_\mathrm{E})=0.\notag\\
\Longrightarrow &&\mathbf{X}=(\alpha+\frac{\beta}{\xi})\mathbf{I}-\frac{\gamma_2\omega_2\eta}{\Phi_\mathrm{EH}^*}\mathbf{G}
-\frac{\gamma_1\omega_1\theta}{\Phi_\mathrm{IR}^*\big(\theta\sigma_\mathrm{I}^2+\Tr(\mathbf{H}\overline{\mathbf{W}}_\mathrm{I})\big)}\mathbf{H}\label{eqn:appB1}\\
&&\mathbf{Y}=(\alpha+\frac{\beta}{\xi})\mathbf{I}-\frac{\gamma_2\omega_2\eta}{\Phi_\mathrm{EH}^*}\mathbf{G}\label{eqn:appB2}\\
&&\mathbf{X}\overline{\mathbf{W}}_\mathrm{I}=\mathbf{0},\label{eqn:appB3}\\
&&\mathbf{Y}\overline{\mathbf{W}}_\mathrm{E}=\mathbf{0}.\label{eqn:appB4}
\end{eqnarray}

Now, we investigate $\overline{\mathbf{W}}_\mathrm{I}$ and $\overline{\mathbf{W}}_\mathrm{E}$ in three cases.

\textbf{Case 1}: $\omega_1=\omega_2=0$, $\omega_3=1$. (\ref{eqn:appB1}) and (\ref{eqn:appB2}) become $\mathbf{X}=\mathbf{Y}=(\alpha+\frac{\beta}{\xi})\mathbf{I}$, which means that $\mathbf{X}$ and $\mathbf{Y}$ are matrices of full rank. Due to (\ref{eqn:appB3}) and (\ref{eqn:appB4}), $\overline{\mathbf{W}}_\mathrm{I}$ lies in the null space of $\mathbf{X}$, and $\overline{\mathbf{W}}_\mathrm{E}$ lies in the null space of $\mathbf{Y}$. Thus, $\overline{\mathbf{W}}_\mathrm{I}=\overline{\mathbf{W}}_\mathrm{E}=\mathbf{0}$.

\textbf{Case 2}: $\omega_1=0$, $\omega_2\neq0$. This means that IR-EE maximization is not considered and EH-EE maximization is required. Thus, $\gamma_2\neq0$ must hold to keep constraint $\mathrm{C6}$ for $j=2$. Then, (\ref{eqn:appB1}) and (\ref{eqn:appB2}) become $\mathbf{X}=\mathbf{Y}=(\alpha+\frac{\beta}{\xi})\mathbf{I}-\frac{\gamma_2\omega_2\eta}{\Phi_\mathrm{EH}^*}\mathbf{G}$. Since $\frac{\gamma_2\omega_2\eta}{\Phi_\mathrm{EH}^*}\mathbf{G}$ is of rank one, based on the basic property of the rank of matrices, $\Rank(\mathbf{X})\ge N_\mathrm{T}-1$ and $\Rank(\mathbf{Y})\ge N_\mathrm{T}-1$. If $\Rank(\mathbf{X})=\Rank(\mathbf{Y})=N_\mathrm{T}$, then $\overline{\mathbf{W}}_\mathrm{I}=\overline{\mathbf{W}}_\mathrm{E}=\mathbf{0}$. which can not occur since EH-EE maximization is required. As a result, we can conclude $\Rank(\mathbf{X})=\Rank(\mathbf{Y})=N_\mathrm{T}-1$, $\Rank(\overline{\mathbf{W}}_\mathrm{I})=1$ and $\Rank(\overline{\mathbf{W}}_\mathrm{E})=1$.

In particular, we can see from the problem formulation that $\overline{\mathbf{W}}_\mathrm{I}$ and $\overline{\mathbf{W}}_\mathrm{E}$ are equivalent optimization variables in this case. Thus, the optimization variables can be redefined as $\overline{\mathbf{W}}_\mathrm{I}'=\overline{\mathbf{W}}_\mathrm{I}+\overline{\mathbf{W}}_\mathrm{E}$ and $\overline{\mathbf{W}}_\mathrm{E}'=\mathbf{0}$. Then, the Lagrangian multiplier $\mathbf{X}'$ with respect to $\overline{\mathbf{W}}_\mathrm{I}'$ results in $\mathbf{X}'=(\alpha+\frac{\beta}{\xi})\mathbf{I}-\frac{\gamma_2\omega_2\eta}{\Phi_\mathrm{EH}^*}\mathbf{G}$. As aforementioned, $\Rank(\mathbf{X}')=N_\mathrm{T}-1$ and $\Rank(\overline{\mathbf{W}}_\mathrm{I}')=1$ must hold.

\textbf{Case 3}: $\omega_1\neq0$, $\omega_2\neq0$. IR-EE maximization and EH-EE maximization are both considered, thus constraint $\mathrm{C6}$ for $j=1,2$ is active, i.e., $\gamma_1\neq0$ and $\gamma_2\neq0$. Since the Lagrangian multiplier $\mathbf{Y}\succeq\mathbf{0}$, we have  $\alpha+\frac{\beta}{\xi}\ge\frac{\gamma_2\omega_2\eta}{\Phi_\mathrm{EH}^*}g_1$ according to equation (\ref{eqn:appB2}), where $g_1$ is the largest eigenvalue of $\mathbf{G}$ \cite{JR:multiuser_MISO}. If the equality holds, i.e., $\alpha+\frac{\beta}{\xi}=\frac{\gamma_2\omega_2\xi}{\Phi_\mathrm{EH}^*}g_1$, we consider $\mathbf{X}$ in (\ref{eqn:appB1}). Since $\theta>0$ is proved in Appendix \ref{app:Prop_equivalency}, to satisfy $\mathbf{X}\succeq\mathbf{0}$, $\gamma_1=0$ must hold. Contradiction occurs. This implies $\alpha+\frac{\beta}{\xi}>\frac{\gamma_2\omega_2\eta}{\Phi_\mathrm{EH}^*}g_1$. Thus, $\mathbf{Y}\succ\mathbf{0}$, which means $\mathbf{Y}$ is a full-rank matrix. Therefore, $\overline{\mathbf{W}}_\mathrm{E}=\mathbf{0}$. On the other hand, according to equation (\ref{eqn:appB1}), $\mathbf{X}=\mathbf{Y}-\frac{\gamma_1\omega_1\theta}{\Phi_\mathrm{IR}^*\big(\theta\sigma_\mathrm{I}^2+\Tr(\mathbf{H}\overline{\mathbf{W}}_\mathrm{I})\big)}\mathbf{H}$. Since $\Rank\Bigg(\frac{\gamma_1\omega_1\theta}{\Phi_\mathrm{IR}^*\big(\theta\sigma_\mathrm{I}^2+\Tr(\mathbf{H}\overline{\mathbf{W}}_\mathrm{I})\big)}\mathbf{H}\Bigg)=1$, we have $\Rank(\mathbf{X})\ge N_\mathrm{T}-1$. As aforementioned, $\Rank(\mathbf{X})=N_\mathrm{T}-1$ and  $\Rank(\overline{\mathbf{W}}_\mathrm{I})=1$ must hold.

Consequently, in all cases, the optimal solution of the relaxed Problem \ref{prob:multiobj_WIPTsepuser_relaxed} satisfies $\Rank(\overline{\mathbf{W}}_\mathrm{I}^*)=1$ and $\Rank(\overline{\mathbf{W}}_\mathrm{E}^*)\leq1$. In particular, an optimal solution with $\Rank(\overline{\mathbf{W}}_\mathrm{I}^*)=1$ and $\overline{\mathbf{W}}_\mathrm{E}^*=\mathbf{0}$ can always be constructed.

\section{Proof of Proposition \ref{prop:relaxed_c2}} \label{app:Prop_relaxed_c2}
We start the proof by  expressing  constraint C2 as
\begin{eqnarray}\label{eqn:det_ineq}
\det(\mathbf{I}_{N_{\mathrm{R}}}+\mathbf{Q}_{m}^{-1}\mathbf{G}_m^H\mathbf{W}_k\mathbf{G}_m)&\le&\psi_{m,k}\\
\stackrel{(a)}{\Longleftrightarrow }\det(\mathbf{I}_{N_{\mathrm{R}}}+\mathbf{Q}_{m}^{-1/2}\mathbf{G}_m^H\mathbf{W}_k\mathbf{G}_m\mathbf{Q}_{m}^{-1/2})&\le&\psi_{m,k},\label{eqn:det_ineq2}
\end{eqnarray}
where $(a)$ is due to the fact that $\det(\mathbf{I}+\mathbf{AB})=\det(\mathbf{I}+\mathbf{BA})$ holds for any matrices $\mathbf{A}$ and $\mathbf{B}$. Then, we introduce the following lemma which provides a lower bound on the left hand side of (\ref{eqn:det_ineq2}).
\begin{Lem}\label{lemma:det_trace} For any square matrix $\mathbf{A}\succeq \mathbf{0}$, we have $\det(\mathbf{I}+\mathbf{A})\ge 1+\Tr(\mathbf{A})$ \cite{JR:AN_MISO_secrecy}, where the equality holds if and only if $\Rank(\mathbf{A})\le 1$.
\end{Lem}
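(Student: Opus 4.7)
The plan is to reduce the matrix inequality to a statement about eigenvalues and then verify it by direct expansion. Since $\mathbf{A}\succeq\mathbf{0}$, by the spectral theorem $\mathbf{A}$ is unitarily diagonalizable with nonnegative eigenvalues $\lambda_1,\dots,\lambda_n\ge 0$, where $n$ is the size of $\mathbf{A}$. Using the unitary invariance of the determinant and trace, I would rewrite both sides as
\begin{equation}
\det(\mathbf{I}+\mathbf{A}) \;=\; \prod_{i=1}^{n}(1+\lambda_i), \qquad 1+\Tr(\mathbf{A}) \;=\; 1+\sum_{i=1}^{n}\lambda_i.
\end{equation}

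Next I would expand the product using elementary symmetric polynomials:
\begin{equation}
\prod_{i=1}^{n}(1+\lambda_i) \;=\; 1 \,+\, \sum_{i=1}^{n}\lambda_i \,+\, \sum_{1\le i<j\le n}\lambda_i\lambda_j \,+\, \cdots \,+\, \prod_{i=1}^{n}\lambda_i.
\end{equation}
Because every $\lambda_i\ge 0$, all elementary symmetric polynomials of order two and higher are nonnegative. Dropping those terms yields exactly $1+\sum_i\lambda_i$, which gives the claimed inequality $\det(\mathbf{I}+\mathbf{A})\ge 1+\Tr(\mathbf{A})$.

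For the equality characterization, equality forces every higher-order symmetric polynomial to vanish; in particular $\sum_{i<j}\lambda_i\lambda_j=0$, and since the $\lambda_i$ are nonnegative this is possible only if at most one $\lambda_i$ is strictly positive. This is equivalent to $\Rank(\mathbf{A})\le 1$. Conversely, if $\Rank(\mathbf{A})\le 1$, then at most one eigenvalue is nonzero, all higher-order symmetric polynomial terms vanish, and the product collapses to $1+\lambda_1=1+\Tr(\mathbf{A})$, giving equality.

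I do not anticipate any serious obstacle: the entire argument is a one-line diagonalization followed by a nonnegativity observation on the expansion of a product. The only subtle point is to state clearly the biconditional for equality, which follows immediately once the role of the second elementary symmetric polynomial is isolated. An alternative route, if one preferred to avoid eigenvalues, would be induction on $n$ using the block-determinant identity, but the spectral argument is the cleanest.
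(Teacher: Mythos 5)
Your proof is correct and complete. Note that the paper itself does not prove this lemma at all: it simply states it with a citation to \cite{JR:AN_MISO_secrecy}, so there is no in-paper argument to compare against. Your route --- diagonalize $\mathbf{A}\succeq\mathbf{0}$ by the spectral theorem, write $\det(\mathbf{I}+\mathbf{A})=\prod_i(1+\lambda_i)$ and $\Tr(\mathbf{A})=\sum_i\lambda_i$, expand the product in elementary symmetric polynomials, and observe that all terms of order two and higher are nonnegative --- is the standard proof of this fact, and your equality analysis is airtight: equality forces the second elementary symmetric polynomial $\sum_{i<j}\lambda_i\lambda_j$ to vanish, which for nonnegative eigenvalues means at most one is positive, i.e.\ $\Rank(\mathbf{A})\le 1$; the converse is immediate since a single nonzero eigenvalue kills every higher-order term. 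You have in effect supplied the missing justification that the paper outsources to its reference.
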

Exploiting Lemma \ref{lemma:det_trace}, the left hand side of (\ref{eqn:det_ineq2}) is bounded below by
\begin{eqnarray}
\det(\mathbf{I}_{N_{\mathrm{R}}}+\mathbf{Q}_{m}^{-1/2}\mathbf{G}_m^H\mathbf{W}_k\mathbf{G}_m\mathbf{Q}_{m}^{-1/2})\,\ge\, 1+\Tr(\mathbf{Q}_{m}^{-1/2}\mathbf{G}_m^H\mathbf{W}_k\mathbf{G}_m\mathbf{Q}_{m}^{-1/2}).\label{eqn:trace_ineq3}
\end{eqnarray}
Subsequently, by combining equations (\ref{eqn:det_ineq}), (\ref{eqn:det_ineq2}),  and (\ref{eqn:trace_ineq3}), we have the following implications:
\begin{subequations}
\begin{eqnarray}
&&\mathrm{(\ref{eqn:det_ineq})} \Longleftrightarrow \mathrm{(\ref{eqn:det_ineq2})}\\
&\Longrightarrow& \Tr(\mathbf{Q}_{m}^{-1/2}\mathbf{G}_m^H\mathbf{W}_k\mathbf{G}_m\mathbf{Q}_{m}^{-1/2})\le \psi_{m,k} -1\\
&\stackrel{(b)}{\Longrightarrow}&\lambda_{\max}(\mathbf{Q}_{m}^{-1/2}\mathbf{G}_m^H\mathbf{W}_k\mathbf{G}_m\mathbf{Q}_{m}^{-1/2})\le \psi_{m,k} -1\\
&\Longleftrightarrow&\mathbf{Q}_{m}^{-1/2}\mathbf{G}_m^H\mathbf{W}_k\mathbf{G}_m\mathbf{Q}_{m}^{-1/2} \preceq (\psi_{m,k} -1)\mathbf{I}_{N_{\mathrm{R}}}\\
&\Longleftrightarrow & \mathbf{G}_m^H\mathbf{W}_k\mathbf{G}_m\preceq (\psi_{m,k} -1)\mathbf{Q}_{m}. \label{eqn:trace_final}
\end{eqnarray}
\end{subequations}
$\lambda_{\mathrm{max}}(\mathbf{A})$ denotes the maximum eigenvalue of matrix $\mathbf{A}$. $(b)$ is due to  $\Tr(\mathbf{A})\ge \lambda_{\mathrm{max}}(\mathbf{A})$ for a positive semidefinite matrix $\mathbf{A}\succeq \mathbf{0}$. We note that  equations (\ref{eqn:det_ineq}) and (\ref{eqn:trace_final}) are equivalent when $\Rank(\mathbf{W}_k)= 1,\forall k$.

\section{Proof of Theorem \ref{thm:rankone_condition}} \label{app:thm_rankone_condition}
We follow a similar approach as in \cite{JR:secure_WIPT_MISO_ruizh,JR:Kwan_secure_imperfect} to prove Theorem 1. The proof is divided into two parts. In the first part, we study the solution structure of Problem \ref{prob:sdp_relaxation}. Then in the second part, we propose a simple method for constructing an optimal solution with rank-one $\mathbf{W}_k$. In order to verify the tightness of the adopted SDP relaxation,  we analyze the Karush-Kuhn-Tucker (KKT) conditions of the SDP relaxed Problem \ref{prob:sdp_relaxation} by introducing the corresponding Lagrangian and the dual problem. The
Lagrangian of Problem \ref{prob:sdp_relaxation} is given by
\begin{eqnarray}
&&{\cal L}\big(\mathbf{W}_k,\mathbf{V},\rho_k,\mathbf{Z}_k,\mathbf{Y},\mathbf{X}_{m,k},\alpha_k,\beta_k,\nu_m\big)\\
&=&\sum_{k=1}^K\Tr(\mathbf{W}_k)+\Tr(\mathbf{V})-\Tr(\mathbf{Y}\mathbf{V})-\sum_{k=1}^K\Tr(\mathbf{Z}_k\mathbf{W}_k)\notag\\
&+&\sum_{k=1}^K\alpha_k\big[-\frac{1}{\Gamma^\mathrm{req}_k}\Tr(\mathbf{h}_k\mathbf{h}_k^H\mathbf{W}_k)+\sum\limits_
{\substack{j\neq k}}^K\Tr(\mathbf{h}_k\mathbf{h}_k^H\mathbf{W}_j)+\Tr(\mathbf{h}_k\mathbf{h}_k^H\mathbf{V})+\sigma_{\mathrm{ant}}^2+\frac{1}{\rho_k}\sigma_{\mathrm{s}}^2\big]\notag\\
&+&\sum_{k=1}^K\beta_k\Big[\frac{P^{\mathrm{req1}}_{k}}{\eta(1-\rho_k)}-\sigma_{\mathrm{ant}}^2-\Tr\big(\mathbf{h}_k\mathbf{h}_k^H(\mathbf{V}+\sum_{j=1}^K\mathbf{W}_j)\big)\Big]\notag\\
&+&\sum_{m=1}^M\nu_m\Big[\frac{P^\mathrm{req2}_m}{\eta}-N_{\mathrm{R}}\sigma_{\mathrm{ant}}^2-\Tr\big(\mathbf{G}_m\mathbf{G}_m^H(\mathbf{V}+\sum_{k=1}^K\mathbf{W}_k)\big)\Big]\notag\\
&+&\sum_{m=1}^M\sum_{k=1}^K\Tr\big[\mathbf{X}_{m,k}\big(\mathbf{G}_m^H\mathbf{W}_k\mathbf{G}_m-(\psi_{m,k}-1)\mathbf{Q}_{m}\big)\big]\notag,
\end{eqnarray}
where $\mathbf{X}_{m,k}$, $\mathbf{Y}$, and $\mathbf{Z}_k$ are the dual variable matrices for constraints $\overline{\mathrm{C2}}$ , C6, and C7, respectively. $\alpha_k$, $\beta_k$, and $\nu_m$ are the scalar dual variables of  constraints C1, C3, and C4, respectively.
On the other hand, constraint C5 for $\rho_k$
is  satisfied automatically and the optimal $\rho_k$ will be  illustrated in the later part of this proof. Then, the dual problem of the SDP relaxed Problem \ref{prob:sdp_relaxation} is given by
\begin{equation}\hspace*{-0mm}\label{eqn:dual}
\underset{\underset{\mathbf{Z}_k,\mathbf{Y},\mathbf{X}_{m,k}\succeq \mathbf{0}}{\alpha_k,\beta_k,\nu_m\ge0}}{\maxo}\quad \underset{\rho_k,\mathbf{W}_k,\mathbf{V}\in\mathbb{H}^{N_{\mathrm{T}}}}{\mino} \quad {\cal L}\big(\mathbf{W}_k,\mathbf{V},\rho_k,\mathbf{Z}_k,\mathbf{Y},\mathbf{X}_{m,k},\alpha_k,\beta_k,\nu_m\big).
\end{equation}
Since Problem \ref{prob:sdp_relaxation} satisfies Slater's constraint qualification and is jointly convex with respect to optimization variables, strong duality holds. Thus solving \ref{eqn:dual} is equivalent to solving Problem \ref{prob:sdp_relaxation}. We define $\{\mathbf{W}_k^*,\mathbf{V}^*,\rho_k^*\}$ and $\{\mathbf{Z}_k^*,\mathbf{Y}^*,\mathbf{X}_{m,k}^*,\nu_m^*,\beta_k^*,\alpha_k^*\}$ as the optimal primal solution and the optimal dual solution of Problem \ref{prob:sdp_relaxation} with $\mathbf{Z}_k^*,\mathbf{X}_{m,k}^*\succeq\mathbf{0},\,\,\alpha_k^*,\,\beta_k^*,\nu_m^*\ge 0$. Now, we focus on those KKT conditions which are useful in the proof:
\begin{eqnarray}
&&\mathbf{Z}_k^*\mathbf{W}_k^*=\mathbf{0},\label{eqn:complementary_cond}\\
&&\mathbf{Z}_k^*=\mathbf{U}_{k}-(\beta_k^*+\frac{\alpha_k^*}{\Gamma^\mathrm{req}_k})\mathbf{h}_k\mathbf{h}_k^H,\label{eqn:lagrangian_gradient}\\
\mathrm{where}&&\mathbf{U}_k=\mathbf{I}_{N_{\mathrm{T}}}+\sum_{m=1}^M\mathbf{G}_m(\mathbf{X}_{m,k}^*-\nu_m^*\mathbf{I}_{N_{\mathrm{T}}})\mathbf{G}_m^H+\sum_{j\neq k}^K(\alpha_j^*-\beta_j^*)\mathbf{h}_j\mathbf{h}_j^H,\\
&&\rho^*_k=\frac{\sqrt{\alpha_k^*\sigma_{\mathrm{s}}^2\eta}}{\sqrt{\alpha_k^*\sigma_{\mathrm{s}}^2\eta}+\sqrt{\beta_k^*P^{\mathrm{req1}}_{k}}}.\label{eqn:optimal_rho}
\end{eqnarray}
It can be observed from (\ref{eqn:optimal_rho}) that constraint $\mathrm{C5}$ is  automatically satisfied. Besides,  $\alpha^*_k,\beta^*_k>0$  must holds for $\Gamma^\mathrm{req}_k>0$ and $P_{k}^\mathrm{req1}>0$. On the other hand, because of the complementary slackness condition on $\mathbf{W}^*_k$ in (\ref{eqn:complementary_cond}), $\mathbf{W}_k^*$ lies in the null space of $\mathbf{Z}_k^*$ for $\mathbf{W}_k^*\ne \mathbf{0}$.  In other words, the structure of  $\mathbf{W}^*_k$ depends on the space spanned by $\mathbf{Z}_k^*$. Thus, we focus on the following two cases to reveal the space spanned by $\mathbf{Z}_k^*$. Without loss of generality, we denote $r_k=\Rank(\mathbf{U}_k)$. In the first case, we investigate  the structure of  $\mathbf{W}^*_k$ when $\mathbf{U}_k$ is a full-rank matrix, i.e., $r_k=N_{\mathrm{T}}$. By exploiting (\ref{eqn:lagrangian_gradient}) and a basic  inequality for the rank of matrices, we have
\begin{eqnarray}\label{eqn:rank_inequality}
&&\Rank(\mathbf{Z}_k^*)+\Rank((\frac{\alpha_k^*}{\Gamma^\mathrm{req}_k}+\beta_k^*)\mathbf{h}_k\mathbf{h}_k^H)\ge\Rank(\mathbf{U}_k)\notag\\
\Longleftrightarrow &&\Rank(\mathbf{Z}_k^*)\ge N_{\mathrm{T}}-1\quad  \mathrm{for} \quad\alpha^*_k,\beta_k^*>0.
\end{eqnarray}
For $\Gamma^\mathrm{req}_k>0$ and   $\Rank(\mathbf{U}_k)=N_{\mathrm{T}}$, $\Rank(\mathbf{W}_k^*)=1$ and  $\Rank(\mathbf{Z}_k^*)=N_{\mathrm{T}}-1$ must hold simultaneously. Next, we consider the case when $\Rank(\mathbf{U}_k)$ is rank-deficient, i.e., $ r_k<N_{\mathrm{T}}$. Without loss of generality, we define $\nullspace(\mathbf{U}_k)=\mathbf{N}_k$, $\mathbf{N}_k\in\mathbb{C}^{N_{\mathrm{T}}\times (N_{\mathrm{T}}-r_k)}$ such that $\mathbf{U}_k\mathbf{N}_k=\mathbf{0}$ and $\Rank(\mathbf{N}_k)=N_{\mathrm{T}}-r_k$. Let $\boldsymbol{\varrho}_{t_k}\in\mathbb{C}^{N_{\mathrm{T}}\times 1}$, $1\le t_k\le N_{\mathrm{T}}-r_k$, denote the $t_k$-th column vector of $\mathbf{N}_k$.  Then, by exploiting (\ref{eqn:lagrangian_gradient}),  we have the following equality:
\begin{eqnarray}\label{eqn:pre-post}
\boldsymbol{\varrho}_{t_k}^H\mathbf{Z}_k^*\boldsymbol{\varrho}_{t_k}=- (\frac{\alpha^*_k}{\Gamma^\mathrm{req}_k}+\beta^*_k)\boldsymbol{\varrho}_{t_k}^H\mathbf{h}_k\mathbf{h}_k^H\boldsymbol{\varrho}_{t_k}.
\end{eqnarray}
Combining  $\mathbf{Z}^*_k\succeq \mathbf{0}$ and $\frac{\alpha^*_k}{\Gamma^\mathrm{req}_k}+\beta^*_k>0$,   $(\frac{\alpha^*_k}{\Gamma^\mathrm{req}_k}+\beta^*_k)\boldsymbol{\varrho}_{t_k}^H\mathbf{h}_k\mathbf{h}_k^H\boldsymbol{\varrho}_{t_k}={0},\forall t_k\in\{1,\ldots,N_{\mathrm{T}}-r_k\}$, holds in (\ref{eqn:pre-post}). In other words,
\begin{eqnarray}\label{eqn:rank-nullspace}
\mathbf{Z}^*_k\mathbf{N}_k=\mathbf{0} \quad \mathrm{ and }\quad \mathbf{h}_k\mathbf{h}_k^H\mathbf{N}_k=\mathbf{0}
\end{eqnarray}
hold and $\mathbf{N}_k$  lies in the null spaces of $\mathbf{h}_k\mathbf{h}_k^H$ and $ \mathbf{Z}^*_k$ simultaneously. Furthermore, $\Rank\big(\nullspace(\mathbf{Z}^*) \big)\ge N_{\mathrm{T}}-r_k$ holds for satisfying  $\mathbf{Z}_k^*\mathbf{N}_k=\mathbf{0}$. On the other hand, from (\ref{eqn:rank_inequality}) and $\Rank(\mathbf{U}_k)=r_k$, we obtain
\begin{eqnarray}\label{eqn:temp}
 \Rank(\mathbf{Z}_k^*)\ge r_k-1.
\end{eqnarray} Then, by utilizing  (\ref{eqn:rank-nullspace})  and (\ref{eqn:temp}),  $\Rank\big(\nullspace(\mathbf{Z}^*_k)\big)$ is bounded between
\begin{eqnarray}
&&N_{\mathrm{T}}- r_k+1   \ge  \Rank\big(\nullspace(\mathbf{Z}^*_k) \big)\ge N_{\mathrm{T}}-r_k.
\end{eqnarray}
As a result, either  $\Rank\big(\nullspace(\mathbf{Z}^*_k) \big)=N_{\mathrm{T}}-r_k$ or $\Rank\big(\nullspace(\mathbf{Z}^*_k)\big)=N_{\mathrm{T}}-r_k+1 $ holds for the optimal solution. Suppose $\Rank\big(\nullspace(\mathbf{Z}^*_k) \big)=N_{\mathrm{T}}-r_k$ and thus  $\nullspace(\mathbf{Z}^*_k)= \mathbf{N}_k$.  Then, we can express $\mathbf{W}^*_k$ as $\mathbf{W}^*_k=\sum_{t_k=1}^{N_{\mathrm{T}}-r_k} \gamma_{t_k}\boldsymbol{\varrho}_{t_k} \boldsymbol{\varrho}_{t_k}^H$  for some positive constants $\gamma_{t_k}\ge 0,\forall t_k\in\{1,\ldots,N_{\mathrm{T}}-r_k\} $. Yet,  due to (\ref{eqn:rank-nullspace}),
\begin{eqnarray}
\Tr\big(\mathbf{h}_k\mathbf{h}^H_k\mathbf{W}^*_k\big)=\sum_{t_k=1}^{N_{\mathrm{T}}-r_k}\gamma_{t_k}
\Tr\big(\boldsymbol{\varrho}_{t_k}^H\mathbf{h}_k\mathbf{h}^H_k\boldsymbol{\varrho}_{t_k} \big)=0
\end{eqnarray}
holds which cannot satisfy constraint C1 for $\Gamma^\mathrm{req}_k>0$. Thus, $\Rank\big(\nullspace(\mathbf{Z}^*_k)\big)=N_{\mathrm{T}}-r_k+1$ has to hold for the optimal  $\mathbf{W}^*_k$. Besides,  there exists one subspace spanned by a unit norm vector  $\mathbf{u}_k\in \mathbb{C}^{N_{\mathrm{T}}\times 1}$  such that $\mathbf{Z}^*_k\mathbf{u}_k=\mathbf{0}$ and $\mathbf{N}_k^H\mathbf{u}_k=\mathbf{0}$. Therefore,  the orthonormal  null space of $\mathbf{Z}^*_k$ can be presented as
\begin{eqnarray}
\nullspace(\mathbf{Z}^*_k)=\Big\{\mathbf{N}_k\cup\mathbf{u}_k\Big\}.
\end{eqnarray}
In summary, without loss of generality, we can express the optimal solution of  $\mathbf{W}^*_k$  as
\begin{eqnarray}\label{eqn:general_structure}
\mathbf{W}^*_k=\sum_{t_k=1}^{N_{\mathrm{T}}-r_k} \gamma_{t_k}\boldsymbol{\varrho}_{t_k} \boldsymbol{\varrho}_{t_k}^H  + f_k\mathbf{u}_k\mathbf{u}_k^H,
\end{eqnarray}
where  $f_k>0 $ is some positive scaling constant.In the second part of the proof, for $\Rank(\mathbf{W}^*_k)>1$,  we reconstruct another solution of the Problem \ref{prob:sdp_relaxation},  $\{\mathbf{\widetilde  W}_k ,\mathbf{\widetilde  V},\widetilde \rho_k\}$, based on  (\ref{eqn:general_structure}).Let the constructed solution set be given by
\begin{eqnarray}\label{eqn:rank-one-structure}\mathbf{\widetilde W}_k&=&f_k\mathbf{u}_k\mathbf{u}_k^H=\mathbf{W}^*_k-\sum_{t_k=1}^{N_{\mathrm{T}}-r_k} \gamma_{t_k} \boldsymbol{\varrho}_{t_k} \boldsymbol{\varrho}_{t_k} ^H, \\
 \mathbf{\widetilde V}&=&\mathbf{ V^*}+\sum_{t_k=1}^{N_{\mathrm{T}}-r_k} \gamma_{t_k} \boldsymbol{\varrho}_{t_k} \boldsymbol{\varrho}_{t_k}^H, \quad\widetilde \rho_k=\rho^*_k \label{eqn:rank-one-structure2}.
\end{eqnarray}
It can be easily verified  that   $\{\mathbf{\widetilde W}_k,\mathbf{\widetilde V},  \widetilde \rho_k\}$ not only satisfies the constraints in Problem \ref{prob:sdp_relaxation}, but also achieves the same optimal objective value as $\{\mathbf{ W}_k,\mathbf{ V},  \rho_k\}$ with $\Rank(\mathbf{\widetilde W}_k)=1,\forall k$. The actual values of  $\{\mathbf{\widetilde W}_k,\mathbf{\widetilde V},  \widetilde \rho_k\}$ can be obtained by substituting  (\ref{eqn:rank-one-structure}) and (\ref{eqn:rank-one-structure2}) into Problem \ref{prob:sdp_relaxation} and solving the resulting  convex optimization problem for  $f_k$ and $\gamma_{t_k}$.

\end{appendices}

%

\end{document}